\documentclass[11pt]{article}
\usepackage[T1]{fontenc}
\usepackage{lmodern}
\usepackage[margin=1in]{geometry}
\usepackage{amsmath,amssymb,amsthm,mathtools}
\usepackage{booktabs,array,microtype}
\usepackage{tikz}
\usetikzlibrary{arrows.meta}
\usepackage[colorlinks=true,linkcolor=blue,citecolor=blue,urlcolor=blue]{hyperref}
\newcommand*{\paperauthor}{Zhe Zhang}
\newcommand*{\paperaffiliation}{}
\newcommand*{\paperemail}{zz940521@gmail.com}

\hypersetup{pdftitle={Two-colour crossing minimization in complete geometric graphs is NP-hard},pdfauthor={\paperauthor},pdfsubject={Complexity of two-colour crossing minimization},pdfkeywords={geometric graph, monochromatic crossings, MaxCut, NP-completeness}}
\newtheorem{theorem}{Theorem}[section]
\newtheorem{lemma}[theorem]{Lemma}
\newtheorem{proposition}[theorem]{Proposition}
\newtheorem{corollary}[theorem]{Corollary}
\theoremstyle{definition}
\newtheorem{definition}[theorem]{Definition}
\theoremstyle{remark}

\numberwithin{equation}{section}

\newcommand{\Q}{\mathbb Q}
\newcommand{\Z}{\mathbb Z}

\newcommand{\1}{\mathbf 1}
\newcommand{\eps}{\varepsilon}
\DeclareMathOperator{\OPT}{OPT}
\DeclareMathOperator{\MaxCut}{MaxCut}
\DeclareMathOperator{\sgn}{sgn}
\DeclareMathOperator{\orient}{or}

\DeclarePairedDelimiter{\floor}{\lfloor}{\rfloor}
\DeclarePairedDelimiter{\ceil}{\lceil}{\rceil}

\title{Two-colour crossing minimization\protect\\ in complete geometric graphs is NP-hard}
\author{\paperauthor
\ifx\paperaffiliation\empty\else\\{\normalsize\paperaffiliation}\fi
\ifx\paperemail\empty\else\\{\normalsize\href{mailto:\paperemail}{\texttt{\paperemail}}}\fi}
\date{26 October 2026}
\begin{document}
\maketitle
\begin{abstract}
Given a fixed straight-line drawing of a complete graph, we seek a
two-colouring of its edges that minimizes monochromatic crossing pairs.
We prove that the threshold decision problem is NP-complete, even for
points in general position with integer coordinates. Thus exact
minimization is NP-hard, resolving the complexity question posed at the
Crossing Number Workshop in 2019. We reduce from MaxCut on permutation
graphs by representing source vertices with classes of geometric edges,
whose crossings encode source adjacencies. The complete target also
contains every other edge between its points, and their freely chosen
colours can affect the encoded cost. To control these interactions, we
construct a large comparison grid with two stable colour phases and use
program points to turn this reference into colour preferences on other
edges. The optimum differs from a computable baseline plus the
amplified minimum number of uncut source edges by an error small enough to distinguish consecutive
source values. The reduction is deterministic, with polynomially many points and
polynomial coordinate bit length.
\end{abstract}
\noindent\textbf{Keywords.} Complete geometric graph; monochromatic crossings; permutation graph; MaxCut; quantitative stability.
\par\smallskip
\noindent\small\copyright\ 2026 \paperauthor. Licensed under \href{https://creativecommons.org/licenses/by/4.0/}{CC BY 4.0}.\normalsize
\section{Introduction}\label{sec:introduction}

A two-colouring of the edges of a drawing separates its crossings into
monochromatic and bichromatic pairs. Counting only monochromatic pairs
models drawing the two colour classes on separate layers with the same
vertex positions. One may optimize both the drawing and its colours, or
keep the drawing fixed and optimize only the colours. We consider the
second problem for complete geometric graphs.

The complexity of finding an optimal two-colouring of a fixed straight-line
drawing of $K_N$ was posed as Question~4 at the Crossing Number Workshop
in 2019 \cite[Section~1.4]{CRNW2019} and as the first open problem in
Section~5 of Aichholzer et al.\ \cite{Aichholzer2019}. It was subsequently
listed in Dagstuhl Seminar~21293 \cite[Section~4.12]{Dagstuhl2021} and at
Geometric Graphs Week~2022 \cite[Problem~11]{GGWeek2022}.
We prove that the threshold decision version is NP-complete, and hence
that exact minimization is NP-hard. All edges may receive either colour;
the input has no edge weights or prescribed colours.

\begin{theorem}\label{thm:main}
The decision problem in Definition~\ref{def:problem} is NP-complete under
deterministic polynomial-time many-one reductions. Consequently, computing
$\OPT(\mathcal P)$ exactly is NP-hard. The decision problem remains NP-complete
when all coordinates are integers encoded in binary.
\end{theorem}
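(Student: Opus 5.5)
Membership in NP is immediate: a certificate is a two-colouring of the $\binom{N}{2}$ edges, and for rational coordinates we count monochromatic crossing pairs in polynomial time. Each crossing test is a constant number of orientation determinants. We then compare the count with the threshold. The substance of the proof is hardness. The plan is a deterministic polynomial-time reduction from MaxCut on permutation graphs, which we take to be NP-hard. It follows the outline in the abstract.

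\textbf{Encoding the source graph.} Given a permutation graph $G=(V,E)$ with permutation $\pi$ and a target $k$, we realise each vertex $v$ as a class $C_v$ of $M$ nearly parallel segments between two convex chains. Segments of $C_u$ and $C_v$ cross iff $uv\in E$, exactly as in the standard segment representation of a permutation graph. Within a class, the segments are pairwise noncrossing "parallel" copies.

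If every class were monochromatic, the monochromatic crossings among class edges would be $M^2$ times the number of uncut source edges. We must show that optimal colourings are essentially constant on each class. The plan is to attach to every class edge a preference, of weight much larger than the cross-class interaction, for agreeing with the rest of its class. Both colours must stay available, so this preference must not force a fixed colour.

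\textbf{Controlling the rest of $K_N$.} The difficulty is that every other pair of points is also an edge with a free colour. We add a large comparison grid: a point set whose induced edges have exactly two optimal colour phases, related by the global swap. We prove a quantitative stability lemma for it: any colouring within $\Delta$ of the grid optimum differs from one of the two phases on at most $O(\Delta)$ edges. Then we add program points, placed so that each edge incident to them crosses a prescribed, counted set of grid edges. Relative to the phase, this gives each such edge a colour preference with computable weight. These preferences are used two ways. They pin the junk edges, meaning all edges not in a class or the grid, to a determined colour up to bounded error. They also couple each class's edges to one another, through pinned connector edges, strongly enough that the class is monochromatic in any near-optimal colouring.

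\textbf{Baseline and gap.} We then write
\[
  \OPT(\mathcal P)=B+M^2\,\mathrm{minuncut}(G)+R,
\]
where $B$ is an explicit polynomial-time computable count coming from the grid, the pinned junk edges, and the fixed interactions. The goal is to show $|R|<M^2/2$. Then $G$ has a cut of size at least $k$ iff $\OPT\le B+M^2(|E|-k)+M^2/2$, which gives the threshold.

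The bound on $R$ is the main obstacle. Deviating edges can save crossings against many others, so we need a charging argument. Each unit deviation from the pinned or phase colouring costs at least $\alpha$ through its grid preference and gains at most $\beta\ll\alpha$ from all other edges. This should be checked by bounding the number of crossings between the grid and non-grid edges versus the crossings among non-grid edges. After that, we choose the scales in the order $M\ll$ preference weight $\ll$ grid size, all polynomial in $|V|$.

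\textbf{Integrality.} Finally, we perturb all points to general position on a polynomially fine integer lattice, preserving every crossing and noncrossing. This is possible because all used orientations have determinants bounded away from zero by an inverse polynomial. The coordinates then have polynomial bit length, which gives the integer-coordinate claim.
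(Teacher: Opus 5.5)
Your NP-membership argument, the overall outline, the threshold arithmetic and the rounding to integer coordinates are fine. The gaps are in the two steps that carry the hardness.

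\textbf{Keeping classes monochromatic.} The mechanism you propose cannot work. A crossing of $e$ with an edge $f$ of fixed colour $c$ contributes $\1[\chi(e)=c]$, which is a unary term in $\chi(e)$. Pinned connector edges can therefore only push class edges towards a definite colour. They can never reward two free class edges for agreeing. A bias strong enough to force monochromatic classes would fix the class colours and destroy the cut encoding, which is exactly what you said must be avoided. A genuine agreement coupling would need free intermediate edges, and those are new uncontrolled variables.

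The paper needs no such coupling. After setting aside the constraints internal to each bundle, all selected edges of one source class have identical crossing neighbourhoods among the other classes. By the blowup argument of Lemma~\ref{lem:source-amplification}, every selected colouring, class-uniform or not, then pays at least $\Delta g(G)$, and a class-uniform colouring attains this. What your plan misses is the real obstruction inside a class. Completion edges between the class endpoints cross different members of a class differently, so members receive different unary biases. In the paper the palette field is affine in the clone ranks, with slopes of order $D_m$ (Lemma~\ref{lem:affine-palette}). These biases must be made small, not overpowered. Rank programs cancel the slopes (Lemma~\ref{lem:rank-correction}), leaving residuals $|r_e|<4B$. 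Their total, together with the restored internal constraints $2SW^4$, stays well below $\Delta/4$, and $\sum_e\min\{0,r_e\}$ goes into the computable baseline.

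\textbf{The uniform-margin charging argument.} This also fails where it is most needed. The comparison grid is not known to have exactly two optimal colourings. The complete graph on the grid's own points certainly does not, since its convex components have many optima. Grid vertices near the antidiagonal and in the two corners have canonical field $V\ell(u,v)$ close to zero. Hence neither ``each deviation costs at least $\alpha$'' nor a disagreement bound linear in the energy excess can hold there. The paper instead proves two things in Theorem~\ref{thm:grid}: disagreements of a minimizer lie in $D_{220a}\cup C_{220a}$, and the grid energy equals the single-flip linear floor $L_\sigma$ up to a cubic error $2^{32}a^3V^2$. Fans with weak canonical field are handled by a soft-probe count in Proposition~\ref{prop:freezing}. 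There only the fans of a free optimum are changed, while its grid is kept fixed.

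\textbf{Edges that cannot be pinned.} Some completion edges are not reached by any program. The frame--frame edges form an isolated convex $K_B$, and the dummy--dummy edges form a convex $K_{D_m}$ that no fan crosses. Their optimal contributions are of order $B^4$ and $D_m^4$, far larger than $\Delta$. A computable baseline therefore needs their exact values, which the paper takes from the two-page crossing number $Z(v)$ of \'{A}brego et al. Likewise, program--original and program--program edges are not controlled by the grid at all. The paper charges them by counting ($R_Y$), which forces the number of programs to be small relative to the grid. Without these ingredients your bound $|R|<M^2/2$ is unsupported.
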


The restriction to complete graphs is essential. Aichholzer, Mulzer,
Schnider and Vogtenhuber \cite{Aichholzer2018} proved NP-completeness of
MaxCut on segment intersection graphs. Completing their segment family
by adding every edge between its endpoints introduces new colour variables
and costs that may depend on the original colours. The six-point example
in Appendix~\ref{app:completion} demonstrates this dependence explicitly.
A reduction for arbitrary segment families therefore does not settle our
problem.

The contrast is already visible in convex position. Two-colour crossing
minimization for an arbitrary graph drawn on a convex point set is
NP-hard, by the fixed-order two-page result of Masuda et al.\
\cite{Masuda1990}. For a complete graph, however, the optimum is the known
two-page crossing number of $K_N$ \cite{Abrego2013}. Our reduction uses
this exact value for two convex background components.

Our source is MaxCut on permutation graphs. The NP-completeness proof of
de Figueiredo et al.\ \cite{Figueiredo2023} supplies a permutation
representation together with the graph. We encode its vertices by classes
of chords and amplify its objective. A comparison grid establishes a
colour reference, and program points convert that reference into costs
on the chords. We account for the remaining edges by a computable
baseline and an error smaller than one amplified source unit.
Section~\ref{sec:overview} explains
these roles before the technical proofs.

Aichholzer et al.\ \cite[Appendix~B]{Aichholzer2019}
constructed two-colourings of a double-chain bipartite drawing with
asymptotic crossing ratio at most $1/3$. We prove the corresponding
continuous minimum and two-phase stability, then obtain localization for
finite rectangles under arbitrary bounded unary perturbations. The
stability exponent is not optimized.

Other results concern extremal or approximate versions of the objective.
Bounds over all drawings and duplication constructions were developed in
\cite{Aichholzer2019}; extensions to dense graphs, random points and
several colours appear in \cite{FabilaMonroy2025,Cabello2024,Hahn2025}.
For a fixed complete geometric graph, the crossing graph is dense, and
the MaxCut scheme of Arora, Karger and Karpinski \cite{Arora1999} gives
additive approximation on the scale of $N^4$, as observed in
\cite{Hahn2025}. These results do not determine the exact optimum.

Lean proof code and reproducible exact computations are available in the
accompanying repository\footnote{\url{https://github.com/Myyura/two-colour-crossing-complete-geometric-graphs}}.
The programmatic derivations and verification were carried out with the
assistance of GPT 6 astra.

Section~\ref{sec:overview} outlines the construction. We then prove the grid
estimates in Section~\ref{sec:stability}, realize the crossing relations in
Section~\ref{sec:geometry}, and analyze the core in Section~\ref{sec:core}.
Section~\ref{sec:reduction} completes the proof of Theorem~\ref{thm:main}.

\subsection{Preliminaries}

\begin{definition}\label{def:problem}
The input to \emph{complete geometric two-colour crossing minimization}
is a list of pairwise distinct points $\mathcal P\in(\Q^2)^N$ with no three
collinear, together with an integer $k_{\rm tgt}$ in the decision version.
Coordinates and $k_{\rm tgt}$ are encoded in binary. Write
\[
 E(\mathcal P)=\{ij:0\leq i<j<N\},\qquad
 M(\mathcal P,\chi)=\sum_{\{e,f\}\subseteq E(\mathcal P)}
       \1[e\text{ and }f\text{ cross properly}]\,\1[\chi(e)=\chi(f)]
\]
for $\chi:E(\mathcal P)\to\{0,1\}$, and put
$\OPT(\mathcal P)=\min_\chi M(\mathcal P,\chi)$.
The decision question is whether $\OPT(\mathcal P)\leq k_{\rm tgt}$.
\end{definition}

A proper crossing is an intersection of two open segments with four
distinct endpoints. Several crossing pairs may have the same geometric
intersection point; each pair is counted separately. The colouring is
unrestricted: it is not a proper edge colouring in the graph-theoretic
sense. There are no weights or prescribed colours.

The crossing graph $\Gamma(\mathcal P)$ has one vertex for each geometric
edge and an adjacency for each proper crossing. Writing
$\operatorname{cr}(\mathcal P)=|E(\Gamma(\mathcal P))|$, we have
\begin{equation}\label{eq:crossing-maxcut}
 \OPT(\mathcal P)=\operatorname{cr}(\mathcal P)-\MaxCut(\Gamma(\mathcal P)).
\end{equation}
Indeed, every crossing is either monochromatic or belongs to the cut
induced by the two edge colours. Throughout, graph edges and crossing pairs
are unordered unless an order is explicitly specified.

A represented permutation graph $G$ on $[n]=\{1,\ldots,n\}$ is specified,
after relabelling one order, by a permutation $\pi$, with $ij\in E(G)$
if and only if $i<j$ and $\pi(i)>\pi(j)$. A permutation representation
can also be found from the abstract graph in deterministic
$O(|V(G)|+|E(G)|)$ time \cite[pp.~212--213]{McConnell1999}. Put
\[
 g(G)=\min_{x:V(G)\to\{0,1\}}\sum_{uv\in E(G)}\1[x(u)=x(v)]
     =|E(G)|-\MaxCut(G).
\]
A source threshold $k_{\rm src}$ is thus equivalent to
$g(G)\le t$, where $t=|E(G)|-k_{\rm src}$.
We use $N$ for the number of target points and $N_g$ for the number of
comparison-grid vertices, each of which is a target edge.

For nonnegative integers $v$, define
\[
 Z(v)=\frac14\left\lfloor\frac v2\right\rfloor
                  \left\lfloor\frac{v-1}{2}\right\rfloor
                  \left\lfloor\frac{v-2}{2}\right\rfloor
                  \left\lfloor\frac{v-3}{2}\right\rfloor
 \quad(v\geq3),\qquad Z(0)=Z(1)=Z(2)=0.
\]
The theorem of \'{A}brego et al.\ \cite[Theorem~7]{Abrego2013} says that
$Z(v)$ is the optimum for a convex complete geometric graph: a same-page
pair of edges in a two-page drawing crosses exactly when its endpoints
alternate in the linear order. Boundary edges do not affect this count.
We use this established value for two convex background components.

\begin{proposition}\label{prop:np}
The problem in Definition~\ref{def:problem} belongs to NP. Rational and
integer coordinate versions are polynomial-time equivalent.
\end{proposition}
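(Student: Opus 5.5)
For membership in NP, the certificate is a colouring $\chi:E(\mathcal P)\to\{0,1\}$. It has $\binom N2$ bits, which is polynomial in the input length because $N$ is bounded by the length of the list $\mathcal P$. The verifier first checks the promised input conditions: the points are pairwise distinct and no three are collinear. Each check on a triple is the sign of a $3\times3$ determinant with rational entries, that is, the orientation $\orient(p,q,r)=\sgn\det(q-p,\,r-p)$. It then enumerates all $O(N^4)$ unordered pairs of edges $\{ij,kl\}$. For each pair it decides whether the two edges cross properly: the four endpoints must be distinct and
\[
 \orient(p_i,p_j,p_k)\ne\orient(p_i,p_j,p_l),\qquad
 \orient(p_k,p_l,p_i)\ne\orient(p_k,p_l,p_j).
\]
Under the no-three-collinear hypothesis, these strict orientation tests characterize intersection of the open segments exactly. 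The verifier counts the pairs with $\chi(ij)=\chi(kl)$ and compares the total with $k_{\rm tgt}$. Each orientation is a polynomial of degree two in the coordinates. With rationals in binary, the numerators and denominators therefore have bit length linear in the input, and exact arithmetic costs polynomial time. Hence $M(\mathcal P,\chi)\le k_{\rm tgt}$ is verifiable in polynomial time, and the problem is in NP.

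For the equivalence, note that integer inputs are already a special case of rational inputs, so the identity map reduces the integer version to the rational one. For the converse, take a rational instance. Let $D$ be the product of all coordinate denominators, or better their least common multiple, computed in polynomial time. Its bit length is at most the sum of the denominator bit lengths, so it is polynomial. Multiply every coordinate by $D$ to obtain integer points $D\mathcal P$, and keep $k_{\rm tgt}$ unchanged.

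The scaling $x\mapsto Dx$ with $D>0$ is an orientation-preserving similarity. It multiplies every orientation determinant by $D^2>0$, so all signs are preserved. Consequently distinctness, the absence of collinear triples, and the set of properly crossing pairs are all unchanged. Hence $M(D\mathcal P,\chi)=M(\mathcal P,\chi)$ for every $\chi$, and the optimum $\OPT$ is the same. The output bit length is polynomial because each new coordinate is the numerator times $D/\text{denominator}$.

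The only step needing care is the bit-length bookkeeping. One must check that $D$ and the determinant entries stay polynomially bounded. The point to verify is that the product of all denominators has bit length at most the total input length, which holds.
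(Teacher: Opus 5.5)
Your proposal is correct and follows essentially the same route as the paper. The certificate is the edge colouring, verified by $O(N^4)$ exact orientation tests; rational inputs reduce to integer ones by scaling with the product of the denominators, which multiplies every orientation determinant by $D^2>0$, and the identity map gives the converse. The only differences are cosmetic: the paper scales to integers before verifying and explicitly declares invalid inputs to be no-instances, whereas you verify in exact rational arithmetic and leave the rejection of invalid inputs implicit.
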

\begin{proof}
Let $L$ be the input bit length. Normalize the coordinate denominators
to be positive and let $D$ be their product. Multiplying all coordinates by $D$ gives
integers of $O(L)$ bits, preserves orientation signs, and takes polynomial
time.

Pairwise distinctness and absence of collinear triples can be checked
by examining point pairs and the determinants
\[
 \orient(a,b,c)=(b_x-a_x)(c_y-a_y)-(b_y-a_y)(c_x-a_x).
\]
Declare invalid inputs to be no-instances. A certificate consists of the
$\binom N2$ edge colours. For every unordered pair $ab,cd$ with four
distinct endpoints, a proper crossing is equivalent to
\[
 \orient(a,b,c)\orient(a,b,d)<0,
 \qquad \orient(c,d,a)\orient(c,d,b)<0.
\]
Count the crossing pairs with equal colours and compare with
$k_{\rm tgt}$, rejecting negative thresholds. The $O(N^4)$ tests use
integers of polynomial bit length, so verification takes polynomial time.
Scaling reduces rational inputs to integer inputs; the converse inclusion
is immediate.
\end{proof}

\section{Overview of the construction}\label{sec:overview}

We encode a source cut by colours of selected edges, then control the
other edges required by completeness. All target colours remain free.
Sections~\ref{sec:stability}--\ref{sec:reduction} justify the construction
outlined below.

\subsection{From a source cut to selected chords}

In a permutation representation, each source vertex is a segment and
source adjacency means crossing. Colouring these segments by the two
sides of a cut makes exactly the uncut edges monochromatic crossings.
Put the two endpoint orders in consecutive early and late blocks on a
convex chain, reversing the second order, to obtain chords with the same
crossing pattern.

To amplify this cost, replace each chord by $q$ nested macro chords.
Each macro endpoint becomes a cluster of $W$ points; all $W^2$ edges
between paired clusters form one bundle. One source vertex is thus
represented by $q$ bundles, or $qW^2$ selected edges
(Figure~\ref{fig:source}). Adjacent source vertices give
$(qW^2)^2=\Delta$ crossing pairs between their classes. If each class is
monochromatic, the cost is $\Delta$ when their colours agree and zero
otherwise.

For the inter-class term, edges in a source class have identical crossing
neighbours. An optimum can therefore be chosen monochromatic on each
class, with value $\Delta g(G)$ (Lemma~\ref{lem:source-amplification}).
This need not hold for optima of the complete drawing. Within-bundle
crossings and certain short-edge interactions are set aside only in the
analysis; their restoration costs at most $2SW^4$
(Lemma~\ref{lem:local-count}). All target edges remain present.

\begin{figure}[tbp]
\centering
\begin{tikzpicture}[x=1cm,y=1cm,font=\small]
% This row is an expansion diagram, not the global placement of the bundles.
\node[align=center] at (1.1,1.5) {(a) One source\\vertex};
\node[draw,circle,inner sep=4pt] (u) at (1.1,.35) {$u$};
\draw[-{Stealth}] (1.65,.35)--(4.05,.35)
 node[midway,above,font=\scriptsize] {$q$ macro copies}
 node[midway,below,font=\scriptsize] {expand endpoints};
\draw[rounded corners=2pt] (4.3,-.55) rectangle (12.5,1.75);
\node at (8.4,1.45) {$u$-class: $q=2$ bundles, $W=2$};
\foreach \x/\copy in {5.2/1,9.5/2}{
 \fill[gray!15,rounded corners=3pt] (\x-.16,-.17) rectangle (\x+.16,1.07);
 \fill[gray!15,rounded corners=3pt] (\x+1.64,-.17) rectangle (\x+1.96,1.07);
 \coordinate (p1) at (\x,0);\coordinate (p2) at (\x,.9);
 \coordinate (q1) at (\x+1.8,.9);\coordinate (q2) at (\x+1.8,0);
 \draw[thick] (p1)--(q1) (p1)--(q2) (p2)--(q1) (p2)--(q2);
 \foreach \p in {p1,p2,q1,q2}\fill (\p) circle (1.7pt);
 \node[font=\scriptsize] at (\x+.9,-.35) {copy \copy: $W^2=4$ edges};
}
\node at (8.4,-.85) {total: $qW^2=8$ selected edges for $u$};
% The lower row is a relation between edge classes, not a geometric drawing.
\node at (6.8,-1.5) {(b) Between source classes: crossing relation};
\node[draw,rounded corners=2pt,text width=3.2cm,align=center,inner sep=4pt]
 (uc) at (2.2,-2.2) {$u$-class\\$qW^2$ selected edges};
\node[draw,rounded corners=2pt,text width=3.2cm,align=center,inner sep=4pt]
 (vc) at (11.4,-2.2) {$v$-class\\$qW^2$ selected edges};
\draw[<->] (uc)--(vc)
 node[midway,above] {$uv\in E(G)$: every pair crosses}
 node[midway,below] {$(qW^2)^2=\Delta=64$ pairs};
\node[font=\small] at (6.8,-3.05)
 {Monochromatic classes: same colour $\Rightarrow$ cost $\Delta$; opposite colours $\Rightarrow$ cost $0$.};
\end{tikzpicture}
\caption{Source expansion at illustrative sizes.
Each shaded strip in (a) is a $W$-point cluster. Both bundles contain all
$W^2$ edges; these are separate local views, not their global positions. In (b), the connector denotes pairwise
crossing, not a target edge; nonadjacent classes have no crossings.
All unshown completion edges remain, and every edge colour is free.}
\label{fig:source}
\end{figure}

\subsection{Controlling the completion edges}

The original chain $P$ contains the endpoint clusters and a middle block
of dummy points. Its selected edges encode the cut, but the other $PP$
edges can alter its cost. We add a chain of frame points $A$ and program
points $R$ to control them. The target is complete on $P\cup A\cup R$;
notation such as $AP$ names an edge family (Figure~\ref{fig:layout}).

\begin{center}
\small
\begin{tabular}{@{}p{.20\linewidth}p{.36\linewidth}p{.36\linewidth}@{}}
\toprule
Object & What it represents & Colour or cost \\
\midrule
Source vertex & A class of $qW^2$ selected $PP$ edges & One source cut bit \\
Target point & An original $P_j$, frame $A_i$, or program $r$ & Points are not coloured \\
Target edge & A segment between any two target points & One freely chosen bit \\
Grid position $(i,j)$ & The edge $A_iP_j$ & One vertex of the crossing graph \\
Crossing pair & Two properly crossing target edges & Cost one exactly when colours agree \\
\bottomrule
\end{tabular}
\end{center}

The $AP$ edges form a rectangular comparison grid with two complementary
reference colourings, called phases. In a free optimum the grid stays
close to one phase, even with all external costs. Complementing all
colours exchanges the phases and preserves the cost, so the reference
requires no prescribed colours.

Geometric separation prevents $AP$--$PP$ crossings. The $RA$ edges,
called program fans, connect the systems by crossing grid and core edges.
Once a fan's colour is fixed, each core edge it crosses pays one in that
colour and zero in the other. This is an induced unary cost, depending on
one edge's colour; no weight is supplied in the input.

We analyze the core with the fans fixed to their canonical responses to
the grid. Ordinary programs then force a known colouring on the palette
edges $\mathcal F$. Rank programs compensate for the palette's varying
bias across each selected bundle; the ordinary programs are calibrated
so that only a small residual bias remains there. Short edges and
dummy--dummy edges remain free, with their minima accounted for separately.
The final argument bounds the cost of fixing the fans while keeping the
actual grid colouring unchanged.

\begin{figure}[tbp]
\centering
\begin{minipage}{.48\linewidth}\centering
\begin{tikzpicture}[x=.8cm,y=.8cm,font=\small]
\fill[gray!10] (-1.7,-2.2)--plot[domain=-2.2:2.2,samples=40] ({-.10*\x*\x},\x)--(-1.7,2.2)--cycle;
\draw[gray] plot[domain=-2.2:2.2,samples=40] ({-.10*\x*\x},\x);
\draw[gray] plot[domain=-2.2:2.2,samples=40] ({4+.10*\x*\x},\x);
\foreach \j/\y in {0/-1.9,1/-1.55,2/-.6,3/0,4/.6,5/1.55,6/1.9}{
\coordinate (p\j) at ({-.10*\y*\y},\y);\fill (p\j) circle (1.7pt);}
\foreach \i/\y in {0/2,1/1,2/0,3/-1,4/-2}{
\coordinate (a\i) at ({4+.10*\y*\y},\y);\fill (a\i) circle (1.7pt);}
\coordinate (r) at (-1.3,-1.65);\fill (r) rectangle ++(.075,.075);
\draw[thick] (p0)--(p6) node[pos=.77,left,font=\scriptsize] {$PP$};
\draw[thin] (p1)--(a0) node[pos=.72,above,sloped,font=\scriptsize] {$AP$};
\draw[dashed,thick] (r)--(a1) node[pos=.76,below,sloped,font=\scriptsize] {$rA_k$};
% Exact segment intersections are checked in figures-geometry-check.py.
\coordinate (gridcross) at (1.291124399,-.378429693);
\coordinate (corecross) at (-.361,-1.189194444);
\draw[fill=white] (gridcross) circle (2.4pt) node[below right,font=\scriptsize] {1};
\draw[fill=white] (corecross) circle (2.4pt) node[left,font=\scriptsize] {2};
\node[left] at (-.5,-1.9) {early $P$};
\node[left] at (-.2,0) {dummy $P$};
\node[left] at (-.5,1.9) {late $P$};
\node[above left] at (r) {$r\in R$};
\node[right] at (a0) {$A_0$};\node[right] at (a4) {$A_{B-1}$};
\node[font=\scriptsize,align=left] at (1.6,-2.45)
  {1: canonical grid response\\2: frozen fan gives a core unary cost};
\end{tikzpicture}
\end{minipage}\hfill
\begin{minipage}{.49\linewidth}
\small
\begin{tabular}{@{}l>{\raggedright\arraybackslash}p{.64\linewidth}@{}}
\toprule
Edge class & Role in the proof\\\midrule
$X=AP$ & Stable grid: $A_iP_j\leftrightarrow(i,j)$\\
$T_f=RA$ & Canonical grid response, then frozen core costs\\
$C=PP$ & Selected bundles, palette, short and dummy edges\\
$Y=RP\cup RR$ & Restored within the bound $R_Y$\\
$I=AA$ & Isolated convex component of value $Z(B)$\\\bottomrule
\end{tabular}
\end{minipage}
\caption{Layout and edge classes (not to scale). The shaded convex region
contains $P,R$; $AP$ interiors lie outside. The same fan crosses the canonical
grid at 1; freezing its response gives a core unary cost at 2. Open circles
mark crossings. Only representative edges are shown; the target is complete
and all edge colours remain free. Styles indicate roles.}
\label{fig:layout}
\end{figure}

All remaining edge families also enter the cost. The $AA$ edges form an
isolated convex component. Separation excludes crossings of $AP$ with
$RP$ or $RR$, and a uniform bound controls all other interactions of
$RP\cup RR$. The resulting estimate is
\begin{equation}\label{eq:main-sandwich}
 K_{\rm base}+\Delta g(G)-E_-
 \le \OPT(\mathcal P)\le K_{\rm base}+\Delta g(G)+E_+,
 \qquad 4(E_-+E_+)<\Delta.
\end{equation}
The baseline uses reference colours, unary corrections and known convex
optima, without solving the source instance. Proposition~\ref{prop:decoupling}
obtains the lower bound after removing $RP\cup RR$, by changing only the
fans of a free optimum. Compatible witnesses, with these edges restored,
give the upper bound. Since $g(G)$ is integral and
the error is smaller than the amplified gap, this comparison transfers
the source threshold to one complete drawing.

\subsection{Parameters and their roles}

For $n\ge1$ fix the following integers:
\begin{equation}\label{eq:parameters}
\begin{gathered}
 q=16(n+1),\quad S=nq,\quad d=S+1,\quad H=2^{4200}d^{14},\\
 W=H^4,\quad B=H^5,\quad D_m=512SW,\quad o=2SW+D_m=514SW,\\
 T=2^{13}o^2,\qquad \Lambda_{\rm pal}=2^{13}dT,\qquad
 \Delta=q^2W^4.
\end{gathered}
\end{equation}
There are $S$ selected macro chords, $o$ original points and $B$ frames.
If $s$ is the number of program points, the complete drawing has
$N=B+o+s$ points, whereas the comparison grid has $N_g=Bo$ vertices.
The symbol $n$ in the abstract grid $J_{m,n}$ is a local dimension;
in the construction those dimensions are $B,o$.

The factor $q$ makes the local restoration cost small compared with the
source signal. The scale $H$ makes the grid perturbations, freezing cost
and nonfan interactions small. The cluster size $W$ amplifies the signal,
while $B$ controls the grid strength and program weights.
The dummy block makes both selected rank slopes negative.
The spacing $T$ separates palette preferences, and $\Lambda_{\rm pal}$
dominates adverse core interactions.
For $d\ge1$, the bounds $H\ge2^{4200}d^4$, $H\ge2^{350}d^{14}$ and
$H\ge2^{44}$ control the perturbation regime and the algebraic and unary
errors; Section~\ref{sec:reduction} checks these estimates. The constants
are not optimized, and the resulting instance sizes remain impractical.

\section{Stability of the comparison grid}
\label{sec:stability}

Figure~\ref{fig:grid} shows the finite grid and its two phases.
A grid position $(i,j)$ represents the edge $A_iP_j$ of the drawing.
Two positions are adjacent when both coordinate orders strictly
agree (Lemma~\ref{lem:double-chain}). Fixing the colours outside this grid
turns their crossing costs into unary functions $U_{ij}(c)$; for example,
a fixed fan $rA_k$ contributes $\1[c=t_{rk}]$ at each grid edge that it
crosses.

For a grid colouring $c$, the cellwise sign function is $f_c=2c-1$.
The normalized grid cost $\Phi_J(c)/N_g^2$ approximates the integral
$\mathcal E(f_c)$ below, with a mesh error; the unaries are kept separate.
The proof proceeds from near-minimal energy to nearly threshold rows and
columns, then to an approximately diagonal profile, and finally to one of
two canonical phases. In a finite grid, single-vertex optimality confines
disagreements to a diagonal strip and two corners.
Theorem~\ref{thm:grid} also gives a lower bound for any colouring with
that localized support. We use this bound after changing the program
colours while keeping the grid fixed.

\begin{figure}[tbp]
\centering
\begin{tikzpicture}[x=.38cm,y=.38cm,font=\small]
\foreach \shift/\sgn/\title in {0/1/{positive phase},9/-1/{negative phase}}{
\begin{scope}[shift={(\shift,0)}]
\foreach \i in {0,...,5}\foreach \j in {0,...,5}{
 \pgfmathtruncatemacro{\pos}{(\i+\j<5 ? 1 : -1)*\sgn}
 \ifnum\pos=1\fill[black!65] (\j,\i) rectangle ++(1,1);\else\fill[black!6] (\j,\i) rectangle ++(1,1);\fi
}
\draw[step=1,white,thin] (0,0) grid (6,6);\draw (0,0) rectangle (6,6);
\node at (3,7) {\title};
\node at (3,-1.3) {$j$ (original index)};
\draw[-{Stealth}] (0,-.4)--(6,-.4);
\end{scope}}
\begin{scope}[shift={(18,0)}]
\fill[gray!40] (0,0) rectangle (3,2);
\fill[gray!40] (3,3) rectangle (6,6);
\draw[step=1,gray] (0,0) grid (6,6);\draw (0,0) rectangle (6,6);
\draw[dashed] (3,-.3)--(3,6.3);
\draw[thick] (0,2.5)--(6,2.5);
\node[font=\scriptsize,left] at (-.1,2.5) {$k$};
\node[font=\scriptsize,above,inner sep=0pt] at (3,6.05) {$r_0$};
\node at (3,7.15) {rank-fan neighbours};
\node[align=center,font=\scriptsize] at (3,-1.3)
 {$rA_k$: row $k$ excluded\\cut after the first $r_0$ columns};
\end{scope}
\begin{scope}[shift={(27,0)}]
\fill[gray!35] (0,5.1)--(5.1,0)--(6,0)--(6,.9)--(.9,6)--(0,6)--cycle;
\fill[gray!70] (0,0)--(.9,0)--(0,.9)--cycle;
\fill[gray!70] (6,6)--(5.1,6)--(6,5.1)--cycle;
\draw (0,0) rectangle (6,6);
\draw[dashed] (0,6)--(6,0);
\node at (3,7) {allowed error region};
\node[font=\scriptsize,fill=white,inner sep=1pt] at (3,3) {$D_w$};
\node[font=\scriptsize] (cornerlabel) at (1.5,1.6) {$C_w$};
\draw[-{Stealth},thin] (cornerlabel)--(.3,.3);
\node[font=\scriptsize] (cornerlabeltop) at (4.5,4.4) {$C_w$};
\draw[-{Stealth},thin] (cornerlabeltop)--(5.7,5.7);
\node at (3,-1.3) {$F\subseteq D_w\cup C_w$};
\end{scope}
\draw[-{Stealth}] (-.5,0)--(-.5,6);
\node[rotate=90] at (-1.1,3) {$i$ (frame index)};
\end{tikzpicture}
\caption{The comparison grid represents $AP$ edges, not geometric points.
In the first two panels dark cells have colour $1$ and light cells colour
$0$; the negative phase is the exact complement, including diagonal ties.
In the third panel shading denotes neighbours of a rank fan, independently
of colour: $i<k,j<r_0$ or $i>k,j\ge r_0$.
The last panel shows the diagonal strip and two corner regions containing
all errors $F$ of a minimizing colouring relative to a phase. The displayed
mesh and width are illustrative, not the quantitative theorem's scales.}
\label{fig:grid}
\end{figure}

\subsection{The continuous comparison energy}

All integrals in this section use Lebesgue measure.  Functions are Borel
measurable; the statements also hold for Lebesgue-measurable functions after
replacement on a null set.  For a sign function
$f:[0,1]^2\to\{-1,1\}$, put
\begin{equation}
 \mathcal E(f)=
 \int_{\substack{0\le u<u'\le1\\0\le v<v'\le1}}
 \mathbf 1\{f(u,v)=f(u',v')\}\,du\,dv\,du'\,dv'.
 \label{eq:comparison-energy}
\end{equation}
Each comparable unordered pair occurs once.  Define
\[
 f_*(u,v)=
 \begin{cases}1,&u+v<1,\\-1,&u+v\ge1.\end{cases}
\]
The diagonal convention has no effect on the continuous integral, but we
retain this strict inequality when passing to finite grids.

\begin{theorem}[Continuous two-phase stability]
\label{thm:continuous}
Every measurable sign function on the unit square has
$\mathcal E(f)\ge1/12$.  For every $\varepsilon\ge0$, if
$\mathcal E(f)\le1/12+\varepsilon$, then
\begin{equation}
 \min_{\sigma\in\{-1,1\}}
 \|f-\sigma f_*\|_{L^1([0,1]^2)}
 \le4096\varepsilon^{1/72}.
 \label{eq:continuous-stability}
\end{equation}
\end{theorem}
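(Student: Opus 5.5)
The plan is to follow the route announced at the start of this section: near-minimal energy gives nearly threshold rows and columns, then an approximately diagonal profile, then one of the two phases $\pm f_*$. First I would record why a purely spectral argument is not enough. Writing $\1\{f(x)=f(y)\}=(1+f(x)f(y))/2$ and $[u<u']=(1+\sgn(u'-u))/2$, the comparability kernel symmetrizes to $\tfrac12(J\otimes J+\mathcal H\otimes\mathcal H)$, where $J$ is the all-ones kernel and $\mathcal H(u,u')=\sgn(u'-u)$. This gives
\[
 \mathcal E(f)=\tfrac18+\tfrac1{8}\Bigl(\int f\Bigr)^2+\tfrac18\langle f,(\mathcal H\otimes\mathcal H)f\rangle .
\]
The $L^2$ relaxation only yields $\mathcal E\ge \tfrac18-\tfrac{1}{2\pi^2}$, which is below $1/12$. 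So the constraint $|f|=1$ must be used combinatorially. Still, this identity is useful bookkeeping: the defect $\mathcal E(f)-1/12$ splits into a mean term and a bilinear term, and both must be small for a near-minimizer.

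For the sharp bound I would proceed by slicing. Fix rows $v<v'$ with slices $g=f(\cdot,v)$ and $h=f(\cdot,v')$, and set $I(g,h)=\int_{u<u'}\1\{g(u)=h(u')\}\,du\,du'$. The key one-dimensional claim is that, for prescribed masses $a=|\{g=1\}|$ and $b=|\{h=1\}|$, $I(g,h)$ is minimized when $g$ is a threshold function with its $+1$ set on the left and $h$ has its $+1$ set on the right (or the reverse). This follows from an exchange argument: moving a $+1$ of $g$ leftward changes $I$ monotonically according to the counts of $h$ to the right. The same statement holds for columns. Combining the row and column rearrangements, one shows that replacing $f$ by the indicator of a monotone staircase set $\{u<\phi(v)\}$, with $\phi$ nonincreasing, does not increase $\mathcal E$. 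I would do this by rearranging each row to a threshold, then checking that the thresholds may be reordered monotonically in $v$, which is a two-row exchange. It then remains to minimize over nonincreasing $\phi$ an explicit functional:
\[
 \mathcal E=\int_{v<v'}\Bigl[\tfrac12\phi(v')^2+\tfrac12(1-\phi(v))^2+\ldots\Bigr],
\]
where the omitted terms depend only on the pair $(\phi(v),\phi(v'))$. A convexity or calculus-of-variations computation should show that $\phi(v)=1-v$ (and the complementary phase) is the unique minimizer, with value $1/12$.

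For the stability estimate \eqref{eq:continuous-stability} I would make each step quantitative. In the first step, the exchange identity gives an explicit nonnegative gain for every non-threshold row pair. Hence if $\mathcal E(f)\le 1/12+\eps$, then for most pairs of rows the slices are $O(\eps^{c})$-close in $L^1$ to thresholds; averaging and Markov's inequality convert this to most individual rows, and likewise for columns. In the second step, nearly threshold rows and columns force $f$ to be $L^1$-close to a staircase $\{u<\phi(v)\}$ or its complement. The third step is a quantitative version of the one-dimensional variational problem, of the form $\mathcal E(\phi)-1/12\gtrsim\|\phi-(1-v)\|_{L^1}^{k}$ for some fixed power $k$. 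Chaining these three lossy steps, each of which takes a root of the error, produces a bound like $C\eps^{1/72}$. I would not optimize constants, choosing them crudely so that the product of losses fits $4096$. I also need to choose the sign $\sigma$ consistently, by deciding whether most rows have their $+1$ block on the left.

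The main obstacle is the second step. Row-wise and column-wise closeness to thresholds must be turned into global closeness to a single monotone staircase, and the two orientations must not mix across different rows. For this I would first try the two-row exchange gain: a pair of rows $v<v'$ whose thresholds are ordered the wrong way, or whose orientations differ, incurs a defect proportional to the product of the mismatch measures. Integrating this bound over all such pairs shows that the bad pairs have small measure, which yields a single orientation and an almost monotone $\phi$.
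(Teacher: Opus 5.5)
There is a genuine gap, and it is at the base of your slicing route: the row-pair problem does not point toward the global minimizer.

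\emph{The key one-dimensional claim is false.} Take masses $a=b=\tfrac12$. Every threshold arrangement gives $I(g,h)=\tfrac14$. However, $g=1$ on $[0,\tfrac14)\cup(\tfrac34,1]$ and $h=1$ on $[\tfrac12,1]$ (both $-1$ elsewhere) give $I(g,h)=\tfrac3{16}$.

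\emph{Among thresholds, the pair optimum prefers opposite orientations, but $f_*$ has every row positive on the left.} A lower row positive on $[0,\tfrac14)$ and an upper row positive on $[\tfrac34,1]$ cost $\tfrac3{16}$, against $\tfrac5{16}$ when both are positive on $[0,\tfrac14)$. At $f_*$ itself, take rows $\tfrac12<v<v'$. Their cost exceeds the opposite-orientation cost with the same masses, $(1-v)(1-v')+\tfrac12(v+v'-1)^2$, by $(1-v')(2v-1)>0$. So a small excess $\mathcal E(f)-\tfrac1{12}$ says nothing about pairwise exchange gains, and your first quantitative step does not follow.

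\emph{Orientation mixing is not penalized by any row-pair quantity.} Let $f=1$ exactly on $[0,\tfrac14)\times[0,\tfrac12)\cup[\tfrac34,1]\times[\tfrac12,1]$. Every row and every column is an exact threshold, $\mathcal E(f)=\tfrac18$, and $f$ is at $L^1$ distance $1$ from both phases. Rearranging its rows to a common orientation with row masses preserved raises the energy to $\tfrac5{32}$. This defeats both the reduction to staircases and your second step. Your proposed two-row defect for mis-oriented pairs can have the wrong sign.

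Your staircase computation is fine. For nonincreasing $\phi$ there are no omitted terms, and $\mathcal E=\tfrac1{12}+\tfrac12\|\phi-(1-v)\|_2^2$. The trouble is only in reaching a staircase.

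The missing idea is already in your first paragraph: use the sign constraint slice by slice rather than through a spectral bound. Write $F_u,F_v$ for the sign-kernel transform $\mathcal T$ applied along rows and along columns. Then
$\langle f,(\mathcal H\otimes\mathcal H)f\rangle=-\langle F_u,F_v\rangle=\tfrac12\|F_u-F_v\|_2^2-\tfrac12\|F_u\|_2^2-\tfrac12\|F_v\|_2^2$.
For a sign function $h$ on $[0,1]$, the alternation identity of Lemma~\ref{lem:one-dimensional} gives
$\tfrac13-\|\mathcal T h\|_2^2=8\int_{r<s<t}\1\{h(r)=h(t)\ne h(s)\}\,dr\,ds\,dt\ge0$.
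This beats the spectral value $4/\pi^2$.

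It yields $\mathcal E(f)\ge\tfrac1{12}$ at once, and splits $8(\mathcal E(f)-\tfrac1{12})$ into four nonnegative terms: $M^2$, $\tfrac12\|F_u-F_v\|_2^2$, and the row and column deficits. The paper then argues as follows.
\begin{enumerate}
\item The bound $I(h)\ge e^3/64$, with $e$ the distance from a threshold, makes rows and columns nearly thresholds.
\item The genuinely two-dimensional term $\|F_u-F_v\|_2$, via $(\partial_u-\partial_v)f=\tfrac12\partial_u\partial_v(F_u-F_v)$ and box smoothing, makes $f$ close to a diagonal profile $b(u+v)$ (Lemma~\ref{lem:diagonal-approximation}).
\item Rows of that profile near $v=0$ and $v=1$, together with smallness of $(\mathcal T_u-\mathcal T_v)$ on it, force the two halves of $b$ to be approximately $b_0$ and $-b_0$ for one threshold $b_0$.
\item The mean term then forces $b_0$ to be nearly constant, which gives one of $\pm f_*$.
\end{enumerate}
In the example above, the row and column deficits vanish. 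The whole excess is carried by $M^2=\tfrac14$ and $\tfrac12\|F_u-F_v\|_2^2=\tfrac1{12}$, which no argument built from row pairs or rearrangements can detect.
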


We first prove two auxiliary estimates, keeping explicit constants for
the passage to finite grids.

For $h:[0,1]\to\{-1,1\}$ define
\[
 (\mathcal T h)(t)=\int_0^1\operatorname{sgn}(s-t)h(s)\,ds,
 \qquad D(h)=\frac13-\|\mathcal T h\|_2^2.
\]
Values of the sign kernel on its diagonal are immaterial.  A threshold
function is a sign function that is constant on each side of a point
$c\in[0,1]$, with opposite signs on those two intervals.  Both orientations
and, through $c=0,1$, both constant functions are included.

\begin{lemma}[One-dimensional stability]
\label{lem:one-dimensional}
For every measurable sign function $h$, one has $D(h)\ge0$.  There is a
threshold function $b$ such that
\[
 \|h-b\|_1\le4D(h)^{1/3}.
\]
\end{lemma}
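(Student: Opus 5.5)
The plan is to pass from $h$ to the Lipschitz function $g=\mathcal T h$ and study its level sets. Write $m=\int_0^1 h$. Then
\[
 g(t)=\int_t^1 h-\int_0^t h ,\qquad g(0)=m,\quad g(1)=-m,\quad g'=-2h \text{ a.e.}
\]
So $g$ is absolutely continuous with $|g'|=2$ almost everywhere and $|g|\le1$. The first step is to compute $\|g\|_2^2$ for the threshold at $c$. With $a=2c-1$, $g$ is the V-shaped function equal to $a-2t$ on $[0,c]$ and to $2t-2c-1$ on $[c,1]$. This gives
\[
 \|g\|_2^2=\frac{1+a^3}6+\frac{1-a^3}6=\frac13
\]
for every $c$, constants included. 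The claim $D(h)\ge0$ is therefore the statement that threshold functions maximize $\|g\|_2^2$.

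For the inequality I would use the layer-cake formula $\|g\|_2^2=\int_0^1 2y\,\lambda(y)\,dy$, where $\lambda(y)=|\{|g|>y\}|$. The key claim is
\[
 \lambda(y)\le 1-y \qquad (0\le y<1),
\]
since integrating $2y(1-y)$ over $[0,1]$ gives exactly $\frac13$. To prove the claim, fix $y$ and split $\{|g|>y\}$ into $P=\{g>y\}$ and $M=\{g<-y\}$. For $t\in P$ we have $y<g(t)\le(1-t)-\int_0^t h$, and symmetric constraints hold on $M$. If $t<t'$ with $t\in P$ and $t'\in M$, then $\int_t^{t'}h<-y$, and the reverse ordering gives the opposite sign. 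A covering argument would then do the bookkeeping: since $g$ moves at speed exactly $2$, each excursion of $g$ beyond $\pm y$ must be entered and left through the band $[-y,y]$. The boundary condition $g(1)=-g(0)$ forces the band to be crossed in total time at least $y$. I expect to formalize this as $|\{|g|\le y\}|\ge y$ by tracking the first and last times $g$ meets the band.

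For stability, let $\eta(y)=(1-y)-\lambda(y)\ge0$, so that $D(h)=\int_0^1 2y\,\eta(y)\,dy$. Equality $\eta\equiv0$ forces $|g|$ to visit each level exactly once on each side, except for the one forced crossing. That means $g$ has a single turning point, which is the V shape, and hence $h$ is a threshold function. Quantitatively, let $c$ be a minimizer of $|g|$ and take $b$ to be the threshold at $c$ with orientation matching $g$. Suppose $h$ differs from $b$ on a set of measure $\delta=\|h-b\|_1/2$. Then $g$ has extra turns, i.e.\ local extrema away from $c$. Each extra turn of total "excursion mass" $\delta_k$ makes $g$ revisit an interval of levels of length comparable to $\delta_k$, during time comparable to $\delta_k$. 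Such a turn contributes roughly at least $\int y\,\eta\gtrsim\delta_k^3$ when it occurs near low levels; the worst case is a turn close to the bottom level, with weight $y\approx\delta_k$. This would give $D(h)\gtrsim\delta^3$ up to constants, and hence $\|h-b\|_1\le4D(h)^{1/3}$.

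The main obstacle is this last quantitative step. It requires choosing $b$ properly and aggregating many small extra turns without losing more than a constant factor. I would first try a direct comparison instead: replace $g$ by its "V-rearrangement" $\tilde g$, with the same minimum point, the monotone pieces, and the same boundary values. Then I would bound $\|\tilde g\|_2^2-\|g\|_2^2$ from below by $\frac1{16}\delta^3$, using only the Lipschitz constant $2$ and $|g|\le1$. The constant $4$ leaves slack for crude estimates.
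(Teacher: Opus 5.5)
Your proof of $D(h)\ge0$ is correct and differs from the paper's. The paper proves the identity $D(h)=8I(h)$, where $I(h)$ is the measure of alternating triples $r<s<t$ with $h(r)=h(t)\ne h(s)$, so nonnegativity is immediate. Your layer-cake route instead needs the band estimate $|\{|g|\le y\}|\ge y$. The first-exit/last-entry argument you describe does establish it, using only $|g'|\le2$ and $g(1)=-g(0)$.

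The stability half, however, is not proved, and the route you sketch would not get there:
\begin{itemize}
\item \emph{Choice of $c$.} A minimizer of $|g|$ is a zero of $g$. For a threshold function the jump sits at the global extremum of $g$, where $|g|=1$.
\item \emph{Aggregation.} Per-turn contributions $\gtrsim\delta_k^3$ cannot be summed to the bound you need: with many small turns, $\sum_k\delta_k^3$ can be smaller than $(\sum_k\delta_k)^3$ by an arbitrary factor.
\item \emph{Worst case.} Your local picture misplaces it. Since $|\{|g|\le y\}|$ is cumulative in $y$, time spent revisiting low values of $|g|$ feeds $\eta(y)$ at all higher levels. The only delicate defects are those at the extremal level of $g$.
\item \emph{V-rearrangement.} This inequality is just the lemma restated: every such V has $\|\tilde g\|_2^2=1/3$ by your own computation.
\item \emph{Constant.} The target $\delta^3/16$ is too weak. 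It gives only $\|h-b\|_1\le2\cdot16^{1/3}D(h)^{1/3}$, whereas the constant $4$ requires $D(h)\ge\delta^3/8$.
\end{itemize}

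The missing idea is an exact identity that makes the turn bookkeeping unnecessary. For $b_c=\1_{[0,c)}-\1_{[c,1]}$,
\[
 g(c)+1=\int_0^c(1-h)+\int_c^1(1+h)=\|h-b_c\|_1,
 \qquad 1-g(c)=\|h+b_c\|_1 .
\]
Hence $\|g\|_\infty=1-2e$, where $e$ is the least measure of disagreement with a threshold. An optimal threshold therefore jumps at a maximizer of $|g|$, with orientation given by the sign of $g$ there. It follows that $\lambda(y)=0$ for $y\ge1-2e$. Combined with your bound $\lambda(y)\le1-y$ and $e\le1/2$, this gives
\[
 D(h)\ge\int_{1-2e}^1 2y(1-y)\,dy=4e^2-\tfrac{16}{3}e^3\ge\tfrac43e^2 .
\]
Thus $\|h-b\|_1=2e\le\sqrt{3D(h)}\le4D(h)^{1/3}$, since $D(h)\le1/3$.

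Completed this way, your approach gives a square-root estimate. It is sharp up to constants: replace the jump of a threshold by rapid oscillation on an interval of length $2e$. The paper's cubic bound $I(h)\ge e^3/64$ is cruder, but it comes from a short mass-splitting argument directly on $h$.
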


The bound $D(h)\ge0$ also follows from the sharp generalized Wirtinger
inequality of Croce and Dacorogna~\cite[Theorem~1.1]{Croce2003}.
The proof below gives an alternation identity from which we obtain the
quantitative threshold estimate.

\begin{proof}
Let
\[
 I(h)=\int_{0<r<s<t<1}
 \mathbf 1\{h(r)=h(t)\ne h(s)\}\,dr\,ds\,dt.
\]
The elementary identity
\[
 4\mathbf 1\{h(r)=h(t)\ne h(s)\}
 =1+h(r)h(t)-h(r)h(s)-h(s)h(t)
\]
gives
\begin{equation}
 D(h)=8I(h).
 \label{eq:alternation-deficit}
\end{equation}
To check the coefficient, after integrating out the third variable the
coefficient of $h(u)h(v)$, for $u<v$, in the triple expression is
$2(v-u)-1$.  On the other hand,
\[
 \int_0^1\operatorname{sgn}(u-t)\operatorname{sgn}(v-t)\,dt
 =1-2|u-v|.
\]
The constant triple integral is $1/6$, and the ordered two-variable
integral is half the corresponding symmetric integral.  These observations
give \eqref{eq:alternation-deficit}, including its constant term, by Fubini.

Let $e$ be the least measure of disagreement with a threshold.  For each
orientation, disagreement is continuous in its threshold position, and
$[0,1]$ is compact, so the minimum is attained.  We claim
$I(h)\ge e^3/64$.  There is nothing to prove if $e=0$.  Otherwise let $p$
be the measure of the positive set.  The constant negative threshold shows
that $p\ge e$.  By continuity of cumulative positive mass choose $a\le b$
so that the positive mass before $a$ and after $b$ is $e/4$ each.

If the negative mass between $a$ and $b$ is at least $e/4$, independently
choosing a positive point before $a$, a negative point between $a,b$, and
a positive point after $b$ gives $I(h)\ge(e/4)^3$.  Otherwise compare with
the threshold positive before $b$ and negative after $b$.  Its disagreement
is at least $e$, whereas the negative mass between $a,b$ plus the positive
mass after $b$ is less than $e/2$.  Thus the negative mass before $a$ is
at least $e/2$.  Comparing with the oppositely oriented threshold at $a$
similarly gives negative mass at least $e/2$ after $b$.  The positive mass
between $a,b$ is $p-e/2\ge e/2$.  Negative--positive--negative triples
therefore give the stronger bound $I(h)\ge e^3/8$.  The claim follows.
By \eqref{eq:alternation-deficit}, $e^3\le8D(h)$; since the sign-function
$L^1$ distance is $2e$, the asserted bound follows.
\end{proof}

Apply $\mathcal T$ in the two coordinates and write
\[
 F_u(u,v)=\int_0^1\operatorname{sgn}(s-u)f(s,v)\,ds,
 \qquad
 F_v(u,v)=\int_0^1\operatorname{sgn}(s-v)f(u,s)\,ds,
 \qquad M=\int_{[0,1]^2}f.
\]
Away from a null set, the indicator of comparability of $(u,v)$ and
$(u',v')$ is
$[1+\sgn(u'-u)\sgn(v'-v)]/2$.
The equality indicator is $[1+f(u,v)f(u',v')]/2$, and the ordered region
in~\eqref{eq:comparison-energy} is half the comparable region.
Expanding their product and using antisymmetry of the sign kernel gives
\[
 \mathcal E(f)=\frac{1+M^2-\langle F_u,F_v\rangle}{8}.
\]
In particular,
\begin{equation}
 8\bigl(\mathcal E(f)-1/12\bigr)
 =M^2+\frac12\|F_u-F_v\|_2^2
 +\frac12\left(\frac13-\|F_u\|_2^2\right)
 +\frac12\left(\frac13-\|F_v\|_2^2\right).
 \label{eq:comparison-deficit}
\end{equation}
Lemma~\ref{lem:one-dimensional} on each row and column makes every term on
the right nonnegative.  This proves the lower bound in
Theorem~\ref{thm:continuous}.  For $f_*$ the mean vanishes, both slice
families are thresholds, and direct integration gives $F_u=F_v$.  Hence its
energy is exactly $1/12$.

Under an excess bound $\varepsilon$, \eqref{eq:comparison-deficit} gives
\begin{equation}
 |M|\le\sqrt{8\varepsilon},\qquad
 \|F_u-F_v\|_2\le4\sqrt\varepsilon,
 \qquad
 \int_0^1 D(f(\cdot,v))\,dv\le16\varepsilon,
 \label{eq:deficit-consequences}
\end{equation}
and the analogous column estimate.  Jensen's inequality and
Lemma~\ref{lem:one-dimensional} show that the average row distance from a
threshold, and also the average column distance, is at most
$4(16\varepsilon)^{1/3}\le11\varepsilon^{1/3}$.  These distance functions are
measurable: one may take the infimum over rational threshold positions in
both orientations, using continuity in the threshold position.  No
measurable selection of optimal thresholds is required.

\begin{lemma}[Approximation by a diagonal profile]
\label{lem:diagonal-approximation}
Suppose $0<\eta_{\rm sm}\le1$ and $\mathcal E(f)\le1/12+\eta_{\rm sm}^6$.  There is a measurable
$b:[0,2]\to\{-1,1\}$ such that, for $f_1(u,v)=b(u+v)$,
\[
 \|f-f_1\|_1\le128\eta_{\rm sm},
 \qquad \mathcal E(f_1)\le1/12+257\eta_{\rm sm}.
\]
\end{lemma}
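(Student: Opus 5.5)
Set $\varepsilon=\eta_{\rm sm}^6$. By \eqref{eq:deficit-consequences}, $|M|\le\sqrt8\,\eta_{\rm sm}^3$ and $\|F_u-F_v\|_2\le4\eta_{\rm sm}^3$. The average row distance from a threshold and the average column distance are both at most $11\eta_{\rm sm}^2$. The plan is to replace each row and column by a threshold, show that the threshold positions must be approximately $c(v)\approx1-v$ for rows and similarly for columns, and then read off a diagonal profile $b$.

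\textbf{Step 1 (near-threshold replacement).} For each row $v$, choose a threshold $\beta_v$ with $\|f(\cdot,v)-\beta_v\|_1\le 2\,\mathrm{dist}$. Use rational positions and a near-optimal choice, so that no measurable selection theorem is needed; an $\epsilon$-approximate rational choice is measurable. Then $f^{\rm row}(u,v)=\beta_v(u)$ satisfies $\|f-f^{\rm row}\|_1\le 22\eta_{\rm sm}^2$. Build $f^{\rm col}$ from columns in the same way. Since $\mathcal T$ is $1$-Lipschitz from $L^1$ to $L^\infty$, we get $\|F_u-F_u^{\rm row}\|_2\le 22\eta_{\rm sm}^2$, and the same bound for $F_v$ with columns.

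\textbf{Step 2 (rigidity).} Suppose row $v$ is a threshold at $c(v)$ with orientation $\tau(v)$. Then $F_u^{\rm row}(u,v)=\tau(v)\,(2|u-c(v)|-\ldots)$ is an explicit tent-shaped function of $u$ with slope $\pm2$. The near-equality $F_u\approx F_v$ in $L^2$ forces $\partial_u F_u\approx\partial_v F_v$ in a weak sense. The row tent has $u$-derivative $-2\tau(v)\,f^{\rm row}$, and $\partial_vF_v=-2f$ up to sign. This consistency is automatic, so the real constraint is the converse: $F_u$ as a function of $v$ must be smooth enough to be a column transform. I would argue directly. For a.e. $u$, the function $v\mapsto F_u(u,v)$ must be $L^2$-close to $F_v(u,\cdot)$, which is Lipschitz in $v$ with constant $2$. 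Hence $c(v)$ and $\tau(v)$ cannot oscillate except on small sets. Combined with the column information, this forces the sets $\{f^{\rm row}=1\}$ and $\{f^{\rm col}=1\}$ to agree up to $O(\eta_{\rm sm})$ in measure. The row picture makes the set $\{f=1\}$ a union of horizontal half-intervals, and the column picture makes it a union of vertical ones. Their near-agreement means $\{f=1\}$ is approximately a monotone staircase region, either a lower or an upper set in both coordinates. The small mean $|M|$ fixes its area near $1/2$.

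\textbf{Step 3 (diagonal).} A monotone region has rows with threshold $c(v)$ monotone in $v$. Plugging a monotone staircase with boundary curve $\gamma$ into the exact formula $\mathcal E=(1+M^2-\langle F_u,F_v\rangle)/8$ gives an energy excess controlling $\int|\gamma(v)-(1-v)|$. Comparing $F_u$ with $F_v$ for a staircase gives a term like $\int(c(v)-(1-v))^2$, and the $L^2$ to $L^1$ conversion costs a square root. The exponents $\eta^6$, then $\eta^3$, $\eta^2$, then $\eta$ in the conclusion leave room for these losses. This step should give $\|f-\sigma f_*\|_1\lesssim\eta_{\rm sm}$, in which case $b(w)=\sigma\,\mathrm{sgn}(1-w)$ works. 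The statement allows a general measurable profile $b$, so a cruder route is also available: define $b$ on each anti-diagonal by majority vote of $f$. Then $f_1=b(u+v)$ is within $128\eta_{\rm sm}$ of $f$. The energy bound follows because $\mathcal E$ is $2$-Lipschitz in $L^1$: changing $f$ on a set of measure $m$ changes at most $m$ worth of pairs, with both endpoints counted. This gives $\mathcal E(f_1)\le\mathcal E(f)+2\|f-f_1\|_1\le 1/12+\eta_{\rm sm}^6+256\eta_{\rm sm}\le1/12+257\eta_{\rm sm}$. This also explains the constant $257=256+1$.

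\textbf{Main obstacle.} The hard part is Step 2: converting $L^2$-closeness of $F_u$ and $F_v$, together with near-threshold slices, into near-agreement of $f$ with a function of $u+v$. I would first attempt it through the identity $F_u(u,v)-F_v(u,v)$, evaluated for threshold slices. For threshold rows and columns, each of $F_u$ and $F_v$ is determined by distances to the boundary of $\{f=1\}$ along rows and columns respectively. Their equality then forces the horizontal distance to the boundary to equal the vertical one, which means the boundary has slope $-1$. This argument must be made quantitative with the $22\eta_{\rm sm}^2$ errors.
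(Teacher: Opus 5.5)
The step that actually delivers your conclusion is circular. Taking $b(s)$ to be the majority sign of $f$ on the anti-diagonal $u+v=s$ gives the $L^1$-best diagonal profile, since the integral disintegrates over anti-diagonals. So the claim that this $f_1$ lies within $128\eta_{\rm sm}$ of $f$ is equivalent to the lemma itself.

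Steps~2 and~3, which were meant to supply that fact, are heuristics. You never give a quantitative mechanism that turns $\|F_u-F_v\|_2\le4\eta_{\rm sm}^3$ into agreement of $f$ with a function of $u+v$. The intermediate claims are asserted rather than derived: near-agreement of $\{f^{\rm row}=1\}$ and $\{f^{\rm col}=1\}$, a monotone staircase, boundary slope $-1$, and $\|f-\sigma f_*\|_1\lesssim\eta_{\rm sm}$. Threshold slices alone do not give a staircase; the checkerboard $\sgn(\tfrac12-u)\sgn(\tfrac12-v)$ has threshold rows and columns. The last bound is also far stronger than anything the paper proves, even in Theorem~\ref{thm:continuous}, which reaches only order $\varepsilon^{1/72}$.

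Step~1 has a slip as well. The averaged row bound gives $\|F_u-F_u^{\rm row}\|_1\le22\eta_{\rm sm}^2$, but in $L^2$ you only get $\sqrt{44}\,\eta_{\rm sm}$, because individual row errors are bounded by $2$, not uniformly small. Any rigidity argument routed through the threshold replacements therefore already loses powers of $\eta_{\rm sm}$ relative to $\|F_u-F_v\|$. Your energy bound is fine: any $L^1$-Lipschitz constant at most $2$ gives $257\eta_{\rm sm}$, and~\eqref{eq:energy-lipschitz} even gives constant one.

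The missing idea lies in the relations $\partial_uF_u=-2f=\partial_vF_v$ that you set aside as automatic. Differentiating each in the \emph{other} variable gives, in distributions on the open square, $(\partial_u-\partial_v)f=\tfrac12\partial_u\partial_v(F_u-F_v)$. So smallness of $F_u-F_v$ directly bounds the anti-diagonal derivative of $f$ once $f$ is smoothed. The paper's proof proceeds as follows.

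1. It convolves $f$ with the uniform density on $[-\eta_{\rm sm},\eta_{\rm sm}]^2$. Near-threshold rows and columns make a translation by $r$ cost at most $22\eta_{\rm sm}^2+4|r|$ in $L^1$, so the smoothing costs at most $52\eta_{\rm sm}$.
2. The mixed derivative of the box kernel is a corner measure of total mass $\eta_{\rm sm}^{-2}$. On $[\eta_{\rm sm},1-\eta_{\rm sm}]^2$ this yields $\|(\partial_u-\partial_v)f_{\eta_{\rm sm}}\|_1\le\|F_u-F_v\|_1/(2\eta_{\rm sm}^2)\le2\eta_{\rm sm}$. The hypothesis $\eta_{\rm sm}^6$ is calibrated exactly so that $\eta_{\rm sm}^3/\eta_{\rm sm}^2=\eta_{\rm sm}$.
3. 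A one-dimensional Poincar\'e inequality on each anti-diagonal puts $f_{\eta_{\rm sm}}$ within $2\eta_{\rm sm}$ of its diagonal average $j(u+v)$.
4. The boundary layer costs $8\eta_{\rm sm}$, giving a total of $62\eta_{\rm sm}$. Rounding $j$ to a sign at most doubles this.

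No threshold positions, staircase structure, or phase identification are needed at this stage; those enter only later, in the proof of Theorem~\ref{thm:continuous}. For $\eta_{\rm sm}\ge1/8$ any fixed diagonal sign function works trivially.
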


\begin{proof}
It suffices first to treat $\eta_{\rm sm}<1/8$.  Extend $f$ by zero outside the square.
A threshold on $[0,1]$, extended by zero, has total variation at most four.
Approximating each row by a threshold and applying the triangle inequality
shows that translation by $r$ in the first coordinate changes the zero
extension in $L^1$ by at most
$22\varepsilon^{1/3}+4|r|$.  The same holds in the second coordinate.
Here the rowwise inequality can be integrated using the measurable
infimum just described, so no selection issue arises.

Let $f_{\eta_{\rm sm}}$ be the convolution of the zero extension with the uniform
probability density on $[-\eta_{\rm sm},\eta_{\rm sm}]^2$.  Integrating the translation bound and
using $\varepsilon=\eta_{\rm sm}^6$ yields
\begin{equation}
 \|f-f_{\eta_{\rm sm}}\|_{L^1(\mathbb R^2)}
 \le44\eta_{\rm sm}^2+8\eta_{\rm sm}\le52\eta_{\rm sm}.
 \label{eq:box-error}
\end{equation}
On the interior square $Q_{\eta_{\rm sm}}=[\eta_{\rm sm},1-\eta_{\rm sm}]^2$ we also have
\begin{equation}
 \| (\partial_u-\partial_v)f_{\eta_{\rm sm}}\|_{L^1(Q_{\eta_{\rm sm}})}\le2\eta_{\rm sm}.
 \label{eq:directional-bound}
\end{equation}
Indeed, on the open unit square, in the sense of distributions,
\[
 (\partial_u-\partial_v)f
 =\tfrac12\partial_u\partial_v(F_u-F_v),
\]
because $\partial_u F_u=-2f$ and $\partial_v F_v=-2f$.  The mixed derivative
of the uniform averaging kernel is the signed measure at its four corners
with total variation $\eta_{\rm sm}^{-2}$.  Convolution on $Q_{\eta_{\rm sm}}$ therefore gives a bound
$\|F_u-F_v\|_1/(2\eta_{\rm sm}^2)\le2\eta_{\rm sm}$ by \eqref{eq:deficit-consequences}.
Only averaging boxes contained in the original square are used here, so no
distributional boundary term from the zero extension enters this estimate.
The box average of a bounded function is Lipschitz: translating either
coordinate changes the averaging rectangle by area at most a constant
times the translation distance.  Consequently the weak directional
derivative above agrees with its almost-everywhere derivative, and the
one-dimensional fundamental theorem applies on almost every diagonal slice.

For each $s$ average $f_{\eta_{\rm sm}}(u,s-u)$ over the interval of $u$ for which
$(u,s-u)\in Q_{\eta_{\rm sm}}$, and denote the average by $j(s)$.  Put $j(s)=0$ when
that interval is empty.  This defines a measurable function with
$|j|\le1$.  On an interval $I$ of length $\ell$, an absolutely continuous
function $g$ satisfies
\[
 \int_I|g-g_I|
 \le\frac1\ell\int_I\int_I|g(x)-g(y)|\,dx\,dy
 \le\ell\int_I|g'|.
\]
The diagonal coordinates $(s,u)$ have Jacobian one and their intervals
have length at most one.  Integrating this inequality and using
\eqref{eq:directional-bound} gives
$\|f_{\eta_{\rm sm}}-j(u+v)\|_{L^1(Q_{\eta_{\rm sm}})}\le2\eta_{\rm sm}$.  In particular no inverse lower bound
on the length of a short diagonal is required.  Outside $Q_{\eta_{\rm sm}}$ the area is
at most $4\eta_{\rm sm}$ and the pointwise difference is at most two.  Together with
\eqref{eq:box-error}, this gives
\[
 \|f-j(u+v)\|_{L^1([0,1]^2)}\le52\eta_{\rm sm}+2\eta_{\rm sm}+8\eta_{\rm sm}=62\eta_{\rm sm}.
\]
Round $j$ to a sign, with a fixed convention at zero.  For a sign $z$ and
$r\in[-1,1]$, $|z-\operatorname{sgn}(r)|\le2|z-r|$.  Thus the resulting
$f_1$ has distance at most $124\eta_{\rm sm}\le128\eta_{\rm sm}$.

For any two sign functions $f,g$, expansion of the equality indicator
gives
\begin{equation}
 |\mathcal E(f)-\mathcal E(g)|
 \le\frac12\int_{[0,1]^2}|f-g|
       \bigl(uv+(1-u)(1-v)\bigr)\,du\,dv
 \le\|f-g\|_1.
 \label{eq:energy-lipschitz}
\end{equation}
Hence $\mathcal E(f_1)\le1/12+\eta_{\rm sm}^6+128\eta_{\rm sm}\le1/12+129\eta_{\rm sm}$, which is stronger
than the stated bound.  If $\eta_{\rm sm}\ge1/8$, any fixed diagonal sign function
suffices: the distance is at most two, the energy is at most $1/4$, and
both claimed bounds follow.
\end{proof}

\begin{proof}[Proof of Theorem~\ref{thm:continuous}]
Assume first that $\varepsilon>0$, put $\eta_{\rm sm}=\varepsilon^{1/6}$, and suppose
$\alpha=257\eta_{\rm sm}\le1/4096$.  Lemma~\ref{lem:diagonal-approximation} gives
$f_1(u,v)=b(u+v)$ with energy excess at most $\alpha$.
Let $t=\alpha^{1/4}\le1/8$.
By \eqref{eq:deficit-consequences} there is a row at height
$v\in[t,2t]$ whose one-dimensional deficit is at most $16\alpha/t$.
Its sign profile is $b$ restricted to $[v,v+1]$.  By
Lemma~\ref{lem:one-dimensional}, it has a threshold approximation of
error at most $4(16\alpha/t)^{1/3}<11\alpha^{1/4}$.
Restrict that threshold to $[v,1]$ and extend it to $[0,1]$ as a threshold.
The additional interval has length at most $2t$ and costs at most $4t$.
Thus $b$ on $[0,1]$ is within $15\alpha^{1/4}$ of a threshold $b_0$.
Selecting a row with $v\in[1-2t,1-t]$ in the same way gives a threshold
$b_1$ within $15\alpha^{1/4}$ of $s\mapsto b(1+s)$ on $[0,1]$.

Let $\widetilde b(s)=b_0(s)$ for $s<1$ and
$\widetilde b(s)=b_1(s-1)$ for $s\ge1$, and put
$f_2(u,v)=\widetilde b(u+v)$.  The density of $u+v$ on the square is
$s$ for $0\le s\le1$ and $2-s$ for $1\le s\le2$, so it is at most one.
Consequently
\begin{equation}
 \|f_1-f_2\|_1\le\zeta:=30\alpha^{1/4}.
 \label{eq:two-half-error}
\end{equation}
Each coordinate operator $\mathcal T$ has $L^2$ operator norm at most
one, by Cauchy--Schwarz on each unit interval and Fubini.  Moreover,
$\|f_1-f_2\|_2^2=2\|f_1-f_2\|_1$.  Writing
$\mathcal D=\mathcal T_u-\mathcal T_v$, the triangle inequality followed
by $(x+y)^2\le2x^2+2y^2$ gives
\begin{equation}
 \|\mathcal D f_2\|_2^2
 \le32\alpha+16\zeta\le512\alpha^{1/4}.
 \label{eq:two-half-operator}
\end{equation}

Set $m_0=|\{s\in[0,1]:b_0(s)=b_1(s)\}|$ and
$F(t)=\int_0^t(b_0(s)+b_1(s))\,ds$.  Directly integrating the two
coordinate operators yields
\begin{equation}
 \mathcal D f_2(u,v)=F(v)-F(u),
 \qquad
 \|\mathcal D f_2\|_2^2=2\int_0^1(F-\overline F)^2.
 \label{eq:agreement-variance}
\end{equation}
Because $b_0,b_1$ are thresholds, the nonzero derivative of $F$ is
supported on at most two intervals, and is constantly $2$ or $-2$ on
each.  If $m_0>0$, one of those intervals has length
$\ell\ge m_0/2$.  For any constant $c$, the integral of $(F-c)^2$ on
that interval is at least
\[
 \min_d\int_0^\ell(2s-d)^2\,ds=\ell^3/3.
\]
Taking $c=\overline F$ in \eqref{eq:agreement-variance} and using
\eqref{eq:two-half-operator}, we obtain
$m_0^3\le6144\alpha^{1/4}$ and hence
\begin{equation}
 m_0\le19\alpha^{1/12}.
 \label{eq:agreement-bound}
\end{equation}

Replace $b_1$ by $-b_0$ and call the resulting diagonal function $f_3$.
Since the diagonal density is at most one,
$\|f_2-f_3\|_1\le2m_0$.  If $b_0$ is positive before $c$ and negative
after $c$, then
\[
 \int f_3=\int_0^1(2s-1)b_0(s)\,ds=-2c(1-c).
\]
Reversing the threshold orientation reverses the mean.  The distances
from $f_3$ to the two canonical phases are $2c$ and $2(1-c)$, in some
order, since the density weights of the two halves add to one.
Thus $\min\{2c,2(1-c)\}\le4c(1-c)=2|\int f_3|$.
With $d=\|f-f_2\|_1$, we have $|\int f_3|\le|\int f|+d+2m_0$, so
\[
 \min_\sigma\|f-\sigma f_*\|_1
 \le d+2m_0+2\left|\int f_3\right|
 \le3d+6m_0+2\left|\int f\right|.
\]
Using $d\le128\eta_{\rm sm}+30\alpha^{1/4}$,
\eqref{eq:agreement-bound}, and $|\int f|\le3\eta_{\rm sm}^3$, the right-hand side
is at most
\[
 384\eta_{\rm sm}+90\alpha^{1/4}+114\alpha^{1/12}+6\eta_{\rm sm}^3
 \le594\alpha^{1/12}.
\]
Here $0<\eta_{\rm sm}\le1$, $\alpha=257\eta_{\rm sm}\le1$, and each smaller power term is
bounded by the displayed coefficient times $\alpha^{1/12}$.
Since $257^{1/12}\le2$, this is at most
$1188\eta_{\rm sm}^{1/12}\le4096\varepsilon^{1/72}$.

If $257\eta_{\rm sm}>1/4096$, the distance between any two sign functions is at most
two.  The inequality $\eta_{\rm sm}^{1/12}\ge1/4$ at the cutoff, and monotonicity
thereafter, give the asserted estimate immediately.  Finally, if
$\varepsilon=0$, apply the positive-excess result for every positive
excess parameter.  The minimum of the distances to the two fixed phases
must then be zero. This proves the zero-excess case.
\end{proof}

\subsection{Finite rectangular grids with unary costs}

For integers $m,n\ge1$, let $J_{m,n}$ have vertex set
$\{0,\ldots,m-1\}\times\{0,\ldots,n-1\}$, where two vertices are adjacent
exactly when both coordinate orders strictly agree.  In this section write
\[
 V=mn,\qquad \rho=\frac1m+\frac1n,\qquad
 u_i=\frac{i+1/2}{m},\qquad v_j=\frac{j+1/2}{n}.
\]
Thus $V$ is the number of vertices of the comparison graph, not the
number of points of the eventual geometric drawing.  Define
\[
 c^+_{ij}=\mathbf1\{u_i+v_j<1\},\qquad c^-_{ij}=1-c^+_{ij},
 \qquad s^\sigma_{ij}=2c^\sigma_{ij}-1.
\]
In particular a diagonal cell centre has spin $-1$ in the positive phase.

Let $U_x:\{0,1\}\to\mathbb R$ be unary costs, and suppose
\[
 |U_x(1)-U_x(0)|\le\alpha V\quad\hbox{for all }x,
 \qquad \alpha\ge0.
\]
No bound is imposed on their additive constants.  Write
\[
 E(c)=\Phi_J(c)+\sum_xU_x(c_x),
\]
where $\Phi_J$ counts monochromatic edges of $J_{m,n}$.
For a set $F$ of grid vertices, $c\oplus F$ denotes the colouring obtained
by flipping exactly $F$.  Define the exact finite quantities
\begin{equation}
 d_x^\sigma=E(c^\sigma\oplus\{x\})-E(c^\sigma),\qquad
 L_\sigma=E(c^\sigma)+\sum_x\min(0,d_x^\sigma),\qquad
 L=\min(L_+,L_-).
 \label{eq:finite-linear-floor}
\end{equation}
For $w\ge0$ let
\[
 D_w=\{(i,j):|u_i+v_j-1|\le w\},\qquad
 C_w=\{(i,j):u_i+v_j\le w\ \hbox{or}\ u_i+v_j\ge2-w\}.
\]

\begin{theorem}[Finite-grid localization and energy bounds]
\label{thm:grid}
Let $a$ satisfy $\alpha+\rho\le a\le2^{-4096}$.
For every rectangular aspect ratio the following statements hold.
\begin{enumerate}
 \item Every minimizer $c$ of $E$ has a phase $\sigma$ such that, with
 $F=\{x:c_x\ne c_x^\sigma\}$,
 \[
 |F|\le52aV,\qquad F\subseteq D_{220a}\cup C_{220a}.
 \]
 \item The minimum energy belongs to
 \[
 [L-2^{32}a^3V^2,\ L+2^{32}a^3V^2].
 \]
 For either prescribed phase, the explicit set
 $F_\sigma^0=\{x:d_x^\sigma<0\}$ has size at most $52aV$ and
 \[
 E(c^\sigma\oplus F_\sigma^0)\le L_\sigma+2^{32}a^3V^2.
 \]
 \item For every $F\subseteq D_{220a}\cup C_{220a}$,
 \[
 E(c^\sigma\oplus F)\ge L_\sigma-2^{32}a^3V^2.
 \]
 This last conclusion requires no optimality of $c^\sigma\oplus F$.
 In fact it holds for arbitrary real unary costs, without their oscillation
 bound, whenever $\rho\le a\le1/65536$.
\end{enumerate}
For rational unary costs, both $L_\sigma$ and the two prescribed-phase
colourings are computable by polynomially many exact arithmetic operations.
\end{theorem}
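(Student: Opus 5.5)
The plan is to pass from the finite grid to the continuous energy, use Theorem~\ref{thm:continuous} to get a phase, and then sharpen $L^1$ closeness into localization by local optimality. First I would compare $\Phi_J(c)$ with $V^2\mathcal E(f_c)$, where $f_c$ is the cellwise extension of $f_c=2c-1$. Pairs of cells in the same row or column and their boundary effects give an error $O(\rho V^2)$. Hence $|\Phi_J(c)-V^2\mathcal E(f_c)|\le C\rho V^2$. A direct count shows that $\Phi_J(c^\pm)=V^2/12+O(\rho V^2)$.

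For a minimizer $c$, compare with $c^\sigma$ itself. The unary terms change by at most $\alpha V$ per flipped vertex, so by at most $2\alpha V^2$ in total. This gives an energy excess $\mathcal E(f_c)-1/12\le C'aV^0$, and Theorem~\ref{thm:continuous} yields only $\|f_c-\sigma f_*\|_1\le 4096(C'a)^{1/72}$. This is far weaker than $|F|\le52aV$, so a bootstrap is needed. I would use single-vertex optimality. For a vertex $x$ in phase $\sigma$, flipping $x$ changes $\Phi_J$ by the number of comparable neighbours with $x$'s colour minus those with the other colour. At distance $w$ from the diagonal, this quantity is of order $wV$, with the sign favouring the phase, in both quadrants.

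If $c$ agrees with $c^\sigma$ except on a set $F$ of density $\delta$, then the flip gain at $x$ differs from the phase value by at most $2|F|$. So every $x\in F$ must satisfy $|u+v-1|\cdot V\lesssim 2|F|+\alpha V$, or lie in a corner where the neighbourhood is small. The corner case is where $C_w$ comes from. This gives $F\subseteq D_{w}\cup C_w$ with $w\approx 2\delta+\alpha$, and hence $|F|\lesssim wV$. I then iterate: the strip area yields a better $\delta$, and the contraction $\delta\mapsto O(\delta)\cdot$(strip width) converges to $\delta=O(a)$. I expect this bootstrap to be the main obstacle. It needs explicit flip-gain estimates near the corners and near the diagonal, the finite mesh $\rho$, and constants fitting $52a$ and $220a$. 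I would first prove a clean lemma computing $d_x^\sigma$ for the phase colouring exactly, as a linear function of the diagonal distance up to $O(\rho V)$.

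For parts 2 and 3, fix $\sigma$ and $F\subseteq D_{220a}\cup C_{220a}$. Expand exactly:
\[
E(c^\sigma\oplus F)=E(c^\sigma)+\sum_{x\in F}d_x^\sigma+\Pi(F),
\]
where $\Pi(F)$ collects the pairwise corrections. Each pair corrects by $\pm2$ exactly when the two vertices are adjacent. Then $|\Pi(F)|\le 2\cdot\#\{\text{adjacent pairs in }F\}$, and $\sum_{x\in F}d_x^\sigma\ge\sum_x\min(0,d_x^\sigma)$. The number of adjacent pairs in the strip-plus-corner region is small. Two strip cells are comparable only if they are nearly on a common antidiagonal staircase, so the comparable pairs inside $D_w$ number $O(w^3V^2)$, and the corners contribute $O(w^4V^2)$. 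This gives the lower bound in part 3 with $2^{32}a^3V^2$ for arbitrary unaries, using only $\rho\le a$.

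For the upper bound in part 2, take $F^0_\sigma=\{d_x^\sigma<0\}$. The flip-gain lemma places $F^0_\sigma$ in the same strip and corners, with size at most $52aV$. The same pair count gives $E(c^\sigma\oplus F^0_\sigma)\le L_\sigma+2^{32}a^3V^2$. Combining this with part 1 and part 3 for the minimizer gives the two-sided bound on the minimum. Computability is immediate, since $L_\sigma$ needs $V$ flip differences, each a polynomial count.
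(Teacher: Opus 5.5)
Your architecture matches the paper's. You compare cells with the continuum, use Theorem~\ref{thm:continuous} for a tiny initial disagreement fraction $\mu$, and bootstrap from single-vertex optimality, $d_x^\sigma\le 2\deg_F(x)$ for $x\in F$. Parts 2 and 3 then follow from the exact simultaneous-flip expansion with strip and corner pair counts. Those later parts are essentially right, though your pair count omits the strip--corner pairs, at most $|D_w||C_w|\le 24w^3V^2$.

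The gap is the bootstrap. You flag it but do not carry it out, and as written it improves nothing. Bounding $\deg_F(x)$ by $|F|=\mu V$ localizes $F$ in $D_w\cup C_w$ with $w=4\mu+12a$. Then $|F|\le|D_w|+|C_w|\approx(2w+8w^2)V$, which is larger than $\mu V$. Your recursion $\delta\mapsto O(\delta)\cdot(\text{strip width})$ also has the wrong shape: it drops the additive $O(a)$ from unary-driven flips. The correct recursion is $\mu\mapsto O(a)+O(\mu^2)$.

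The missing step is to feed the localization back into the degree, treating strip and corner vertices differently.
\begin{itemize}
\item \emph{Strip vertices.} For $x\in F\cap D_w$, two comparable strip points have coordinate differences of the same sign summing to at most $2w$. So every strip neighbour of $x$ lies within $2w$ in each coordinate, and $\deg_F(x)\le 18w^2V+|C_w|\le 26w^2V$. Since $\ell\ge|1-u-v|/2$ on $D_w$, this gives $F\cap D_w\subseteq D_{12a+104w^2}$.
\item \emph{Corner vertices.} These have no small degree bound, since a cell near $(0,0)$ is comparable to almost the whole grid. None is needed, because the corner area is already quadratic: $|C_w|\le 8w^2V$.
\item \emph{Absorption.} Combining the two gives $\mu\le 25a+216w^2\le 25a+6912\mu^2+62208a^2$. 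The a priori bound $\mu\le2^{-20}$ lets you absorb the quadratic term, so $\mu\le52a$ and then $w\le220a$.
\end{itemize}

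Relatedly, your planned flip-gain lemma should not be linear in the diagonal distance. The canonical field is $V|s|(1-|s|)+O(\rho V)$ with $s=1-u-v$. Its vanishing as $|s|\to1$ is exactly what produces $C_w$, via the dichotomy that $t(1-t)\le z$ forces $t\le 2z$ or $1-t\le 2z$. The same dichotomy gives $F^0_\sigma\subseteq D_{12a}\cup C_{12a}$ in part 2.
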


\begin{proof}
\emph{From cells to the continuous objective.}
Extend a grid colouring to the sign function $f_c$ constant on its
$1/m$ by $1/n$ cells.  Comparable pairs in distinct rows and columns
contribute exactly $\Phi_J(c)/V^2$.  All other contributions are
nonnegative and their total uncoloured measure is
\[
 \frac14-\frac{\binom m2\binom n2}{m^2n^2}
 =\frac1{4m}+\frac1{4n}-\frac1{4mn}\le\frac\rho4.
\]
We will use the weaker bound
\begin{equation}
 0\le\mathcal E(f_c)-\Phi_J(c)/V^2\le\rho/2.
 \label{eq:cell-comparison}
\end{equation}
The functions $f_{c^+}$ and $f_*$ can disagree only where
$|u+v-1|\le\rho/2$.  This band has area at most $\rho$, whence
$\|f_{c^+}-f_*\|_1\le2\rho$, and the analogous assertion holds for
the negative phase.  By \eqref{eq:energy-lipschitz},
$\mathcal E(f_{c^+})\le1/12+2\rho$.
For a minimizer $c$, comparison with $c^+$ gives
$\Phi_J(c)\le\Phi_J(c^+)+\alpha V^2$.  Therefore
\[
 \mathcal E(f_c)\le1/12+\alpha+\tfrac52\rho\le1/12+5a.
\]
Theorem~\ref{thm:continuous} and the canonical cell discrepancy give a
phase whose disagreement fraction $\mu=|F|/V$ satisfies
\begin{equation}
 \mu\le2048(5a)^{1/72}+\rho\le2^{-20}.
 \label{eq:initial-fraction}
\end{equation}
For the final numerical bound, use $5<8$, $a\le2^{-4096}$, and
$\rho\le a$; the first term is less than $2^{11-4093/72}<2^{-45}$.

\emph{Canonical local fields.}
For $p=(u,v)$ let $Q_p$ be the union of the lower-left and upper-right
rectangles comparable to $p$.  With $s=1-u-v$, direct integration gives
\[
 \int_{Q_p}f_*=-s(1-|s|).
\]
For $s\ge0$, the lower-left rectangle contributes $uv$, and the
upper-right rectangle contributes $s^2-(1-u)(1-v)$; reflection gives the
other case.  The union of cells corresponding to the finite neighbours of
a cell centre differs from $Q_p$ only in its row and column strips,
of total area at most $\rho$.  The sign-function discrepancy contributes
at most another $2\rho$.  A single grid flip changes the cost by minus
the vertex spin times its neighbour-spin sum.  Consequently, in either
canonical phase,
\begin{equation}
 \left|\frac{d_x^{\rm grid}}V-\ell(u_i,v_j)\right|\le6\rho,
 \qquad
 \ell(u,v)=|1-u-v|(1-|1-u-v|).
 \label{eq:canonical-field}
\end{equation}
At a diagonal centre the continuous field is zero, so the strict tie
convention causes no change to this bound.  Including the unary cost gives
\begin{equation}
 d_x^\sigma\ge V\ell(u_i,v_j)-6aV.
 \label{eq:field-with-unaries}
\end{equation}
If $t\in[0,1]$ and $t(1-t)\le z$, then either $t\le2z$ or
$1-t\le2z$, by considering $t\le1/2$ and $t\ge1/2$ separately.
Thus a small value of $\ell$ places the vertex in a narrow diagonal strip
or in the two corner regions.

\emph{Finite strip and corner counts.}
A rectangle with side lengths $r,t$ contains at most
\[
 (mr+1)(nt+1)\le V(r+\rho)(t+\rho)
\]
cell centres.  For $\rho\le w\le1$, counting by rows gives
$|D_w|\le(2w+\rho)V\le3wV$, and bounding each corner triangle by
a square gives $|C_w|\le8w^2V$.  Two comparable points of $D_w$, ordered
coordinatewise, have the sum of their coordinate differences at most
$2w$.  For any strip point its strip neighbours therefore lie in two
squares of side $2w$, giving at most $18w^2V$ possibilities.  Hence for
every $F\subseteq D_w\cup C_w$,
\begin{equation}
 \deg_F(x)\le26w^2V\quad(x\in D_w),\qquad
 |E(J_{m,n}[F])|\le83w^3V^2.
 \label{eq:strip-counts}
\end{equation}
For the edge bound, strip--strip, strip--corner, and corner--corner
edges contribute at most $27w^3V^2$, $24w^3V^2$, and $32w^4V^2$,
respectively.  Counting an edge more than once when regions overlap only
increases these upper bounds.  The degree assertion is made only for strip
vertices: corner vertices need not have degree of order $w^2V$.

\emph{Bootstrap from initial stability.}
For a disagreeing vertex $x\in F$, changing its colour back in the
minimizer has cost change
\[
 -d_x^\sigma-2\sum_{y\in F:\,xy\in E(J_{m,n})}s_x^\sigma s_y^\sigma
 \ge0.
\]
Thus $d_x^\sigma\le2\deg_F(x)$, and
\eqref{eq:field-with-unaries} yields
\begin{equation}
 \ell(u_i,v_j)\le6a+2\deg_F(x)/V\qquad(x\in F).
 \label{eq:local-optimality-profile}
\end{equation}
Initially $\deg_F(x)\le\mu V$.  The observation following
\eqref{eq:field-with-unaries} gives
\[
 F\subseteq D_w\cup C_w,
 \qquad w=4\mu+12a.
\]
By \eqref{eq:initial-fraction}, $\rho\le w<1/4$.  For
$x\in F\cap D_w$, \eqref{eq:strip-counts} improves
\eqref{eq:local-optimality-profile} to
$\ell\le6a+52w^2$.  In $D_w$ we have
$\ell\ge|1-u-v|/2$, so
$F\cap D_w\subseteq D_{12a+104w^2}$.  The direct row count and the
corner count now imply
\[
 \mu\le25a+216w^2
 \le25a+6912\mu^2+62208a^2.
\]
Since $\mu\le2^{-20}$ and $a\le2^{-4096}$, we have
$6912\mu^2\le\mu/2$ and $62208a^2\le a$.
Absorbing these terms proves $\mu\le52a$.
Substitution in the definition of $w$ gives $w\le220a$, establishing
the first conclusion.  No small degree bound for corner vertices has
been used in this absorption.

\emph{Exact correction for simultaneous flips.}
For every flip set $F$ and arbitrary real unary costs,
\begin{equation}
 E(c^\sigma\oplus F)-E(c^\sigma)
 =\sum_{x\in F}d_x^\sigma
  +2\sum_{\{x,y\}\in E(J_{m,n}[F])}s_x^\sigma s_y^\sigma.
 \label{eq:simultaneous-flips}
\end{equation}
Indeed, an edge with both endpoints flipped keeps its equality status,
while its two single-flip contributions sum to
$-2s_x^\sigma s_y^\sigma$.  The second sum restores exactly this
discrepancy; unary terms already add independently.

If $F\subseteq D_{220a}\cup C_{220a}$ and
$\rho\le a\le1/65536$, \eqref{eq:strip-counts} bounds the absolute
correction by
\[
 2\cdot83\cdot220^3a^3V^2<2^{32}a^3V^2.
\]
The linear sum is at least $\sum_x\min(0,d_x^\sigma)$.
This proves the third conclusion for arbitrary unary costs. Applying it to the
localized minimizer proves the lower bound in the second conclusion.

For the upper bound in a prescribed phase, take exactly
$F_\sigma^0=\{x:d_x^\sigma<0\}$.  By
\eqref{eq:field-with-unaries}, this set is contained in
$D_{12a}\cup C_{12a}$, and its cardinality is at most
\[
 (36a+1152a^2)V\le52aV.
\]
Its linear sum in \eqref{eq:simultaneous-flips} is the defining sum for
$L_\sigma$, while its correction is bounded by
$2\cdot83\cdot12^3a^3V^2<2^{32}a^3V^2$.
Choosing the phase with smaller $L_\sigma$ proves the upper bound for
the minimum.  Finally, all quantities in \eqref{eq:finite-linear-floor}
are finite sums and single-vertex comparisons.  They require no search
over colourings and can be evaluated exactly in polynomially many
operations for rational unary costs.
\end{proof}

\subsection{Freezing probes with the grid colouring fixed}

The following proposition applies to any finite simple graph. In the
application its vertices are geometric edges and its adjacency relation
is proper crossing.  Thus its conclusion includes interactions between
the probes themselves.

\begin{proposition}[Probe freezing]
\label{prop:freezing}
Let a finite simple graph have vertex partition $X,T,Z$, with induced
subgraph on $X$ equal to $J_{m,n}$.  The objective is the number of
monochromatic graph edges, and all vertices may be coloured freely.
Use $V=mn$ and $\rho=1/m+1/n$.  Suppose every vertex of $X$ has at most
$\alpha V$ neighbours outside $X$, every vertex of $T$ has at most
$\beta V$ neighbours outside $X$, and
\[
 \alpha,\beta\ge0,\qquad \alpha+\rho\le a\le2^{-4096}.
\]
For $p\in T$ and $\sigma\in\{+,-\}$ let
\[
 h_p^\sigma=\sum_{x\in N(p)\cap X}s_x^\sigma.
\]
This is its colour-1 cost minus its colour-0 cost when the grid is
canonical.  Choose $y_p^+=1$ when $h_p^+<0$ and $y_p^+=0$ otherwise,
and put $y_p^-=1-y_p^+$.  These are best responses in both phases,
including their complemented zero-field tie convention.

Put $b=104a+\beta$, and suppose that $Q\ge0$ satisfies
\begin{equation}
 |\{p\in T:|h_p^\sigma|\le bV\}|
 \le Q(b+\rho)V
 \quad\hbox{for both phases}.
 \label{eq:soft-probe-count}
\end{equation}
Then every optimal colouring $\chi$ has a phase $\sigma$ and a colouring
$\psi$ such that only vertices of $T$ are changed, every probe is coloured
$y_p^\sigma$, and
\begin{equation}
 0\le\Phi(\psi)-\Phi(\chi)
 \le E_{\rm fr}:=2Q(b+\rho)bV^2.
 \label{eq:freezing-error}
\end{equation}
The unchanged grid differs from $c^\sigma$ on at most $52aV$ vertices,
all in $D_{220a}\cup C_{220a}$.  If $\OPT_{\rm fr}$ is the minimum over
colourings with the probes in either of the two frozen phases, then
\[
 0\le\OPT_{\rm fr}-\OPT\le E_{\rm fr}.
\]
It suffices for \eqref{eq:soft-probe-count} to have the stronger estimate
$|\{p:|h_p^\sigma|\le tV\}|\le Q(t+\rho)V$ for
$0\le t\le t_0$, together with $b\le t_0$.
\end{proposition}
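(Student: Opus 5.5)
The plan is to fix an optimal colouring $\chi$, extract a phase from the grid alone using Theorem~\ref{thm:grid}, and then show that only probes with small canonical field can disagree with the frozen response, repairing those one at a time.

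\emph{Grid step.} Freeze the colours of $\chi$ on $T\cup Z$. The cost of every edge with at most one endpoint in $X$ then becomes a unary function $U_x(c_x)$ on grid vertices, plus a constant. Since each $x\in X$ has at most $\alpha V$ neighbours outside $X$, we get $|U_x(1)-U_x(0)|\le\alpha V$. The restriction $c=\chi|_X$ minimizes $E=\Phi_J+\sum_xU_x$, because otherwise $\chi$ could be improved on $X$. With $\alpha+\rho\le a\le2^{-4096}$, the first conclusion of Theorem~\ref{thm:grid} gives a phase $\sigma$ with $F=\{x:c_x\ne c^\sigma_x\}$, $|F|\le52aV$ and $F\subseteq D_{220a}\cup C_{220a}$. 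This is already the asserted description of the unchanged grid.

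\emph{Probe fields.} For $p\in T$, the colour-1 minus colour-0 cost of $p$ in the current colouring is a grid part plus a non-grid part. The grid part equals $h_p^\sigma$ corrected by $-2s^\sigma_x$ for each $x\in F\cap N(p)$, so it differs from $h_p^\sigma$ by at most $2|F|\le104aV$. The non-grid part comes from neighbours in $T\cup Z$ and has absolute value at most $\beta V$. Hence, whatever the colours on $T\cup Z$ are, the actual field of $p$ lies within $(104a+\beta)V=bV$ of $h_p^\sigma$. Two consequences follow.
\begin{enumerate}
\item If $|h_p^\sigma|>bV$, the actual field has the sign of $h_p^\sigma$. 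Single-vertex optimality of $\chi$ then forces $\chi(p)=y_p^\sigma$. This uses the definition of $y_p^\pm$ and $h_p^-=-h_p^+$, so ties occur only inside the soft set.
\item If $|h_p^\sigma|\le bV$, the actual field has absolute value at most $2bV$.
\end{enumerate}

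\emph{Repair step.} Let $P_{\rm soft}$ be the set of probes with $|h_p^\sigma|\le bV$. By \eqref{eq:soft-probe-count}, $|P_{\rm soft}|\le Q(b+\rho)V$. Recolour the probes of $P_{\rm soft}$ to $y_p^\sigma$ one at a time, never touching $X$ or $Z$. At each step the field bound above still holds, since it was uniform over the colours on $T\cup Z$; in particular, probe--probe interactions are already inside the $\beta V$ allowance. So each flip raises $\Phi$ by at most $2bV$, and the total increase is at most $2Q(b+\rho)bV^2=E_{\rm fr}$. The lower bound $0\le\Phi(\psi)-\Phi(\chi)$ is just optimality of $\chi$.

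\emph{Consequences and main obstacle.} The final colouring $\psi$ is frozen in phase $\sigma$, which gives $\OPT_{\rm fr}\le\OPT+E_{\rm fr}$; the inequality $\OPT\le\OPT_{\rm fr}$ is trivial. The closing remark follows by taking $t=b\le t_0$ in the stronger count. The point needing the most care is the uniformity of the field estimate during sequential repair. The $104aV$ shift must come only from $F$, which stays fixed because the grid is never changed, and the $\beta V$ term must bound all interactions with the changing probes and $Z$. Once that is checked, no interaction correction between simultaneously flipped probes is needed.
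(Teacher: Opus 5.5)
Your proposal is correct and follows the paper's proof essentially step for step. You condition on the colours outside $X$ to apply Theorem~\ref{thm:grid}(1), bound each probe's field within $bV$ of $h_p^\sigma$ uniformly in the colours on $T\cup Z$, use single-vertex optimality to fix the hard probes, and then recolour the at most $Q(b+\rho)V$ soft probes one at a time at cost at most $2bV$ each; the frozen-optimum sandwich and the closing remark are handled exactly as in the paper.
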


\begin{proof}
Fix the colours outside $X$ in a global optimum.  Its restriction
to $X$ minimizes the resulting grid objective.  Each conditional unary
oscillation is at most the number of outside neighbours, hence at most
$\alpha V$.  Theorem~\ref{thm:grid} gives a phase $\sigma$ and a
disagreement set $F$ with $|F|\le52aV$ and the stated support.  Keep this
grid colouring fixed for the rest of the proof.

For every probe, its field from $X$ differs from $h_p^\sigma$ by at most
$2|F|\le104aV$.  Under any colouring of all other vertices, its field
from neighbours outside $X$ has absolute value at most $\beta V$.
Thus the full field differs from $h_p^\sigma$ by at most $bV$.
If $|h_p^\sigma|>bV$, the sign of the full field is the same as the
canonical sign, whatever the outside colours are.  Single-vertex
optimality of $\chi$ therefore forces this hard probe already to have
colour $y_p^\sigma$.

Recolour each remaining, soft probe to $y_p^\sigma$, in any order.
At every step the grid remains fixed.  The outside field still has
absolute value at most $\beta V$ even after earlier probe changes, since
all probe--probe neighbours were included in that degree bound.  The
cost increase for the current recolouring is at most the magnitude of
the current full field, which is bounded by
\[
 |h_p^\sigma|+104aV+\beta V\le2bV.
\]
There are at most $Q(b+\rho)V$ soft probes by
\eqref{eq:soft-probe-count}.  Summing the sequential cost changes gives
the upper bound in \eqref{eq:freezing-error}.  The lower bound follows
because $\chi$ is a global optimum and $\psi$ is a feasible colouring.
Only probes were changed, so the same grid and its localization remain.

The frozen feasible set is a subset of the original feasible set and
contains the constructed $\psi$.  Its optimum is therefore between
$\Phi(\chi)$ and $\Phi(\psi)$, proving the final optimum bound.
\end{proof}

After freezing, the grid still has localized support. The last clause of
Theorem~\ref{thm:grid} therefore gives a lower bound for its new conditional
energy, even though the grid need no longer minimize that energy.

\section{Geometric construction and rational coordinates}\label{sec:geometry}

We construct three sets of points: originals, frames, and programs. Original--frame edges realize the comparison grid; program--frame edges are the probes to which Proposition~\ref{prop:freezing} will apply. All other edges of the complete graph remain present.

\subsection{Two convex chains}
For rational $D\geq10$, $F\geq1$, and $0<\kappa,\kappa_f\leq1/100$, set
\begin{equation}\label{eq:caps}
 P(y)=(-D-\kappa y^2,y),\qquad A(t)=(F+\kappa_f t^2,t),\qquad -1<y,t<1.
\end{equation}
Choose originals $P_j=P(y_j)$ in increasing parameter order, and frames $A_i=A(t_i)$ in decreasing parameter order, with $0\leq j<o$ and $0\leq i<B$.

This is the standard double-chain configuration; see Hurtado, Noy and
Urrutia~\cite[Section~3, p.~341]{Hurtado1999}. We verify its crossing
relations for the coordinates above.

\begin{lemma}[The base drawing]\label{lem:double-chain}
The base points are in general position. The edges $A_iP_j$ induce the strict comparison grid:
\[
A_iP_j\sim A_kP_l\quad\Longleftrightarrow\quad (i<k\ \text{and}\ j<l)\ \text{or}\ (k<i\ \text{and}\ l<j).
\]
An edge within either chain does not cross an edge joining the chains.
\end{lemma}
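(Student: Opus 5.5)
The proof is a direct computation with orientation determinants, split into three parts: general position, the crossing rule for chain-joining edges, and non-crossing of within-chain edges with joining edges. The structural facts are these. The $P$-chain lies in the half-plane $x\le -D<0$ and the $A$-chain in $x\ge F>0$. Both are strictly convex arcs: the $P$-arc bulges left, with $x$ a strictly concave function of $y$, and the $A$-arc bulges right, with $x$ strictly convex in $t$. Heights satisfy $|y|,|t|<1$, while the horizontal gap is at least $D+F\ge 11$, so every joining segment is nearly horizontal with slope of absolute value less than $2/11$.

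\emph{General position.} Three points on one parabola are never collinear, since a line meets a non-degenerate parabola in at most two points. For two points on one chain and one on the other, I will show that the line through two $P$ points, extended over $|y|<1$, stays in $x\le -D+2\kappa$ or so. Hence it cannot reach the $A$-chain at $x\ge F$. The symmetric statement holds for two $A$ points. These two facts also give the last assertion of the lemma. A segment within one chain lies, together with its supporting line restricted to the strip $|y|<1$, in a thin region near that chain. That region lies on one side of the curve, toward its convex hull interior.

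\emph{Within-chain versus joining edges.} Take $P_aP_b$ and a joining edge $A_iP_j$. A proper crossing with four distinct endpoints requires $j\notin\{a,b\}$. Since the chain is convex, $P_j$ lies on the outer side of the line $P_aP_b$, the side away from the $A$-chain, exactly when $y_j$ lies between $y_a$ and $y_b$, and on the inner side otherwise. In the first case the segment $A_iP_j$ meets the line $P_aP_b$ at a height $y^*$ very close to $y_j$, because the segment is nearly horizontal and the $P$-chain is very flat near $x=-D$. So $y^*$ lies strictly between $y_a$ and $y_b$. Then I will show the crossing point lies beyond the chord on the line; here I expect a one-variable sign computation to suffice. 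In the second case $A_iP_j$ lies entirely on the inner side. For the frame case $A_aA_b$ the argument is symmetric.

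\emph{Grid crossings.} Parametrize each joining segment $A_iP_j$ as a nearly horizontal graph $y=\lambda(x)$ over $[-D-\kappa y_j^2,\,F+\kappa_f t_i^2]$. Two such segments $A_iP_j$ and $A_kP_l$ with $i\ne k$ and $j\ne l$ share an $x$-interval containing $[-D,F]$, apart from tiny end pieces near the chains, and the difference of their height functions is affine on the common interval. Near the left end this difference has the sign of $y_j-y_l$, and near the right end the sign of $t_i-t_k$. The two segments cross exactly when these signs differ, and the end pieces are harmless because the chains are flat. Since originals are in increasing $y$ order and frames in decreasing $t$ order, the signs differ exactly when $(i<k,j<l)$ or $(k<i,l<j)$. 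If $i=k$ or $j=l$ the edges share an endpoint and are not proper crossings. The main obstacle is making the "near the end" sign claims rigorous given the curved chain ends. I would handle it by evaluating the difference of the affine height functions exactly at the endpoints $P_j$ and $P_l$, using orientation determinants $\orient(A_k,P_l,P_j)$, whose sign I will bound using $\kappa\le1/100$ and $D\ge10$.
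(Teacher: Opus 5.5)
The gap is in your within-chain versus joining-edge step, where the convexity orientation is reversed. Your general-position argument is fine. Your grid argument is also valid: it uses the affine difference of two nearly horizontal height functions on their common $x$-interval, with endpoint signs read off from determinants such as $\orient(A_k,P_l,P_j)$, and is a legitimate alternative to the paper's convex-quadrilateral argument.

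Since $x=-D-\kappa y^2$ is concave in $y$, the $P$-arc bulges to the \emph{right}, toward the frames, and the $A$-arc bulges left, toward the originals. Explicitly, for parameters $u<w<v$ the line through $P(u),P(v)$ is $x=L(y)=-D-\kappa((u+v)y-uv)$. Then $P(w)_x-L(w)=\kappa(w-u)(v-w)>0$; equivalently, $\orient(P(u),P(v),P(w))$ and $\orient(P(u),P(v),A(t))$ are both negative. Thus $P_j$ lies on the \emph{same} side of the line $P_aP_b$ as the $A$-chain exactly when $y_j$ is between $y_a$ and $y_b$, the opposite of your claim.

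Both of your cases then fail as written. If $y_j\notin[y_a,y_b]$, the segment $A_iP_j$ does cross the line $P_aP_b$, so it is not contained in one side. In your first case, your own estimate puts the intersection with the line at a height strictly between $y_a$ and $y_b$, i.e.\ on the open chord. That would be a proper crossing, so the promised step showing that the crossing point ``lies beyond the chord'' cannot be carried out.

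With the orientation corrected, the repair is short. If $y_j$ lies between $y_a$ and $y_b$, both endpoints of $A_iP_j$ are strictly on one side of the line $P_aP_b$, and there is nothing to prove. If, say, $y_j<y_a<y_b$, use the other endpoint test. The identity $\orient(P(u),P(v),A(t))=-(v-u)[D+F+\kappa_ft^2+\kappa((u+v)t-uv)]$ has bracket at least $D+F-3\kappa>0$. Hence $\orient(P_j,P_a,A_i)$ and $\orient(P_j,P_b,A_i)$ have the same sign, so $P_a,P_b$ lie strictly on one side of the line $P_jA_i$ and no crossing occurs. The case $y_j>y_b$ and the frame chords are analogous, using $\orient(A(u),A(v),P(y))=(v-u)[\cdots]$.

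This is what the paper packages as the middle point lying strictly inside the triangle spanned by the two outer same-chain points and the opposite-chain point. Note also that your observation that chord lines stay in $x\le-D+3\kappa$ does not by itself give the last assertion. The chain endpoint of a joining edge can lie on either side of the chord line.
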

\begin{proof}
For $a<b<c$, direct expansion gives
\begin{align*}
\orient(P(a),P(c),A(t))&=-(c-a)[D+F+\kappa_f t^2+\kappa((a+c)t-ac)]<0,\\
\orient(P(a),P(c),P(b))&=-\kappa(c-a)(b-a)(c-b)<0,\\
\orient(A(a),A(c),P(y))&=(c-a)[D+F+\kappa y^2+\kappa_f((a+c)y-ac)]>0,\\
\orient(A(a),A(c),A(b))&=\kappa_f(c-a)(b-a)(c-b)>0.
\end{align*}
The brackets are at least $D+F-3\kappa$ and $D+F-3\kappa_f$, respectively. These formulas cover every single-chain and mixed triple. With three points on one chain, the middle point lies strictly inside the triangle formed by the other two and any point on the other chain: the displayed chord test and the mixed tests for the two subpairs give its three side tests. This excludes an internal-chain edge crossing a joining edge. Two points on each chain form a convex quadrilateral, since the opposite chain lies strictly on one side of either same-chain chord. Its diagonals give the stated comparison relation, with the two parameter orders reversed.
\end{proof}

There are $S$ early source endpoint clusters and $S$ late clusters, each containing $W\geq2$ consecutive originals. Between them lie $D_m=512SW$ dummy originals, so $o=514SW$. Each cluster is an equally spaced interval of parameter diameter $\tau$ around a macro parameter. Early, dummy, and late parameters lie, respectively, in $(-3/4,-5/8)$, $(-1/2,1/2)$, and $(5/8,3/4)$. A matching pairs the early and late clusters; Section~\ref{sec:core} specifies it from the source permutation.

Every original chord has direction $(b-a)(-\kappa(a+b),1)$, while an original--frame vector has horizontal component at least $D+F$ and vertical component of absolute value at most two. These direction cones are separated. Programs will lie in $x<-D/2$, $|y|<2$. The crossing relations below hold for sufficiently small displacements and clone diameter $\tau$. We then choose both scales explicitly so that all relations hold simultaneously.

\subsection{Ordinary programs}
Take $8\mid B$ and $t_i=1-2(i+1)/(B+1)$. The artificial point $A_*=A(0)$ lies in the strict angular gap between the middle frame rows, as viewed from every original. At a macro point $p=P(a)$, an ordinary program has prototype
\begin{equation}\label{eq:ordinary-prototype}
r=p-\mu_o(A_*-p),\qquad \mu_o>0.
\end{equation}

\begin{lemma}[Crossings of ordinary program fans]\label{lem:ordinary-table}
For sufficiently small $\mu_o$, followed by sufficiently small $\tau$, every clone $p_s$ of the parent satisfies
\begin{equation}\label{eq:ordinary-own}
rA_k\sim p_sA_i\quad\Longleftrightarrow\quad
\begin{cases}i<k,&k<B/2,\\i>k,&k\geq B/2.\end{cases}
\end{equation}
Outside the parent block, the program fan has the neighbourhood of a virtual column inserted at that block. A core edge from a parent clone to a different macro or a dummy crosses exactly the first half of the program fans when its other endpoint is later, and exactly the last half when it is earlier. These tables hold on an open neighbourhood of the prototype.
\end{lemma}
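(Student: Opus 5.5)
The plan is to reduce every assertion to signs of the orientation determinant $\orient$ from Proposition~\ref{prop:np}, and to use the proper-crossing criterion
\[
 \orient(a,b,c)\orient(a,b,d)<0,\qquad \orient(c,d,a)\orient(c,d,b)<0 .
\]
The limits are taken in the prescribed order: $\mu_o\to0$ first, then $\tau\to0$ with $\mu_o$ fixed. For each relation the goal is to show that all four relevant orientations have nonzero limits, or a nonzero leading term in $\mu_o$. Strict sign conditions are open, so the final clause (validity on an open neighbourhood of the prototype) then follows by continuity. Since there are finitely many combinatorial types of edge pairs, it suffices to treat each type separately and take the smallest thresholds.

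\emph{Own-block fans.} Write $r=p-\mu_o(A_*-p)$, so $r$ lies on the line through $p$ and $A_*$, just beyond $p$ and on the outer side of the chain $P$.

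First take $\tau=0$, so $p_s=p$. By linearity, $\orient(p,A_i,r)=-\mu_o\orient(p,A_i,A_*)$. Its sign records whether $A_i$ is above or below $A_*$ as seen from $p$. By the choice of $t_i$ and of $A_*=A(0)$ in the middle angular gap, this is exactly the dichotomy $i<B/2$ versus $i\ge B/2$. The sign of $\orient(p,A_i,A_k)$ records the angular order of $A_i$ and $A_k$ seen from $p$, which is the order of $i$ and $k$ because the frame chain is convex and angularly monotone from every original. This can be read off from the mixed formula of Lemma~\ref{lem:double-chain} after a short computation.

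For the other pair of tests, $\orient(r,A_k,p)=\mu_o\orient(A_*,A_k,p)+O(\mu_o^2)$, which has a nonzero sign set by $k$ versus $B/2$. The sign of $\orient(r,A_k,A_i)$ tends to that of $\orient(p,A_k,A_i)$. Combining the four signs should give exactly \eqref{eq:ordinary-own}. The cases $k<B/2$ and $k\ge B/2$ are handled separately, and $i=k$ is excluded because the two segments share an endpoint.

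Now let $\tau\to0$ with $\mu_o$ fixed. The line $rA_k$ passes at distance of order $\mu_o$ from $p$, so clones at distance $O(\tau)$ lie on the same side as $p$ once $\tau\ll\mu_o$. The other three orientations are already bounded away from zero at $\tau=0$. This is why the order of limits matters.

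\emph{Other columns and core edges.} Consider an original $P_j$ outside the parent block, whose parameter is separated from $p$'s by a fixed amount. As $\mu_o\to0$, the segment $rA_k$ converges to $pA_k$, and the crossing tests for $rA_k$ against $P_jA_i$ converge to the tests for $pA_k$ against $P_jA_i$. By Lemma~\ref{lem:double-chain} these are the strict comparison relations with $p$'s parameter inserted as a virtual column. They are nondegenerate because $p\ne P_j$ and $A_k\ne A_i$, so they persist for small $\mu_o$ and $\tau$.

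For a core edge $p_sx$ with $x$ another macro point or a dummy, I would argue as follows. We have $\orient(p,x,r)=-\mu_o\orient(p,x,A_*)$, whose sign is opposite to that of $\orient(p,x,A_k)$ for every $k$, since the whole frame chain lies on one side of the chord. So the first test always succeeds. The second test compares $\orient(r,A_k,p)$, of order $\mu_o$ with sign determined by $k<B/2$ or $k\ge B/2$, against the limiting sign of $\orient(p,A_k,x)$. The latter is determined by whether $x$ is later or earlier than $p$ on the chain, up to a sign convention to be fixed by the same computation. Hence the crossing occurs exactly for the first half of the fans when $x$ is later, and exactly for the last half when $x$ is earlier. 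Clones are handled as before using $\tau\ll\mu_o$.

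\emph{Main obstacle.} The hard step is the own-block case together with the core-edge test. In both, one orientation vanishes at $\mu_o=0$, so its sign must be extracted from the first-order term in $\mu_o$, and this must be done uniformly over all $k$ and all clones. I would first compute $\frac{d}{d\mu_o}\orient(r,A_k,p)\big|_{\mu_o=0}=\orient(A_*,A_k,p)$ up to sign explicitly. Nonvanishing of this derivative is exactly the statement that $A_*$ lies strictly in an angular gap and differs from every $A_k$. I would then bound the second-order remainder and the $\tau$-perturbation by explicit multiples of $\mu_o^2$ and $\tau$, so that the later explicit choice of both scales can be verified.
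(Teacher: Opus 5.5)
Your proposal is correct and follows essentially the paper's route. The two order-$\mu_o$ determinants $\orient(p_s,A_i,r)$ and $\orient(r,A_k,p_s)$ are, up to sign, the paper's $\epsilon_i$ and $\epsilon_k$, and your core-edge argument via $\orient(p,x,r)=-\mu_o\orient(p,x,A_*)$ is the same as the paper's.

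There is one sign slip to fix. Since $\orient(r,A_k,p)$ is affine in $r$, it equals $-\mu_o\orient(A_*,A_k,p)$ exactly, so there is no $O(\mu_o^2)$ remainder to bound. With the sign as you wrote it, the combination would instead predict crossings exactly when $A_*$ separates $A_i$ from $A_k$.
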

\begin{proof}
Put $\epsilon_i=\sgn\orient(r,p_s,A_i)$. The strict central gap gives $\epsilon_i=1$ on the first half and $-1$ on the last half, uniformly for all clones after choosing $\tau$. Both endpoint tests for proper crossing are needed. When $i<k$, they require $\epsilon_i=\epsilon_k=1$; when $i>k$, they require $\epsilon_i=\epsilon_k=-1$. For $i=k$ there is a common endpoint. Here $r$ and $p_s$ lie on the same strict side of each frame chord: on $|y|\leq2$ that chord has supporting coordinate
\[
x=F+\kappa_f((t_i+t_k)y-t_it_k)\geq F-5\kappa_f>0,
\]
whereas both points lie in the left strip. This proves~\eqref{eq:ordinary-own}. In every nonparent column the nonzero macro determinants persist under displacement and cloning, giving the virtual-column prefix or suffix.

For the core table, let $w=P(b)$ with $b\ne a$. Then
$\orient(p,w,r)=-\mu_o\orient(p,w,A_*)$, so $r$ and every frame point lie on opposite sides of $pw$. The other test compares the signs of $\orient(r,A_k,p)$ and $\orient(r,A_k,w)$. The former is negative on the first half and positive on the last; at zero displacement the latter has the sign of $b-a$, and this strict sign persists. This gives the claimed halves for the actual clones. All signs have positive margins, hence hold on an open neighbourhood. No half-frame assertion is made for a short edge inside the parent cluster; its exact unary cost is retained later.
\end{proof}

\begin{figure}[tbp]
\centering
\begin{tikzpicture}[x=1cm,y=1cm,font=\small]
\begin{scope}[shift={(0,1.6)}]
\coordinate (r) at (0,0);\coordinate (p) at (.8,0);
\coordinate (q) at (.8,1.15);
\draw[gray,dotted] (0,0)--(3,0);
\draw[very thick] (p)--(q);
\foreach \y in {-1,-.5,.5,1}{\draw[dashed] (r)--(3,\y);\fill (3,\y) circle(1.4pt);}
\fill (r) rectangle ++(.06,.06);\fill (p) circle(1.8pt);\fill (q) circle(1.8pt);
\node[below left] at (r) {$r$};\node[below] at (1.05,-.48) {$P(a)$};
\node[left,font=\scriptsize] at (q) {$P(b),\ b>a$};
\node[right] at (3,0) {$A_*$};
\node[right,font=\scriptsize,align=left] at (3.05,.8) {$k<B/2$\\cross};
\node[right,font=\scriptsize,align=left] at (3.05,-.8) {$k\ge B/2$\\no cross};
\node at (1.5,1.7) {ordinary program};
\node[font=\scriptsize,align=center] at (1.5,-1.5) {a later macro endpoint:\\exactly the first half crosses};
\end{scope}
\begin{scope}[shift={(5,1.6)}]
\node at (3,1.7) {left rank program at height $y_L$};
\draw[-{Stealth}] (0,0)--(6.2,0) node[right] {$x$};
\draw[line width=3pt,gray] (.4,0)--(2,0) (3.5,0)--(5.6,0);
\fill (2.75,0) circle (2pt);
\node[above] at (2.75,.1) {$x_{L,\ell}$};
\node[font=\scriptsize] at (1.2,.45) {no crossing};
\node[font=\scriptsize] at (4.6,.45) {crossing};
\node[font=\scriptsize,below,align=center] at (1.2,-.15) {$s=\ell$, all $t=1,\ldots,W$\\maximum $T_{\ell,1}(y_L)$};
\node[font=\scriptsize,below,align=center] at (4.6,-.15) {$s=\ell+1$, all $t=1,\ldots,W$\\minimum $T_{\ell+1,W}(y_L)$};
\draw[<->] (2,.7)--(3.5,.7) node[midway,above,font=\scriptsize] {$G_{L,\ell}>0$};
\node[font=\scriptsize,align=center] at (3,-1.5) {crossing iff $x_{L,\ell}<T_{s,t}(y_L)$;\\the endpoint-side tests also hold};
\end{scope}
\end{tikzpicture}
\caption{Program fans acting on core edges. Left: the half-frame rule, with
$b>a$ in different macros. Right: $T_{s,t}(y_L)$ is the $x$-coordinate where
the chord $P(a_s)P(b_t)$ meets $y=y_L$; $s,t$ are the early/late clone ranks.
Each range covers all late ranks; the gap selects $s>\ell$.
The displacement and endpoint-side conditions are those of
Lemmas~\ref{lem:rank-table} and \ref{lem:common-scales}.
The drawing is schematic; styles indicate roles, not colours.}
\label{fig:programs}
\end{figure}

\subsection{Rank programs}
For a selected macro chord $P(a)P(b)$, $a<b$, write $d_0=b-a$ and let $a_s,b_t$, $1\leq s,t\leq W$, be its clone parameters. The selected-chord gap estimates use
\[
0<\mu_r<\frac1{100W},\qquad 0<\tau\leq\frac{\mu_rd_0}{100},\qquad
y_L=a+\mu_rd_0,\quad y_R=b-\mu_rd_0.
\]
The supporting horizontal coordinate of an actual selected chord at height $y$ is
\begin{equation}\label{eq:rank-support}
T_{s,t}(y)=-D-\kappa((a_s+b_t)y-a_sb_t).
\end{equation}

\begin{lemma}[Crossings of rank program fans]\label{lem:rank-table}
Fix all macro and dummy parameters, the selected pairing, the frame points, and $W\geq2$. There exists $\mu_0>0$, depending on these fixed data, such that for every $0<\mu_r<\mu_0$ there exists $\tau_0(\mu_r)>0$ for which the following statements hold whenever $0<\tau\leq\tau_0(\mu_r)$. We choose $\mu_0\leq1/(100W)$ small enough that $\mu_0(b-a)\leq|c-d|/4$ for every selected pair $a<b$ and every pair of distinct macro or dummy parameters $c,d$, and choose $100\tau_0(\mu_r)\leq\mu_r(b-a)$ for every selected pair.

At each internal gap $1\leq\ell<W$ there are open regions for left and right programs. Every left program fan crosses precisely the selected edges with $s>\ell$; every right fan crosses precisely those with $t\leq\ell$. The test is independent of the opposite clone rank.

The grid neighbourhood is the exact virtual column after the entire early parent block, or before the entire late parent block, independently of $\ell$. On an incident nonselected core edge to $P(w)$, every fan crosses if $a<w<b$, and none crosses otherwise.
\end{lemma}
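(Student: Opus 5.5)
We place each rank program on a horizontal line, chosen explicitly. For a left program at gap $\ell$ we take
\[
r=(x_{L,\ell},y_L),\qquad
T_{\ell+1,W}(y_L)<x_{L,\ell}<T_{\ell,1}(y_L),
\]
in the notation of Figure~\ref{fig:programs}. The right program is placed symmetrically at height $y_R$. The proof then has four steps, each reduced to orientation signs with positive margins, as in Lemmas~\ref{lem:double-chain} and~\ref{lem:ordinary-table}.

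\emph{Step 1: the gap.} Take $\mu_r<\mu_0$ and then $\tau\le\tau_0(\mu_r)$. We show that the supports $T_{s,t}(y_L)$ are strictly ordered by the early rank $s$, uniformly in the late rank $t$. By \eqref{eq:rank-support},
\[
\partial_{a_s}T=-\kappa(y_L-b_t)=\kappa(b_t-y_L)>0,
\]
of size about $\kappa d_0(1-\mu_r)$. The variation in $t$ is governed by
\[
\partial_{b_t}T=-\kappa(y_L-a_s),\qquad |y_L-a_s|\approx\mu_rd_0 .
\]
Clone spacing in $s$ is $\tau/(W-1)$ and the $t$-range is at most $\tau$. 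So each $s$-class spans an interval of width at most $\kappa\mu_rd_0\tau(1+o(1))$, while consecutive classes are separated by about $\kappa d_0\tau/(W-1)$. The hypothesis $\mu_r<1/(100W)$ makes the separation dominate, which gives the gap $G_{L,\ell}>0$. The constraint $100\tau\le\mu_rd_0$ keeps $y_L-a_s$ positive and comparable to $\mu_rd_0$.

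\emph{Step 2: crossing test for selected edges.} A fan $rA_k$ is a nearly horizontal segment at height about $y_L$ heading right. It crosses a selected chord iff $r$ lies left of the chord's support at that height, i.e.\ iff $x_{L,\ell}<T_{s,t}(y_L)$, provided the endpoint-side test also holds. That test is that $P(a_s)$ and $P(b_t)$ lie on opposite sides of the line $rA_k$. It holds because $a_s<y_L<b_t$ and the slope of $rA_k$ is $O(1/D)$; the margin is $\min(\mu_rd_0,(1-\mu_r)d_0)$ minus an error in $1/D$. Combined with Step 1, the fan crosses exactly the edges with $s>\ell$, independently of $t$. The right programs are the mirror case: they lie at $y_R$ near the late cluster, and the roles of $s,t$ exchange, giving exactly $t\le\ell$.

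\emph{Step 3: grid neighbourhood.} Viewed from the frames, $r$ lies on the chain side, with height strictly between the early parent block and the next original parameter. The orientation tests $\orient(r,P_j,A_i)$ then coincide with those of a virtual original inserted after the early block. This uses the choice $\mu_0(b-a)\le|c-d|/4$ and that $\tau$ is small, and it is independent of $\ell$ since only the horizontal position within the left strip varies. We then repeat the argument of Lemma~\ref{lem:double-chain} with a perturbed chain point; $r$ stays left of every frame chord, as in Lemma~\ref{lem:ordinary-table}.

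\emph{Step 4: nonselected incident edges.} Take an edge from a parent clone to $P(w)$. At height $y_L$ this chord, when $a<w<b$, has support far to the right of $x_{L,\ell}$, by a margin of order $\kappa|w-a|$. It meets the line $y=y_L$ between its endpoints, so every fan crosses it. If $w<a$ or $w>b$ instead, either the chord does not reach height $y_L$ between its endpoints, or its support lies to the left of $r$; in both cases no fan crosses.

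The main obstacle is Step 1: making the $s$-separation beat the $t$-spread uniformly in $\ell$. For this, we fix the order of quantifiers (first $\mu_0$, then $\mu_r$, then $\tau_0(\mu_r)$) and bound the second-order terms in $\tau$ explicitly. Openness of all regions follows because every sign has positive margin.
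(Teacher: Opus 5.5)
Your outline follows the paper's route. You steer programs at heights $y_L,y_R$ between full support ranges, get the gap \eqref{eq:rank-gaps} from monotonicity of $T_{s,t}$, and split the crossing test into the support test $x_r<T_{s,t}(y_r)$ plus an endpoint-side test. The grid is handled by a virtual column and incident edges by a support offset, with $\mu_0$, then $\mu_r$, then $\tau_0(\mu_r)$ chosen in that order. Two things need repair.

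\emph{The placement interval is reversed.} Your Step~1 shows that $T$ increases in $s$ and that $T_{\ell+1,W}(y_L)-T_{\ell,1}(y_L)=G_{L,\ell}>0$. So the interval $T_{\ell+1,W}(y_L)<x_{L,\ell}<T_{\ell,1}(y_L)$ is empty. A point with $x_r<T_{\ell,1}(y_L)$ would also cross the edge with $s=\ell$, $t=1$. You need $T_{\ell,1}(y_L)<x_{L,\ell}<T_{\ell+1,W}(y_L)$, for example the paper's midpoint.

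\emph{The substantive gap is that every margin you rely on beyond Step~1 is only of order $\mu_r$, and your justifications treat it as macroscopic.}

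\textbf{Endpoint-side test (Step 2).} The parent clones lie only about $\mu_rd_0$ below $r$. An error ``in $1/D$'' that does not scale with $\mu_r$ cannot be absorbed, since the lemma must hold for every $\mu_r<\mu_0$. The slope bound on $rA_k$ helps only together with a horizontal-offset bound proportional to the vertical gap: $|x_r-P(c)_x|\le|y_r-c|/10$ for each selected endpoint $c$, as in Lemma~\ref{lem:common-scales}. This holds because $x_r$ averages supports of nearly vertical clone chords passing within $\tau$ of $c$. It puts every direction from $r$ to a parent endpoint inside the original-chord cone, disjoint from the frame cone.

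\textbf{Parent-column ray signs (Step 3).} The same steepness fact is what makes $\epsilon_i$ constant over frame rows for the parent columns. Your appeal to ``$\tau$ small'' and the $\mu_0$ choice settles only the nonparent columns. The offset of $r$ from the cap, about $\kappa\mu_rd_0(b-y_L)$, does not shrink with $\tau$; it is small only relative to $\mu_rd_0$, by a factor $O(\kappa)$.

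\textbf{The paper's cleaner route.} It avoids both estimates with exact identities at the unsteered centre $r_0=(1-\theta)P(a)+\theta P(b)$:
\[
\orient(r_0,A,P(a))=\theta\,\orient(P(a),P(b),A)<0,\qquad
\orient(r_0,A,P(b))=(1-\theta)\,\orient(P(a),A,P(b))>0 .
\]
These margins are of order $\mu_rd_0$, and the $O(\tau)$ cloning and steering preserve them once $\tau_0(\mu_r)\ll\mu_r$.

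\textbf{Incident nonselected edges (Step 4).} The support margin is not of order $\kappa|w-a|$. At $r_0$ one has
\[
(r_0)_x-L_{a,w}(y_L)=-\kappa\mu_r(b-a)(b-w),
\]
which is affine in $w$ and vanishes at the partner $w=b$; that is exactly why the sign changes there. It is $O(\mu_r)$, so ``far to the right'' is false, and persistence needs $\tau\ll\mu_r(b-a)|b-w|$, as in \eqref{eq:partner-margin}. Your quantifier order allows this, so the step is repairable. The conclusion is that the delicate part of the lemma is these relative $O(\mu_r)$ estimates, not Step~1.
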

\begin{proof}
At the unsteered centre $r_0=(1-\theta)P(a)+\theta P(b)$, where $\theta=\mu_r$ or $1-\mu_r$,
\[
\orient(r_0,A,P(a))=\theta\orient(P(a),P(b),A)<0,\quad
\orient(r_0,A,P(b))=(1-\theta)\orient(P(a),A,P(b))>0.
\]
Cloning and sufficiently small horizontal steering preserve these opposite endpoint signs for every frame point. The remaining crossing test is exactly
\begin{equation}\label{eq:rank-cross}
rA\sim P(a_s)P(b_t)\quad\Longleftrightarrow\quad x_r<T_{s,t}(y_r).
\end{equation}
Thus this is an internal segment crossing, not merely a support-line intersection.

At $y_L$, the support increases in $s$ and decreases in $t$. The full ranges over the opposite rank satisfy
\begin{align}
G_{L,\ell}
&=T_{\ell+1,W}(y_L)-T_{\ell,1}(y_L)\notag\\
&=\kappa[(a_{\ell+1}-a_\ell)(b_W-y_L)-(y_L-a_\ell)(b_W-b_1)]
\geq\frac{\kappa\tau d_0}{2W}>0.\label{eq:rank-gaps}
\end{align}
Indeed $a_{\ell+1}-a_\ell\geq\tau/W$, $b_W-y_L\geq0.98d_0$, and $y_L-a_\ell\leq1.005\mu_rd_0<0.011d_0/W$. Thus choose
\[
x_{L,\ell}=\tfrac12(T_{\ell,1}(y_L)+T_{\ell+1,W}(y_L)).
\]
At $y_R$, reflection and rank reversal give
$T_{1,\ell}(y_R)-T_{W,\ell+1}(y_R)\geq\kappa\tau d_0/(2W)$, so
\[
x_{R,\ell}=\tfrac12(T_{W,\ell+1}(y_R)+T_{1,\ell}(y_R))
\]
gives exactly $t\leq\ell$. These prototypes tend to the unsteered centres as $\tau\to0$, retaining the endpoint tests above.

For grid neighbours, marker-to-parent directions remain in the original-chord cone, outside the entire frame cone. The sign $\epsilon_i$ in the preceding proof is now constant over frame rows, positive for an original before the marker and negative for one after it. Consequently
\[
rA_k\sim P_jA_i\quad\Longleftrightarrow\quad
\begin{cases}i<k,&P_j\text{ precedes the marker},\\ i>k,&P_j\text{ follows it}.\end{cases}
\]
The left marker is above its whole parent block and the right marker below its whole parent block. First choose $\mu_0$ using the minimum of all macro gaps, so that the displacements $\mu_r(b-a)$ do not pass another macro and all nonparent macro ray signs persist. Once $\mu_r$ is fixed, choose $\tau_0(\mu_r)$ so that cloning and steering preserve these strict signs. The two selected-chord bounds alone do not control the nonparent rays.

For a nonselected incident chord, put $L_{a,w}(y)=-D-\kappa((a+w)y-aw)$. At the left unsteered centre,
\[
(r_0)_x-L_{a,w}((r_0)_y)=-\kappa\mu_r(b-a)(b-w).
\]
The support-side test excludes $w>b$. For $w<b$, the other endpoint test excludes $w<a$, since its sign at $P(w)$ tends to the sign of $w-a$, while the near-endpoint sign is negative. Thus precisely $a<w<b$ remains. The right support difference is $\kappa\mu_r(b-a)(a-w)$; its near sign is positive and the other endpoint sign is negative precisely for $w<b$, giving the same interval. For each fixed $\mu_r$, all these support and ray signs are strict at $\tau=0$; the finite minimum of their allowed neighbourhoods gives a common positive upper bound on $\tau_0(\mu_r)$. Parent-cluster short edges and the selected partner are excluded from this last table. Lemma~\ref{lem:common-scales} gives a simultaneous rational choice, with the ray, partner, and curvature bounds in~\eqref{eq:ray-margins}--\eqref{eq:curvature-margins}.
\end{proof}

\begin{lemma}[Nonincident exclusion]\label{lem:nonincident}
A program fan does not cross a core edge whose endpoints both avoid its parent block.
\end{lemma}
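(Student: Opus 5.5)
We plan to argue through a separation that holds in the limit of zero displacement and zero clone diameter, and then use strictness of all signs to pass to the actual small parameters. The program fans are the segments $rA_k$. Here $r$ is either an ordinary program near $P(a)$, on the ray from $A_*$ through $P(a)$ and just beyond it, or a rank program near a point of a selected chord $P(a)P(b)$ close to $P(a)$ or $P(b)$. The parent block is the cluster (or clusters) of that macro point. A core edge $P(c)P(d)$ with both endpoints outside the parent block therefore has $c,d$ bounded away from the parent parameters by the macro gaps. By the choice of $\mu_0$ in Lemma~\ref{lem:rank-table}, this gap is at least four times the displacement.

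The key step is to show that the line through $P(c)$ and $P(d)$ does not separate $r$ from $A_k$. We use the orientation formulas of Lemma~\ref{lem:double-chain}.
\begin{itemize}
\item \emph{Frame side.} Every frame point lies on the negative side of every original chord, by the first displayed identity.
\item \emph{Ordinary programs, limit case.} At zero displacement $r=P(a)$ with $a\notin\{c,d\}$. Then $\orient(P(c),P(d),P(a))$ has the sign given by the second identity: it is negative when $a$ lies between $c$ and $d$, and positive otherwise.
\end{itemize}
If $a$ lies between $c$ and $d$, then $r$ and $A_k$ are on the same side of the chord, so there is no crossing. Otherwise we use the other crossing test instead. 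The segment $rA_k$ must separate $P(c)$ from $P(d)$. In the limit $r=P(a)$, the segment $P(a)A_k$ is a joining edge. By Lemma~\ref{lem:double-chain} it does not cross the internal chain edge $P(c)P(d)$. Since $c,d\neq a$, the relevant orientation signs $\orient(P(a),A_k,P(c))$ and $\orient(P(a),A_k,P(d))$ are nonzero and equal. This is the mixed test from that lemma: both $c,d$ lie on the same side of $a$ in parameter order. All these signs are strict, so they persist for $\mu_o$ and $\tau$ sufficiently small and for all clones.

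For rank programs we reduce to the same situation. The unsteered centre is $r_0=(1-\theta)P(a)+\theta P(b)$ with $\theta=\mu_r$ small, or symmetrically near $P(b)$. For a chord $P(c)P(d)$ disjoint from both parent clusters, we compute $\orient(P(c),P(d),r_0)$ as the convex combination
\[
(1-\theta)\orient(P(c),P(d),P(a))+\theta\orient(P(c),P(d),P(b)).
\]
When $\theta$ is small relative to the macro gaps, the first term dominates. This is exactly what the choice $\mu_0(b-a)\le|c-d|/4$ is designed to secure. The sign is then that of the ordinary case at $P(a)$. Likewise the signs $\orient(r_0,A_k,P(c))$ and $\orient(r_0,A_k,P(d))$ agree with those at $P(a)$ up to a relative perturbation of order $\mu_r$. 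The same two-case argument then applies: either $r$ and $A_k$ lie on the same side of the chord, or $P(c)$ and $P(d)$ lie on the same side of $rA_k$. Cloning and steering, which by Lemma~\ref{lem:rank-table} are taken small after $\mu_r$ is fixed, preserve these strict signs.

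The main obstacle is making the dominance quantitative and uniform over all $k$ and all nonparent chords. One issue is that $\orient(P(c),P(d),P(a))$ is of order $\kappa|c-d||a-c||a-d|$, which can be small when $c$ or $d$ is close to $a$. Another is that the mixed orientations can be small when $P(c)$ is nearly collinear with $r$ and $A_k$. To handle this, we will use only a finite minimum of positive margins over the finitely many fixed macro and dummy parameters, and then invoke Lemma~\ref{lem:common-scales} for a simultaneous rational choice. This avoids explicit constants here, mirroring the openness statements of Lemmas~\ref{lem:ordinary-table} and~\ref{lem:rank-table}.
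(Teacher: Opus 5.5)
Your argument is correct and has the same core as the paper's. You pass to the macro fan $P(a)A_k$, which by Lemma~\ref{lem:double-chain} misses every chain edge with nonzero determinants, and let small displacement and cloning preserve those strict signs.

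The decomposition differs.
\begin{itemize}
\item The paper separates edges joining two different nonparent blocks from short edges inside one nonparent block. It treats short edges by a distance argument, because there the chord determinants $\orient(P(c),P(d),\cdot)$ are of order $|d-c|\le\tau$ and collapse as $\tau\to0$.
\item You split instead by whether the parent parameter lies between the endpoints. In the non-straddling case you use the certificate that $P(c),P(d)$ lie strictly on one side of the line $rA_k$. Those determinants are of order $|c-a|$, independent of $|d-c|$. So the same test also disposes of short edges in nonparent clusters, and the whole proof uses only determinants.
\end{itemize}
Your write-up speaks of $P(c)P(d)$ as if it joined distinct macros, so you should state the short-edge case explicitly.

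One correction. In the straddling case for a rank program, the competing term $\theta\orient(P(c),P(d),P(b))$ must be beaten by a quadratic margin, roughly $\mu_r(b-a)\ll|c-a|\,|d-a|$ as in \eqref{eq:curvature-margins}. The linear condition $\mu_0(b-a)\le|c-d|/4$ does not secure this by itself. Your fallback to a finite minimum of strict margins does cover it.
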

\begin{proof}
For endpoints in different nonparent blocks, the fan at the parent macro is noncrossing by Lemma~\ref{lem:double-chain}, with nonzero macro determinants. Small displacement and then cloning preserve the exclusion. For a short edge in one nonparent block, the macro fan has positive distance from that block's centre. Choose the displacement and clone neighbourhood below this distance. The whole short edge remains disjoint from the fan. This argument avoids applying a strict determinant test to a collapsed zero-length edge.
\end{proof}

\subsection{Canonical response fields}
Let $N_g=Bo$, $\rho=1/B+1/o$, $u_k=(k+1/2)/B$, and $v_j=(j+1/2)/o$. The canonical grid spins are
$\sigma_{kj}=2\1_{u_k+v_j<1}-1$. The field of a fan is its grid cost in colour one minus its grid cost in colour zero. Its canonical response minimizes this cost, with colour zero at a tie. For a virtual column after $r_0$ originals, put $v=r_0/o$; its exact field is
\begin{align}
H_k&=\sum_{j<r_0}\sum_{i<k}\sigma_{ij}+\sum_{j\geq r_0}\sum_{i>k}\sigma_{ij},\notag\\
H_{k+1}-H_k&=\sum_{j<r_0}\sigma_{kj}-\sum_{j\geq r_0}\sigma_{k+1,j}.\label{eq:virtual-field}
\end{align}
There is no contribution from row $k$, whose fan shares a frame endpoint.

\begin{lemma}[Field and soft-row bounds]\label{lem:soft-fields}
Write $s_k=1-u_k-v$. Then
\begin{align}
|H_{k+1}-H_k-o(1-2|s_k|)|&\leq2+2o/B,\notag\\
|H_k/N_g+s_k(1-|s_k|)|&\leq6\rho.\label{eq:virtual-estimates}
\end{align}
A rank program has field $g_k=H_k$. For an ordinary program, choose a virtual position at a parent-block boundary. Then $|g_k-H_k|\leq2BW$, and away from the central ordinary switch the increment error is at most $2W$.

Suppose $B\geq o\geq1000$, $8\mid B$, and $W/o\leq1/514$. For either program type and $0\leq z\leq1/100$,
\begin{equation}\label{eq:soft-rows}
|\{k:|g_k|\leq zBo\}|\leq12Bz+3.
\end{equation}
An early program has response spin $+1$ on the middle $3B/4$ rows; a late program has spin $-1$ there. Each early half-frame sum is at least $B/4$ and its full sum at least $B/2$; the late sums have the opposite bounds. For $s$ programs the total soft count is at most $Q(z+\rho)N_g$, where $Q=12s/o$.
\end{lemma}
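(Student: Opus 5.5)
The plan is to work directly with the exact sums in \eqref{eq:virtual-field}, comparing each row sum of canonical spins with its continuous counterpart, and then to read off all later claims from the resulting increment and value estimates.

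\emph{Increment estimate.} For fixed row $k$, $\sum_{j<r_0}\sigma_{kj}$ counts the columns $j<r_0$ with $v_j<1-u_k$ positively and the others negatively. It therefore equals $o\bigl(2\min(v,(1-u_k)^+)-v\bigr)$ up to an additive error $1$ from rounding the count of centres. Similarly, $\sum_{j\ge r_0}\sigma_{k+1,j}$ equals $o\bigl(2((1-u_{k+1})-v)^+-(1-v)\bigr)$ up to an error $1$. Replacing $u_{k+1}$ by $u_k$ moves the threshold by $1/B$ and costs at most $2o/B$. Checking the two cases $s_k\ge0$ and $s_k<0$, the difference is $o(1-2|s_k|)$; this gives the first bound.

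\emph{Value estimate.} $H_k/N_g$ is a Riemann sum of $f_*$ over the union $Q_p$ of the lower-left and upper-right comparable rectangles at $p=(u_k,v)$, with the excluded row $k$ and the column cut at $v$ contributing strips. The direct integral in the proof of Theorem~\ref{thm:grid} gives $\int_{Q_p}f_*=-s_k(1-|s_k|)$. The cell discrepancy, the band where $f_{c^+}\ne f_*$, and the strips contribute at most $6\rho$, exactly as in \eqref{eq:canonical-field}. The rank case $g_k=H_k$ is Lemma~\ref{lem:rank-table}. For an ordinary fan, Lemma~\ref{lem:ordinary-table} says its neighbourhood agrees with the virtual column outside the parent block of $W$ columns, where the parent rule uses the half switch instead. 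Each of these $W$ columns changes at most $B$ terms by at most $2$, giving $|g_k-H_k|\le2BW$. Away from the central switch row, only the two row-$k$ and row-$(k+1)$ entries per parent column change between consecutive fields, giving increment error at most $2W$.

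\emph{Soft rows.} This is the step I expect to be delicate. If $|g_k|\le zBo$, the value estimate and $2BW/N_g=2W/o\le 1/257$ force $|s_k|(1-|s_k|)\le z+6\rho+2W/o$. This quantity is small, so $|s_k|$ lies near $0$ or near $1$.
\begin{itemize}
\item Near $s_k=0$, the increment is at least $o(1-2|s_k|)-2-2o/B-2W\ge o/3$, using $B\ge o\ge1000$ and $W\le o/514$. Thus $g_k$ is strictly increasing there, with slope at least $o/3$. It passes through $[-zBo,zBo]$ in at most $6zB+1$ rows, plus possibly one extra row at the central ordinary switch.
\item Near $|s_k|=1$, the increment is at most $-o/3$. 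This gives a second run of at most $6zB+1$ rows.
\item Because $v$ is fixed, $s_k$ is monotone in $k$, so these are the only runs.
\end{itemize}
Adding the two runs gives $12Bz+3$. The main care is in showing that no other rows qualify, and that the switch and boundary rows contribute only the $+3$.

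\emph{Response spins and totals.} Program virtual positions come from early or late clusters, so $v\le 1/514$ or $v\ge 513/514$. On the middle $3B/4$ rows, $u_k\in[1/8,7/8]$. In the early case this gives $s_k\in[0.12,0.875]$, so $g_k\approx-N_gs_k(1-s_k)$ is negative with margin exceeding the errors. The fan field is cost in colour one minus cost in colour zero, so the response is colour $1$, i.e.\ spin $+1$; the late case is symmetric. On the remaining $B/4$ rows the spin is at worst $-1$. Hence the full spin sum is at least $3B/4-B/4=B/2$. Since the middle rows are split evenly between the halves, each half-frame sum is at least $3B/8-B/8=B/4$. Finally, $12B\rho\ge12B/o\ge12\ge3$, so $12Bz+3\le12B(z+\rho)$. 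Summing over $s$ programs gives $12sB(z+\rho)=Q(z+\rho)N_g$ with $Q=12s/o$.
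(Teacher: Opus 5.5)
Your route is the paper's: threshold counting for the increments, the rectangle integral $-s_k(1-|s_k|)$ for the values, a comparison on the $W$ parent columns for ordinary programs, and the middle-rows sign argument for the responses. Counting two monotone runs with slope $o/3$, instead of the paper's three regions with slope $o/2$, is an inessential variation.

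\textbf{The flawed step.} The claim ``plus possibly one extra row at the central ordinary switch'' does not hold as stated. At the switch row $k=B/2$ the parent-column rule changes from $i<k$ to $i>k$. There the increment of $g_k$ differs from that of $H_k$ not by $2W$ but by the difference of $W$ full prefix and suffix sums, roughly $1+2Bv_j$ per early parent column. Nothing in your argument bounds this by $o/3$, and in the construction, where $B\gg o$, it is much larger. If the switch lay inside a run, monotonicity would break there. The run could then meet the window $[-zBo,zBo]$ on both sides of the break, giving up to $12zB+2$ rows from that run alone, not one extra row. Your slope bound $o(1-2|s_k|)-2-2o/B-2W\ge o/3$ inside the runs already uses the away-from-switch error $2W$, so you need the runs to avoid the switch in any case.

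\textbf{The repair.} The paper's observation fixes this. Early programs have $v\le 1/514$ and late ones $v\ge 513/514$; you only invoke this later, for the response spins. Soft rows satisfy $|s_k|\le 0.052$ or $|s_k|\ge 0.948$, so this bound on $v$ forces $u_k\le 0.06$ or $u_k\ge 0.94$. Hence neither run contains the switch near $u=1/2$, and your count gives $12Bz+2\le 12Bz+3$ with no switch correction. The same bound on $v$ (or simply that $s_k$ ranges over an interval of length less than one) also shows that only one of $s_k\ge0.948$ and $s_k\le-0.948$ can occur. Your phrase ``$s_k$ is monotone in $k$'' alone does not give this.
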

\begin{proof}
For $f_u(y)=2\1_{y<1-u}-1$, any midpoint sum over a prefix or suffix of column cells differs from $o$ times the integral by at most one: only the threshold cell matters, including a midpoint tie. Also changing $u$ by $1/B$ changes a suffix integral by at most $2/B$. Since
\[
\int_0^v f_u(y)\,dy-\int_v^1 f_u(y)\,dy=1-2|1-u-v|,
\]
the first estimate follows from~\eqref{eq:virtual-field}. The two rectangles $[0,u_k]\times[0,v]$ and $[u_k,1]\times[v,1]$ differ from the actual cell union only in row $k$, of area $1/B$. The cellwise canonical function differs from $2\1_{x+y<1}-1$ in $L^1$ by at most $2\rho$. Integrating the latter function over the rectangles gives $-s_k(1-|s_k|)$, proving even the bound $3\rho$. Only the $W$ parent columns differ for an ordinary program. Each prefix or suffix has absolute sum at most $B$ and changes by at most one away from its switch. This proves its two error bounds.

For early programs $v\leq1/514$, and for late programs $v\geq1-1/514$. A soft row satisfies
\[
|s_k|(1-|s_k|)\leq z+2W/o+6\rho<0.042<7/64.
\]
It therefore belongs to one of $|s_k|\leq1/8$, $s_k\geq7/8$, or $s_k\leq-7/8$. For early ranks these lie in $u\leq1/8$ or $u\geq3/4$; for late ranks in $u\leq1/4$ or $u\geq7/8$. Each lies away from the ordinary switch. On each interval the main increment has constant sign and magnitude at least $3o/4$. Even the weaker error $6+2W$ leaves magnitude at least $o/2$. A field interval of length $2zBo$ contains at most $4Bz+1$ such rows, proving~\eqref{eq:soft-rows}.

On $u_k\in[1/8,7/8]$, an early rank has $1/9<s_k<8/9$, so $s_k(1-s_k)\geq8/81$. The error $2/514+12/o<0.016$ cannot change the negative field sign, hence the response colour is one. The late case reverses the sign. Each half-frame has $3B/8$ controlled and at most $B/8$ other rows, giving the claimed sums. Finally $12sBz+3s\leq(12s/o)(z+\rho)N_g$.
\end{proof}

A frozen fan of colour $c$ contributes $2c-1$ to the core unary coefficient. Early ordinary halves thus contribute positively, and late halves negatively. Early rank programs add positive suffix steps; late ones add negative prefixes whose disappearance is a positive step. Their full weights lie in $[B/2,B]\subseteq[B/4,B]$. All internal gaps at an endpoint have the same response sequence; copies in one open prototype region retain it, independently of other programs.

\subsection{All remaining edge classes}
Partition the complete graph into $X$ (original--frame), $T_f$ (program--frame), $C$ (original--original), $Y$ (program--original or program--program), and $I$ (frame--frame).
\begin{lemma}[Convex separation]\label{lem:separation}
The programs can all be placed strictly inside $\mathcal H=\{(x,y):\phi(x,y)\leq0\}$, where $\phi(x,y)=x+D+\kappa y^2$. Every open $X$ segment lies outside $\mathcal H$. Consequently $X$ has no crossing with $C$ or $Y$. The frame class $I$ has no crossing-graph neighbour outside itself.
\end{lemma}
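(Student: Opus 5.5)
The plan is to argue entirely with the convex function $\phi(x,y)=x+D+\kappa y^2$, which is zero exactly on the parabola carrying the originals. Since $\phi$ is convex, $\mathcal H$ is convex. It is the closed region to the left of the original chain, and it contains every original point on its boundary. Programs are built as small displacements of points of $\mathcal H$. Ordinary prototypes are $r=p-\mu_o(A_*-p)$, which moves $p$ horizontally leftward by at least $\mu_o(D+F)$, so $\phi(r)<0$ strictly. Rank prototypes lie on the open chord between two originals, and strict convexity of $y\mapsto\kappa y^2$ gives $\phi<0$ there. The steering and cloning perturbations are chosen small after these strict margins are fixed. Hence all programs can be placed in the interior of $\mathcal H$; this is only a further open condition to add to the choices of Lemmas~\ref{lem:ordinary-table} and~\ref{lem:rank-table}.

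Next I would show that every open $X$ segment avoids $\mathcal H$. Parametrize $A_iP_j$ as $P+\lambda(A-P)$ with $\lambda\in(0,1]$. Its $x$-coordinate increases by $\lambda(D+F+\kappa_f t^2+\kappa y^2)\ge\lambda(D+F)$, and its $y$-coordinate changes by at most $2\lambda$. So $\phi$ along the segment is
\[
\lambda\bigl(D+F+\kappa_ft^2+\kappa y^2\bigr)+\kappa\bigl((y+\lambda\delta)^2-y^2\bigr),\qquad |\delta|\le2 .
\]
The second term is at least $-\kappa(4\lambda+4\lambda^2)\ge-8\kappa\lambda$, so $\phi\ge\lambda(D+F-8\kappa)>0$ for $\lambda>0$. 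This is the step needing care, since near $P_j$ the segment comes arbitrarily close to $\partial\mathcal H$. The linear-in-$\lambda$ estimate resolves it, and it is really the direction-cone separation noted after Lemma~\ref{lem:double-chain}.

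Consequences. Every edge of $C$ or $Y$ joins two points of $\mathcal H$, so by convexity it lies in $\mathcal H$. A proper crossing with an $X$ edge would need a common point of two open segments. That point would lie in $\mathcal H$ and on an open $X$ segment, which is impossible. For the $I$ class, crossings with $X$ are excluded by Lemma~\ref{lem:double-chain}. For $C$, $Y$ and $T_f$, I would use a second convex region $\{x\ge F/2\}$. It contains every frame point, hence every $I$ edge. An $RA$ edge meets it only in a terminal subsegment, and inside the disc of frame points it lies outside the frame convex hull: a frame chord's supporting line lies at $x\ge F-5\kappa_f$, and both the program and the frame endpoint lie on the same strict side, as in the proof of Lemma~\ref{lem:ordinary-table}. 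This shows that an $RA$ edge cannot separate two frame points in the crossing sense. $C$ and $Y$ edges lie in $x<-D/2$, so they are disjoint from $I$. Hence $I$ has no outside neighbours. The main obstacle I expect is making the $T_f$--$I$ argument rigorous. I would phrase it as a sign test: $\orient(A_i,A_k,r)$ has the same sign as $\orient(A_i,A_k,P)$ for every frame chord, because $r$ is in the left strip.
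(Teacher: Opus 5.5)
Your argument for the first three assertions follows the paper's: convexity of $\phi$, the expansion of $\phi$ along $P+\lambda(A-P)$ (your expression equals the paper's exact formula), and convexity of $\mathcal H$ for the $X$--$(C\cup Y)$ exclusion. The genuine gap is in the $T_f$--$I$ exclusion. Take a fan $rA_m$ and a frame chord $A_iA_k$ with $m\notin\{i,k\}$. Your sign test shows only that $r$ lies on the same side of the line $A_iA_k$ as the originals. By Lemma~\ref{lem:double-chain}, $A_m$ lies on that side only when $A_m$ is between $A_i$ and $A_k$ on the frame chain. When $A_m$ is outside that arc, $r$ and $A_m$ are strictly separated by the line $A_iA_k$; this happens, for instance, for every chain edge $A_iA_{i+1}$ with $m\notin\{i,i+1\}$. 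In that case your endpoint test is satisfied and excludes nothing. You still have to show that $A_i$ and $A_k$ lie on the same side of the line $rA_m$, and the proposal never addresses this second test. For the same reason, the claim that the fan ``lies outside the frame convex hull'' is not justified. The left-strip argument controls only the part of the fan with $x<F-5\kappa_f$, not the terminal piece near $A_m$, which is exactly where a crossing could occur.

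The paper closes this by repeating your own device on the frame side. Put $\psi(x,y)=F+\kappa_fy^2-x$. The convex set $\{\psi\le0\}$ contains every $I$ segment, and for any nonframe point $r$, so that $x_r\le-D$ and $|y_r|\le2$,
\[
\psi\bigl((1-\lambda)A(t)+\lambda r\bigr)=\lambda\bigl[F-x_r+\kappa_f(2ty_r-t^2)\bigr]+\kappa_f\lambda^2(y_r-t)^2>0\qquad(0<\lambda\le1).
\]
Hence the open fan misses every $I$ segment. Taking $r$ to be an original gives the $X$--$I$ exclusion as well.

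Alternatively, you can complete your sign test. Seen from $A_m$, the directions to $r$ and to an original $P$ lie in one near-horizontal convex cone, while the directions to other frame points are nearly vertical. So $\orient(r,A_m,A_i)$ and $\orient(P,A_m,A_i)$ agree for every $i$. Together with your first test, the noncrossing of $PA_m$ and $A_iA_k$ from Lemma~\ref{lem:double-chain} then transfers to $rA_m$.

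A minor slip: for the ordinary prototype you account only for the horizontal shift, but $r$ also moves vertically from $a$ to $(1+\mu_o)a$. The same expansion with $\lambda=-\mu_o$ gives $\phi(r)=-\mu_o\bigl(D+F-\kappa a^2(1+\mu_o)\bigr)<0$, so your conclusion survives.
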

\begin{proof}
The set $\mathcal H$ is convex. For $z_\lambda=(1-\lambda)P(a)+\lambda A(t)$,
\begin{equation}\label{eq:cap-ray}
\phi(z_\lambda)=\lambda[D+F+\kappa_ft^2+\kappa(2at-a^2)]+\kappa\lambda^2(t-a)^2.
\end{equation}
Its bracket is positive, so every open grid segment is outside. Substituting $\lambda=-\mu_o$ gives, for $0<\mu_o\leq1$,
$\phi(r)\leq-\mu_o(D+F-7\kappa)<0$ at an ordinary prototype. At a rank height, $a_s<y_r<b_t$ and
\[
\phi(T_{s,t}(y_r),y_r)=-\kappa(y_r-a_s)(b_t-y_r)<0.
\]
Its steered horizontal coordinate is a midpoint of two such supports; $\phi$ is affine in $x$, so the actual prototype is inside. If $\eta_{\rm cap}=\min_r(-\phi(r))>0$, perturbation by at most $\delta\leq\min(1/4,\eta_{\rm cap}/8)$ in both coordinates changes $\phi$ by less than $2\delta$ on $|y|\leq2$. Hence every final program remains inside. All $C,Y$ endpoints lie in $\mathcal H$, so convexity excludes their crossing with $X$.

For the frame component put $\psi(x,y)=F+\kappa_fy^2-x$ and use $\mathcal H_f=\{(x,y):\psi(x,y)\leq0\}$. It contains all $I$ segments. Every nonframe point $r$ has $x_r\leq-D$, $|y_r|\leq2$, and along a frame fan
\[
\psi((1-\lambda)A(t)+\lambda r)
=\lambda[F-x_r+\kappa_f(2ty_r-t^2)]+\kappa_f\lambda^2(y_r-t)^2>0
\]
for $0<\lambda\leq1$, since the bracket is at least $D+F-5\kappa_f$. Thus no fan crosses $I$. The remaining $C,Y$ segments lie in $x\leq-D$, disjoint from $\mathcal H_f\subseteq\{x\geq F\}$.
\end{proof}

The isolated component $I$ contributes $Z(B)$ to the optimum. Besides $I$--$I$, the only possible crossing types are
\[
X\!:\!X,\quad X\!:\!T_f,\quad T_f\!:\!T_f,\quad
C\!:\!C,\quad C\!:\!T_f,\quad Y\!:\!T_f,\quad Y\!:\!C,\quad Y\!:\!Y.
\]
All interactions involving multiple programs are retained. Let $G_0$ be the crossing graph after deleting $Y$, and set
\begin{equation}\label{eq:nonfan-budget}
p=Bs,\qquad c=\binom o2,\qquad h=os+\binom s2,\qquad
R_Y=h(p+c)+\binom h2.
\end{equation}
Deletion removes only the last three crossing types, so
\begin{equation}\label{eq:nonfan-delete}
\OPT(G_0)\leq\OPT(\mathcal P)\leq\OPT(G_0)+R_Y.
\end{equation}
The upper bound extends any colouring of $G_0$ by arbitrary colours on $Y$, charging at most one per deleted constraint.

\subsection{A deterministic rational realization}\label{subsec:realization}
We now choose rational scales for which all the preceding crossing tables hold.

Fix $D=10$, $F=1$, and $\kappa=\kappa_f=1/100$. Having fixed the integer counts and source pairing, choose the early macros, physically ordered late macros, dummies, and frames by
\begin{equation}\label{eq:base-parameters}
a_j=-\tfrac34+\frac{j}{8(S+1)},\quad
b_l=\tfrac58+\frac{l}{8(S+1)},\quad
z_v=-\tfrac14+\frac{v}{2(D_m+1)},\quad
t_i=1-\frac{2(i+1)}{B+1}.
\end{equation}
Here $1\leq j,l\leq S$, $1\leq v\leq D_m$, and $0\leq i<B$. Let $\mathcal A$ be the set of early and late macro parameters and all dummy parameters; frame parameters are not included. Define
\begin{equation}\label{eq:explicit-scales}
\eta_{\rm sep}=\min\bigl(\{1\}\cup\{|a-c|:a,c\in\mathcal A,\ a\ne c\}\bigr),\quad
\mu=\frac{\eta_{\rm sep}^2}{10000(W+1)},\quad
\tau=\frac{\mu\eta_{\rm sep}^2}{1000(B+1)}.
\end{equation}
For $W\geq2$, use $\mu_o=\mu_r=\mu$ and clone parameters
$a_r=a+\tau\vartheta_r$, where $\vartheta_r=(2r-W-1)/(2(W-1))$.
There are $2SW$ candidate prototypes: $2S$ ordinary prototypes and $2S(W-1)$ internal rank prototypes, including those assigned zero copies later. The rank prototypes use the midpoint formulas of Lemma~\ref{lem:rank-table}.

\begin{lemma}[Simultaneous quantitative margins]\label{lem:common-scales}
The choices~\eqref{eq:explicit-scales} give all the crossing relations above for the actual clones and every candidate program. In particular no choice depends on later program multiplicities.
\end{lemma}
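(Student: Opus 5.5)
The plan is to replace every qualitative ``sufficiently small'' in Lemmas~\ref{lem:double-chain}--\ref{lem:nonincident} and \ref{lem:separation} by an explicit lower bound on the absolute value of the relevant orientation determinant at the prototype configuration. We then check that the perturbations produced by $\mu$, by $\tau$ and by steering move each determinant by less than that margin. All points lie in the box $|x|\le 12$, $|y|\le 2$. Each orientation determinant is therefore a polynomial in the coordinates with Lipschitz constant at most an absolute constant, say $50$. It suffices to show that every strict sign used in those proofs has margin exceeding $50$ times the total displacement of the three points involved.

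We organise the margins in three layers, matching the order in which the qualitative proofs chose their scales.

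\emph{Base layer.} At $\mu=\tau=0$ we bound the macro determinants from the displayed formulas in Lemma~\ref{lem:double-chain}. The brackets are at least $D+F-3\kappa\ge 10$. The factors $c-a$, $b-a$, $c-b$ are at least $\eta_{\rm sep}$, so chain and mixed tests have margin at least $\kappa\eta_{\rm sep}^3$ or $10\eta_{\rm sep}$. The central gap at $A_*$ has angular margin of order $1/B$. This follows because $A_*=A(0)$ lies between $t_{B/2-1}$ and $t_{B/2}$, at distance at least $1/(B+1)$ from each.

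\emph{Program layer.} At scale $\mu$ we write the ordinary and rank test quantities, for example $\orient(p,w,r)=-\mu_o\orient(p,w,A_*)$ and the support differences $\kappa\mu(b-a)(b-w)$. Each is exactly linear in $\mu$ with coefficient bounded below by the base margins, plus a remainder of order $\mu^2$. The nonparent ray signs retain base margins of order $\eta_{\rm sep}^2$. Moving by $\mu d_0\le\mu\cdot 2$ costs at most $100\mu\ll\eta_{\rm sep}^2$, which is guaranteed by $\mu\le\eta_{\rm sep}^2/10^4$. The condition $\mu_0\le1/(100W)$ and the requirement $\mu_0(b-a)\le|c-d|/4$ in Lemma~\ref{lem:rank-table} also follow from the factor $W+1$ and $\eta_{\rm sep}\le1$.

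\emph{Clone layer.} We bound the displacement under cloning and steering by $\tau$ times a constant. Steering of the rank prototypes is by $O(\kappa\tau)$, since the midpoints $x_{L,\ell}$ differ from the unsteered centre by supports differing by $O(\kappa\tau)$. We then require $\tau$ to be below all program-layer margins, which are of order $\mu\eta_{\rm sep}^2$ and $\mu/B$. The extra factor $1/(B+1)$ in $\tau$ is designed to cover the frame-row margins of order $\mu/B$.

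The rank-gap claim \eqref{eq:rank-gaps} needs no perturbation argument. It is an exact algebraic identity in $\tau$, and it requires only $\tau\le\mu d_0/100$, which holds since $d_0\ge\eta_{\rm sep}$. Because every bound depends only on $\eta_{\rm sep}$, $W$ and $B$, which are fixed before multiplicities are assigned, independence from later program multiplicities is immediate. Lemma~\ref{lem:separation} then uses the resulting $\eta_{\rm cap}$, which is at least of order $\kappa\mu^2\eta_{\rm sep}^2$.

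The main obstacle is the set of tests whose margin is itself $O(\mu)$ or $O(\tau)$, rather than of base size. These are the rank tests comparing $x_r$ with $T_{s,t}(y_r)$, with margin $\kappa\tau d_0/(4W)$, and the endpoint-side tests near $P(a)$, with margin $\mu$ times base. For these we cannot use a crude Lipschitz bound. Instead we compute the test exactly to first order in the small parameter and bound the second-order remainder explicitly. For the rank tests, the remainder comes from the clone offset in the $a_s$ and $b_t$ within each $T_{s,t}$, which is already exact. For the endpoint tests near $P(a)$, the remainder is of order $\tau\cdot\mu^{-1}$ relative to the margin, and it is small by the choice of $\tau$.
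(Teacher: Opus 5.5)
Your architecture is the paper's: explicit margins at the prototypes, exact linearity in $\mu$ of the parent-related tests, the exact rank-gap identity, and the factor $1/(B+1)$ in $\tau$ protecting the central frame rows. The gap is your uniform criterion that a raw determinant margin exceed $50$ times the displacement, combined with bounding every parameter factor below by $\eta_{\rm sep}$.

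This criterion fails for the nonincident exclusion of Lemma~\ref{lem:nonincident} when the core chord $P(z)P(v)$ straddles the parent, $z<a<v$. There the certificate is that $r$ and $A_k$ lie on the same side of the chord. Your raw chain margin is $\kappa|v-z||a-z||a-v|\ge\kappa\eta_{\rm sep}^3$. It would have to beat $50\cdot 11\mu$, since the ordinary displacement is $\mu|A_*-P(a)|\approx11\mu$, or about $50\cdot\mu d_0$ for a rank program. With $\mu=\eta_{\rm sep}^2/(10^4(W+1))$ this forces $\eta_{\rm sep}(W+1)$ to be of order at least one. But $\eta_{\rm sep}=1/(2(D_m+1))$ with $D_m=512SW$ gives $\eta_{\rm sep}(W+1)<1/600$. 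Your comparison of $100\mu$ with margins ``of order $\eta_{\rm sep}^2$'' does not cover this test, because the raw margin is cubic.

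The missing step is to divide out the chord. Use the support offset $r_x-L_{z,v}(r_y)$. At $r=P(a)$ it equals the curvature offset $\kappa|a-z||a-v|\ge\kappa\eta_{\rm sep}^2/4$. Since its gradient is $(1,\kappa(z+v))$, it moves by only about the displacement of $r$. Equivalently, bound the gradient of $\orient(P(z),P(v),\cdot)$ by $|P(v)-P(z)|$ rather than by an absolute constant. This gives exactly the paper's conditions $1200\mu<|z-a||v-a|$ and $200\mu(b-a)<|z-a||v-a|$ in \eqref{eq:curvature-margins}, which follow from $4800\mu<\eta_{\rm sep}^2$.

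Two smaller corrections. First, not all factors are at least $\eta_{\rm sep}$: two clones of one macro differ by $O(\tau)$. At your base layer $\tau=0$, a short edge in a nonparent cluster therefore has length zero and no strict chord-side test. Certify its exclusion instead by equal fan-ray signs at its two endpoints, with macro margins $10|z-a|$; this is how the paper avoids the zero-length limit. Second, \eqref{eq:rank-gaps} does not follow from $100\tau\le\mu d_0$ alone. The negative term $(y_L-a_\ell)(b_W-b_1)$ is dominated only because $y_L-a_\ell\le1.005\mu d_0<0.011d_0/W$, which uses $100W\mu<1$. You have that inequality from the factor $W+1$, but it must be invoked at this point.
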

\begin{proof}
The set $\mathcal A$ is finite with distinct parameters, so $0<\eta_{\rm sep}\leq1$. Direct substitution gives
\begin{gather}
0<\mu\leq1/10000,\quad 100W\mu<1,\quad
10000\mu\leq\eta_{\rm sep}^2,\quad 4800\mu<\eta_{\rm sep}^2,\notag\\
0<\tau\leq\eta_{\rm sep}/1000,\quad 1000\tau\leq\mu\eta_{\rm sep}^2,\quad
3(B+1)\tau/2\leq\mu,\quad
100\tau\leq\mu(b-a)
\label{eq:scale-margins}
\end{gather}
for any selected $a<b$, since $b-a\geq\eta_{\rm sep}$. Every clone remains in $(-1,1)$ and within $\tau/2$ of its macro. A nonparent original parameter $z$ remains on the same side of the parent macro $a$ as its own macro and obeys $|z-a|\geq\eta_{\rm sep}/2$. Consequently
\begin{equation}\label{eq:ray-margins}
5\mu\leq|z-a|,\qquad 2\mu(b-a)\leq|z-a|.
\end{equation}
If $z$ also avoids the selected partner macro $b$, then $|b-z|\geq\eta_{\rm sep}/2$ and
\begin{equation}\label{eq:partner-margin}
20\tau\leq\mu(b-a)|b-z|.
\end{equation}
For two nonparent actual parameters $z,v$, even when they are clones of the same macro, their distances to $a$ are each at least $\eta_{\rm sep}/2$. Thus
\begin{equation}\label{eq:curvature-margins}
1200\mu<|z-a||v-a|,\qquad
200\mu(b-a)<|z-a||v-a|.
\end{equation}
The second inequality uses $b-a\leq2$.

We check both endpoint-side tests for each crossing. In the ordinary own-column test, the unperturbed parent determinant has magnitude at least $10\mu|t_i|\geq10\mu/(B+1)$. Moving its original parameter from $a$ to a clone $c$ changes it by at most $25|c-a|$. The bound $3(B+1)|c-a|\leq\mu$ therefore preserves its sign. To obtain the constant 25, expand the determinant difference:
\[
\orient(r,P(c),A(t))-\orient(r,P(a),A(t))
=(c-a)\bigl(r_x-A(t)_x-(a+c)(t-r_y)/100\bigr);
\]
here $|r_x|\leq22$, $|r_y|\leq2$. For a nonparent ray, ordinary displacement changes the determinant by at most $36\mu$, while its macro magnitude is at least $10|z-a|$. Inequality~\eqref{eq:ray-margins} preserves that sign. These bounds establish the full ordinary grid table, including its central switch.

For a core support $L_{c,z}(y)=-10-((c+z)y-cz)/100$, the ordinary offset is exactly
\[
r_x-L_{c,z}(r_y)
=-\frac{(a-c)(a-z)}{100}-\mu C(a,c,z),
\qquad C(a,c,z)=11+\frac{a^2-ac-az}{100},
\]
with $10\leq C(a,c,z)\leq12$ for parameters in $[-1,1]$.
When $c$ is a parent clone, the first term has absolute value at most $\tau/100$; the second is strictly negative and dominates it. Combining with the parent and nonparent ray signs gives the ordinary core half-frame table. For two nonparent endpoints, the product margin~\eqref{eq:curvature-margins} preserves the macro support sign when the parent lies between them; if they are on one side, both ray signs agree. In either case there is no crossing. This also proves nonincidence for a short edge in a different cluster without taking a zero-length limit.

For rank programs, the last inequality of~\eqref{eq:scale-margins} and $100W\mu<1$ are exactly the full-range gap hypotheses in~\eqref{eq:rank-gaps}. The midpoint has horizontal steering from its unsteered macro-chord support at most $\tau/50$, obtained by expanding~\eqref{eq:rank-support} with endpoint deviations at most $\tau/2$. Its $L^\infty$ displacement from its parent is at most $\mu(b-a)$. For either actual selected endpoint $c$, the height gap is at least $\tau$ and
\[
|x_r-P(c)_x|\leq |y_r-c|/10.
\]
Indeed for a support through parameters $u,v$,
\[
L_{u,v}(y)-P(c)_x
=\bigl((c-u)(c-v)-(u+v)(y-c)\bigr)/100;
\]
one endpoint differs from $c$ by at most $\tau$, $|c-u|,|c-v|,|u+v|\leq2$, and averaging preserves the bound. The horizontal distance to every frame point is at least 11, while its vertical distance is at most two. The endpoint determinant thus has the strict sign prescribed by whether $c<y_r$ or $c>y_r$. This proves the endpoint-side test in~\eqref{eq:rank-cross} uniformly over both ranks.

A rank displacement bounded by $\delta_r=\mu(b-a)$ changes a nonparent ray determinant by at most $14\delta_r$, compared with macro magnitude at least $10|z-a|$; use~\eqref{eq:ray-margins}. For a nonselected incident edge from parent clone $c$ to actual $z$, the left support offset has the explicit error bound
\[
\left|r_x-L_{c,z}(r_y)+\frac{\mu(b-a)(b-z)}{100}\right|
\leq\frac{3\tau}{100}.
\]
It is the steering error $\tau/50$ plus clone error at most $\tau/100$. Inequality~\eqref{eq:partner-margin} preserves its sign, so the ray and support tests together give exactly $a<z<b$. The right statement follows by reflection. Finally any rank displacement of size $\delta_r$ changes a nonincident chord offset by at most $2\delta_r$, whereas its initial curvature offset is $|(a-z)(a-v)|/100$. Inequality~\eqref{eq:curvature-margins} again preserves the sign. Endpoints on the same side give equal ray signs, and endpoints straddling the parent put program and frame on the same side of their chord. This proves rank nonincidence in all cases.

The strict inwardness formulas in Lemma~\ref{lem:separation} apply directly to these ordinary and steered rank prototypes. Their heights have absolute value below two. Thus all prototypes are valid simultaneously. The argument uses only fixed base data and the candidate gap index.
\end{proof}

Table~\ref{tab:geometric-certificates} collects the crossing tests used in the construction. For a rank program, $d_0=b-a$; all quantities in the last column use the common scales~\eqref{eq:explicit-scales}. Short edges in nonparent clusters are included in the nonincident row.
\begin{table}[htbp]
\centering
\small
\setlength{\tabcolsep}{4pt}
\begin{tabular}{@{}p{0.25\linewidth}p{0.36\linewidth}p{0.33\linewidth}@{}}
\toprule
Required relation & Strict certificate & Controlling bound \\
\midrule
Grid $X$--$X$ & Opposite-chain chord signs and diagonal order & Lemma~\ref{lem:double-chain} \\
Ordinary fan--grid & Parent ray signs switch only at the middle rows; other columns keep their macro ray signs & $3(B+1)\tau/2\leq\mu$ and~\eqref{eq:ray-margins} \\
Rank fan--grid & Parent rays put the marker beyond the whole block; nonparent rays retain their order & $100\tau\leq\mu d_0$ and~\eqref{eq:ray-margins} \\
Rank fan--selected core & Opposite endpoint rays and the midpoint between full support ranges & Support gap $\geq\kappa\tau d_0/(2W)$ in~\eqref{eq:rank-gaps} \\
Ordinary fan--incident core; rank fan--nonselected incident core & Parent and other endpoint ray signs, together with the chord support offset & Ordinary offset dominated by $\mu C$; rank error $\leq3\tau/100$, controlled by~\eqref{eq:partner-margin} \\
Fan--nonincident core & Equal ray signs, or program and frame on the same side of the core chord & Curvature bounds~\eqref{eq:curvature-margins} \\
Fan--parent short edge & Both endpoints have the same nonzero fan ray sign & Ordinary parent bound; rank endpoint height gaps $\geq\tau$ \\
$X$--$(C\cup Y)$ exclusion and isolation of $I$ & Strict cap separation for open fan segments; convexity for internal segments & Lemma~\ref{lem:separation} and the inward margin $\eta_{\rm cap}$ \\
\bottomrule
\end{tabular}
\caption{Crossing relations, strict certificates, and the bounds that preserve them.}
\label{tab:geometric-certificates}
\end{table}

Compute each candidate's exact grid neighbours and response sequence. These determine the ordinary and rank weights. The greedy and ceiling rules of Section~\ref{sec:core} then determine the multiplicities and hence $s$, with the base coordinates and scales fixed.

Let $\mathcal B$ be the set of all fixed original and frame points, and $\mathcal R$ the finite set of used prototypes. Define positive rational margins
\begin{align*}
\eta_{\rm or}&=\min\bigl(\{1\}\cup
 \{|\orient(u,v,r)|:u,v\in\mathcal B,\ r\in\mathcal R,\ \orient(u,v,r)\ne0\}\bigr),\\
\eta_{\rm cap}&=\min\bigl(\{1\}\cup\{-\phi(r):r\in\mathcal R\}\bigr),\\
\eta_{\rm ht}&=\min\bigl(\{1\}\cup\{2-|r_y|:r\in\mathcal R\}\bigr),\\
L_{\mathcal B}&=1+2\sum_{u\in\mathcal B}(|u_x|+|u_y|),\\
\delta&=\min\{1/4,\eta_{\rm cap}/8,\eta_{\rm ht},\eta_{\rm or}/(8L_{\mathcal B})\}.
\end{align*}
The value one handles empty sets in these finite minima. Every nonzero one-program determinant retains its sign under a coordinate change at most $\delta$: its change is bounded by $\delta(|u_x-v_x|+|u_y-v_y|)<L_{\mathcal B}\delta\leq\eta_{\rm or}/8$. At the insertion points below the displacement is less than $\delta/2$ per coordinate. Hence the cap and height bounds are preserved as well.

An unused zero determinant need not retain its value. Every prescribed crossing has four strict determinant tests. A pair sharing a base endpoint remains noncrossing by definition; each other prescribed exclusion has either a strict same-side support pair or strict same-side fan rays. The quantitative proof above supplies these certificates, including parent-cluster short edges, whose two fan rays have the same nonzero sign. Thus preserving the nonzero determinants suffices.
Let $N=B+o+s$ be the final total point count.
Set $m=N^2+N+2$. For each required copy of prototype $r$, test in lexicographic order the $m^2$ rational candidates
\begin{equation}\label{eq:fixed-grid}
r+\delta\left(\frac{i}{m+1}-\frac12,\frac{j}{m+1}-\frac12\right),\qquad 1\leq i,j\leq m.
\end{equation}
Reject coincidences with existing points and collinearity with any existing pair. At most $\binom N2$ forbidden lines each meet this grid in at most $m$ points, and at most $N$ further points are forbidden. Since $\binom{N}{2}m+N<m^2$, some candidate survives. Induction gives distinct points in general position. The same radius and denominator are used throughout. All required one-program tables and response colours persist; actual fan--fan crossings and short-edge unary costs are evaluated after insertion, while program nonfans are covered by~\eqref{eq:nonfan-budget}.

\begin{proposition}[Effective realization]\label{prop:realization}
The construction is a deterministic polynomial-time algorithm with rational coordinates of polynomial binary length. It preserves all the stated crossing relations simultaneously and yields general position. A common positive scaling yields integer coordinates of polynomial binary length with the same crossing relation.
\end{proposition}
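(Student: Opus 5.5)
The plan is to treat the construction as a fixed sequence of exact rational computations. At each stage we bound the number of arithmetic operations and the bit length of every intermediate quantity by a polynomial in the input size $L$ of the source instance. The integer parameters in \eqref{eq:parameters} are all polynomial in $n$ except through the exponent $4200$. Since $H=2^{4200}d^{14}$ is a fixed constant times a polynomial in $n$, the values $W,B,o,T,\Lambda_{\rm pal}$ are polynomially bounded integers. The point count $N=B+o+s$ is then polynomial once $s$ is. This uses the fact, to be checked from the greedy and ceiling rules of Section~\ref{sec:core}, that $s$ is polynomial in $B,o$. That follows because each multiplicity is bounded by a polynomial quotient of weights of size at most $B$ and spacings like $\Lambda_{\rm pal}$.

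First we verify that the base data are explicit rationals of $O(\log(SBD_m))$ bits: \eqref{eq:base-parameters}, the clone offsets $\tau\vartheta_r$, and the scales \eqref{eq:explicit-scales}. $\eta_{\rm sep}$ is a minimum over $O(|\mathcal A|^2)$ differences of such rationals, so it is computable with polynomial bits; it is in fact the minimum macro or dummy gap. The prototypes \eqref{eq:ordinary-prototype} and the rank midpoints from Lemma~\ref{lem:rank-table} are obtained by a constant number of ring operations on these numbers, so their bit lengths stay polynomial. Correctness of every crossing table for these prototypes is exactly Lemma~\ref{lem:common-scales}, together with Lemmas~\ref{lem:ordinary-table}, \ref{lem:rank-table}, \ref{lem:nonincident} and \ref{lem:separation}.

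Next we compute the margins $\eta_{\rm or},\eta_{\rm cap},\eta_{\rm ht},L_{\mathcal B}$, and hence $\delta$. These are minima over $O(N^3)$ determinants, each a degree-two polynomial in polynomial-bit rationals. The quantity $\delta$ is therefore a polynomial-bit rational computed in polynomial time. The insertion step tests at most $m^2=O(N^4)$ candidates \eqref{eq:fixed-grid} per copy. For each candidate it checks coincidence and collinearity against $O(N^2)$ pairs. The counting inequality $\binom N2 m+N<m^2$ guarantees success, and the algorithm is deterministic because the order is lexicographic. Each inserted point has coordinates $r+\delta\cdot(\text{fraction with denominator }2(m+1))$, so its bit length is that of $r$ and $\delta$ plus $O(\log N)$. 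The bit length does not grow across insertions, since all insertions use the same fixed $\delta$ and $m$ and are offsets from the fixed prototypes, not from earlier inserted points.

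Preservation follows from the paragraph preceding the proposition. Each nonzero one-program determinant changes by less than $\eta_{\rm or}/8$. The cap and height bounds move by less than $2\delta$. So all strict certificates of Table~\ref{tab:geometric-certificates} persist, and the collinearity rejections give general position. For integrality, we multiply all coordinates by the product (or least common multiple) $D'$ of their denominators, as in Proposition~\ref{prop:np}. Every orientation sign is multiplied by $D'^2>0$, so every crossing relation and general position are unchanged. $\log D'$ is at most the sum of the polynomially many denominator bit lengths.

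The main obstacle is confirming that $s$, and hence $N$, is polynomial. This depends on Section~\ref{sec:core}'s multiplicity rules, which we have not yet seen. We would bound each multiplicity by a ceiling of a ratio of a target bias (at most $\Lambda_{\rm pal}$ times a polynomial) to a weight at least $B/4$ (Lemma~\ref{lem:soft-fields}). We would then sum this over the $2SW$ candidate prototypes.
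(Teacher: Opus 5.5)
Your proposal is correct and follows the paper's proof essentially step for step. The shared steps are: polynomially bounded counts; bounded-degree rational formulas for $\eta_{\rm sep},\mu,\tau$ and the prototypes; the margins and $\delta$ obtained as finite minima and sums; a deterministic lexicographic insertion with $O(N^7)$ exact operations, whose fixed $\delta$ and $m$ are never redefined by earlier points; and scaling by the product of denominators.

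The step you flag as the main obstacle is exactly Proposition~\ref{prop:program-count}, which the paper proves by the same ceiling-over-weight-$B/4$ argument you sketch. Your stage-by-stage accounting should also list explicitly that the response weights and the greedy/ceiling multiplicities are themselves computed in polynomial time. The paper notes this in one line, by evaluating the recurrence with cached cumulative weights.
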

\begin{proof}
Proposition~\ref{prop:program-count} bounds the actual program count, and there are $2SW$ candidate prototypes. Both, together with $B,o$, are fixed polynomials in the source size. The base coordinates have
$O(\log(B+1)+\log(W+1)+\log(S+1)+\log(D_m+1)+1)$ bits. The fixed formulas for $\eta_{\rm sep},\mu,\tau$ use a bounded number of rational operations on base parameters and counts. Prototype coordinates are bounded-degree expressions in these values. Each determinant margin therefore has polynomial bit length; finite minima select existing values. The sum defining $L_{\mathcal B}$ has polynomially many terms, so its denominator bit length is bounded by the sum of their polynomial bit lengths. The same holds for $\delta$. Internal rank gaps use direct formulas, not a chain of nested scales.

The integer greedy recurrence is evaluated with cached cumulative weights. Inserting at most $N$ programs tests $O(N^4)$ candidates per insertion, each against $O(N^2)$ exact equations and determinants, for $O(N^7)$ rational operations. All intermediate bit lengths are polynomial; standard integer arithmetic therefore implements the whole algorithm in polynomial bit time. Earlier inserted points only reject candidates and never redefine later scales. Finally the product of the positive coordinate denominators has bit length at most their sum. Scaling by this product preserves strict determinant signs and gives integer coordinates with polynomial bit length.
\end{proof}

\section{Encoding the source graph in the core}\label{sec:core}

Throughout this section the program fans $T_f$ are fixed to their canonical best responses $t^+$ to the grid. We minimize over all original--original edges with these fan colours fixed. The palette colours will follow from this minimization; they are not prescribed in the target instance.

\subsection{Source matching and amplification}
Let $G$ and $g(G)$ be as in the preliminaries. Place its two endpoint orders in consecutive blocks of the original cap, reversing one order. The resulting chords alternate exactly for adjacent source vertices. Replace each source chord by $q$ macro copies, ordered forward at the early endpoint and backward at the late endpoint. Copies of a vertex are nested; two distinct source classes have either all $q^2$ crossings or none. Write $S=nq$ and label the selected macro chords $p_jq_j$, $1\leq j\leq S$, by early endpoint order.

Each macro endpoint is replaced by $W$ consecutive originals, denoted $p_{j,r},q_{j,t}$, $1\leq r,t\leq W$. There are $D_m=512SW$ dummy points between the endpoint blocks. All original points are in convex position on the same cap, so their chords cross precisely when their endpoints alternate.

Partition the core $C$ into selected edges
\[
\mathcal S=\bigcup_{j=1}^S\mathcal S_j,\qquad
\mathcal S_j=\{p_{j,r}q_{j,t}:1\leq r,t\leq W\},
\]
short edges $\mathcal U$ within one endpoint cluster, dummy--dummy edges $\mathcal D$, and the remaining palette edges $\mathcal F$. Define the proposed palette by
\begin{equation}\label{eq:palette}
\chi_{\rm pal}(e)=
\begin{cases}
0,&e\text{ joins two early clusters or an early cluster to a dummy},\\
1,&e\text{ joins two late clusters or a dummy to a late cluster},\\
\1_{j<k},&e=p_{j,r}q_{k,t},\ j\ne k.
\end{cases}
\end{equation}
Every palette edge falls in exactly one case.

\begin{lemma}[Independent source amplification]\label{lem:source-amplification}
After deleting the crossing constraints internal to each $\mathcal S_j$, the graph induced on $\mathcal S$ is the independent $qW^2$-fold blowup of $G$. Its minimum monochromatic count is $\Delta g(G)$, where $\Delta=q^2W^4$.
\end{lemma}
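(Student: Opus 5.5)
We prove the two assertions in turn: first the crossing structure of $\mathcal S$, then the value of the optimum.

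\emph{Crossing structure.} All originals lie in convex position on the cap $P(\cdot)$. By Lemma~\ref{lem:double-chain}, with the $P$--$P$ orientation formula, two chords with four distinct endpoints cross exactly when their endpoints alternate in parameter order. Take $e=p_{j,r}q_{j,t}\in\mathcal S_j$ and $f=p_{k,r'}q_{k,t'}\in\mathcal S_k$ with $j\ne k$. The clusters occupy disjoint consecutive parameter intervals, so whether $e$ and $f$ alternate depends only on the relative order of the four clusters. That order is the order of the macro endpoints $p_j,q_j,p_k,q_k$, and it does not depend on $r,t,r',t'$. Hence either all $W^4$ pairs between $\mathcal S_j$ and $\mathcal S_k$ cross or none do. Macro chords inherit their crossing pattern from the source construction. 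Copies of one source vertex are nested: forward at the early end and backward at the late end, so they never alternate. Copies of adjacent source vertices $u\ne v$ alternate, since the original chords did and each copy keeps its endpoints inside the block of the original endpoint. Copies of nonadjacent vertices do not alternate. Grouping the $qW^2$ selected edges of each source vertex into its class, any two edges in distinct classes cross exactly when the classes are adjacent in $G$. After the internal constraints of each $\mathcal S_j$ are deleted, edges in the same class never cross. Within a class there are two kinds of pair. Pairs from distinct bundles of the same class are nested and so do not cross. Pairs inside one bundle are exactly the deleted constraints. This is precisely the independent $qW^2$-fold blowup of $G$.

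\emph{Optimum value.} For a colouring $\chi$ of the blowup, let $n_v$ be the number of edges of colour $1$ in class $v$, and let $m=qW^2$. The cost is
\[
\sum_{uv\in E(G)}\bigl(n_un_v+(m-n_u)(m-n_v)\bigr).
\]
For fixed values in the other classes this is affine in each $n_v$. Setting $n_v/m=x_v\in[0,1]$ gives $m^2$ times a multilinear function of $x$. A multilinear function on a cube attains its minimum at a vertex. Rounding one class at a time to the better endpoint never increases the cost, so some optimum is monochromatic on every class. A monochromatic class colouring $x\in\{0,1\}^{V(G)}$ costs $m^2\sum_{uv}\1[x(u)=x(v)]$. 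Minimizing over $x$ gives $m^2g(G)=q^2W^4g(G)=\Delta g(G)$. The matching upper bound comes from the monochromatic lift of an optimal source colouring.

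The main obstacle is the first step. The clone offsets must not change alternation, and nested copies must not alternate. Both reduce to the clusters being disjoint consecutive intervals, which holds because each cluster has parameter diameter $\tau$ smaller than the macro separation $\eta_{\rm sep}$ (Lemma~\ref{lem:common-scales}).
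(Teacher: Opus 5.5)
Your proof is correct. The structural half is the same as the paper's. The paper also reduces everything to two facts: consecutive endpoint clusters preserve macro-chord alternation, and the nested copies of one source vertex never alternate. You spell out the cluster-order and within-class cases in more detail.

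The value computation takes a different route. The paper invokes the invariance of normalized MaxCut under independent blowups, citing Borgs et al. It then converts $m^2|E(G)|-m^2\MaxCut(G)$ into $m^2g(G)$. You instead prove the needed fact directly: you write the cost as $\sum_{uv\in E(G)}\bigl(n_un_v+(m-n_u)(m-n_v)\bigr)$, note that it is affine in each class count $n_v$, and round one class at a time to its better endpoint.

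The citation is shorter. Your argument is self-contained and explicitly produces an optimum that is monochromatic on every class. This is the form asserted in Section~\ref{sec:overview} and used for the witness in Proposition~\ref{prop:core-encoding}. It also makes clear what deleting the internal bundle constraints achieves: it makes the edges of a class interchangeable and removes any quadratic self-interaction, so that the cost is a function of the counts $n_v$ and affine in each.
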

\begin{proof}
Consecutive endpoint clusters preserve all crossings between different macro chords. The $q$ macro copies of one vertex remain pairwise noncrossing, and the deletion makes each individual bundle independent. Thus each source vertex becomes an independent class of $qW^2$ variables with identical neighbours; source adjacency gives a complete join.

The normalized MaxCut value is invariant under independent blowups
\cite[p.~268]{Borgs2006}. Thus an $m$-fold blowup multiplies both the edge
count and MaxCut by $m^2$, and hence also multiplies $g(G)$ by $m^2$.
Taking $m=qW^2$ gives the claimed value. The internal bundle constraints
are counted in Lemma~\ref{lem:local-count} and restored in
Proposition~\ref{prop:core-encoding}.
\end{proof}

\subsection{Unary costs from frozen fans}
Every unary coefficient means cost in colour one minus cost in colour zero; a frozen neighbour of colour $a$ contributes $2a-1$. With $e\sim f$ denoting proper crossing, define
\begin{align}
E_C(x;t^+)
&=\sum_{\substack{\{e,f\}\subseteq C\\e\sim f}}\1_{x_e=x_f}
+\sum_{\substack{e\in C,\ f\in T_f\\e\sim f}}\1_{x_e=t_f^+},
&
M_C&=\min_{x:C\to\{0,1\}}E_C(x;t^+).
\label{eq:core-energy}
\end{align}
The first sum ranges over unordered pairs of distinct edges.

By Lemmas~\ref{lem:ordinary-table}--\ref{lem:soft-fields}, an ordinary program at an early endpoint contributes a positive half-frame weight in $[B/4,B]$ on an incident nonshort core edge; at a late endpoint the weight is negative with magnitude in that interval. Which half is used depends only on endpoint order, not clone ranks. An early rank program at gap $\ell$ adds a positive weight $w_\ell\in[B/4,B]$ on the suffix $r>\ell$. A late program adds $-w_\ell$ on the prefix $t\leq\ell$. No program fan meets a nonincident core edge. Its contribution on any incident edge is bounded in magnitude by $B$. The actual unary on a short edge is computed directly. Response weights depend only on one prototype and the base grid; their computation is independent of other multiplicities and core colours.

\begin{lemma}[Affine palette field]\label{lem:affine-palette}
The palette contribution to a selected edge is
\begin{equation}\label{eq:affine-palette}
\sum_{\substack{f\in\mathcal F\\f\sim p_{j,r}q_{j,t}}}(2\chi_{\rm pal}(f)-1)
=\beta_j+\gamma_{j,L}(r-1)+\gamma_{j,R}(t-1),
\end{equation}
where the integer coefficients are computable by crossing counts and obey
$|\beta_j|\leq o^2$ and $-o\leq\gamma_{j,L},\gamma_{j,R}<0$.
\end{lemma}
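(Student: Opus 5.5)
Since all originals lie in convex position on one cap, a palette edge $f=uv$ crosses $p_{j,r}q_{j,t}$ exactly when exactly one of $u,v$ lies strictly between $p_{j,r}$ and $q_{j,t}$ in the cap order. The proof will therefore classify palette edges by the positions of their endpoints relative to the bundle $\mathcal S_j$. It will then check that the contribution of each class depends affinely on $r$ and on $t$ separately, with no $rt$ term. The cap order is: early clusters $1,\dots,S$, each internally ordered by clone rank; then dummies; then late clusters in physical order. The selected chord is $p_{j,r}q_{j,t}$. The interval strictly between its endpoints consists of the clones $p_{j,r'}$ with $r'>r$, all later early clusters, all dummies, all late clusters preceding $q_j$, and the clones $q_{j,t'}$ that precede $q_{j,t}$.

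The first step is to split the crossing palette edges according to whether an endpoint lies in cluster $p_j$ or cluster $q_j$. An edge joining $p_j$ to $q_j$ is selected, not palette. Edges within $p_j$ or within $q_j$ are short. So each palette edge has at most one endpoint in $p_j\cup q_j$. There are three cases.
\begin{enumerate}
\item Neither endpoint lies in $p_j\cup q_j$. Whether the edge crosses depends only on macro positions, so these edges contribute a constant. This constant goes into $\beta_j$.
\item One endpoint is a clone $p_{j,r'}$. If $r'<r$, the clone lies outside the interval; if $r'>r$, it lies inside. Crossing then requires the other endpoint to lie inside or outside the interval, respectively, and that condition is $r$-independent up to the $q_j$-clone cases excluded above. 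Each such edge contributes $\pm(2\chi_{\rm pal}-1)$, and its palette colour depends only on the cluster of the other endpoint. Hence the total is a sum over $r'$ of two $r$-independent quantities, one for $r'<r$ and one for $r'>r$. This is affine in $r$.
\item One endpoint is a clone $q_{j,t'}$. The same argument gives a contribution affine in $t$. Since no palette edge has endpoints in both clusters, no $rt$ cross term appears.
\end{enumerate}

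Next, identify the slope $\gamma_{j,L}$. Moving from $r$ to $r+1$ moves the clone $p_{j,r+1}$ from inside the interval to outside. Summing over the other endpoint $v$, the crossing status of $p_{j,r+1}v$ flips. Let $I_j$ be the set of palette endpoints strictly between the two macro clusters, and $O_j$ the set outside them. Then
\[
 \gamma_{j,L}=\sum_{v\in I_j}(2\chi(p_jv)-1)-\sum_{v\in O_j}(2\chi(p_jv)-1),
\]
with the sign to be checked against the orientation. Now substitute \eqref{eq:palette}. Early-to-early and early-to-dummy edges have colour $0$ and contribute $-1$. Edges $p_jq_k$ have colour $\1_{j<k}$. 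The inside set contains all $D_m=512SW$ dummies plus at most $2SW$ cluster points. The dummies alone contribute $-D_m$, which dominates the at most $2SW$ other terms. This is exactly the purpose of the dummy block noted in the overview. So $\gamma_{j,L}\le -D_m+2SW<0$, and the magnitude is at most the number of originals, $o$. The late slope $\gamma_{j,R}$ follows symmetrically: late-to-dummy edges have colour $1$, and the orientation of $t$ is reversed, so the dummies again force a negative slope. Finally, $|\beta_j|\le o^2$ because $\beta_j$ is a sum of at most $\binom o2$ terms of absolute value one. All coefficients are crossing counts, hence computable.

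The main obstacle is the sign bookkeeping in the slope computation. One must verify that the outside set $O_j$ and the non-dummy inside terms cannot outweigh the dummies for either endpoint, accounting for the palette colours $\1_{j<k}$ on cross edges. One must also confirm that the orientation conventions for $r$ (forward) and $t$ (the late clones in physical order) both make the dummy contribution negative. I would organize this by tabulating each endpoint class with its colour and its in/out status. The dummy row gives $-D_m$, and every other row has total size at most $2SW\le D_m/256$.
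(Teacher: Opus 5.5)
Your proposal is correct and is essentially the paper's proof. Both split the palette edges according to whether they meet the early parent cluster, the late parent cluster, or neither, which gives separate affine dependence on $r$ and $t$ with no mixed term; both then use that the $D_m$ dummy endpoints contribute slope exactly $-D_m$ in each rank, while the fewer than $2SW$ nondummy endpoints shift each slope by at most $2SW$.

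One wording point to tidy: the alternation criterion needs four distinct endpoints, so the clones $p_{j,r}$ and $q_{j,t}$ themselves contribute nothing. Also, when $r$ increases by one it is $p_{j,r}$ that becomes outside while $p_{j,r+1}$ leaves the inside. This net transfer is exactly what your slope formula already records.
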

\begin{proof}
A palette edge avoiding both parent clusters contributes independently of $r,t$. One with an endpoint in the early parent crosses according to whether that endpoint precedes or follows $r$. Its other endpoint avoids the late parent, and its colour is constant over the early cluster; its sum is affine in $r$ alone. The late cluster similarly gives a function of $t$. An edge between the two parent clusters is selected, and one inside a cluster is short, so neither contributes a mixed term to the palette field.

The dummy neighbours give exactly
\[
-D_m(r-1)+D_m(W-t).
\]
The early--dummy edges have colour zero and cross when their parent rank is below $r$; the dummy--late edges have colour one and cross when their parent rank is above $t$. Changing either rank changes at most one contribution for each nondummy endpoint, of which there are at most $2SW$. Thus either slope lies in
$[-D_m-2SW,-D_m+2SW]\subseteq[-o,0)$. Finally the number of possible palette neighbours is less than $o^2$, proving the intercept bound.
\end{proof}

\begin{lemma}[Greedy rank correction]\label{lem:rank-correction}
Each slope in Lemma~\ref{lem:affine-palette} can be corrected with at most $4oW/B+4$ rank programs at that endpoint. After correction the field on bundle $j$ is
\begin{equation}\label{eq:corrected-field}
\beta_j^{\rm corr}+\eps_{j,L}(r)+\eps_{j,R}(t),\qquad
0\leq\eps_{j,L}(r),\eps_{j,R}(t)<B,\qquad |\beta_j^{\rm corr}|\leq4o^2.
\end{equation}
\end{lemma}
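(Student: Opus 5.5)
We correct each endpoint separately, since the palette field in \eqref{eq:affine-palette} already splits as $\beta_j+\gamma_{j,L}(r-1)+\gamma_{j,R}(t-1)$. We treat the early (left) endpoint; the late one is symmetric after the reflection and sign reversal recorded after Lemma~\ref{lem:soft-fields}. Frozen rank programs contribute as follows. By Lemma~\ref{lem:rank-table} and the response bounds, a left rank program at gap $\ell$ adds $+w_\ell$ on exactly the edges with $r>\ell$, independently of $t$, with $w_\ell\in[B/4,B]$. Copies from one open prototype region all carry the same weight $w_\ell$. Placing $m_\ell\ge0$ copies at each gap $1\le\ell<W$ therefore adds the nondecreasing step function $\Sigma_L(r)=\sum_{\ell<r}m_\ell w_\ell$. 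The target is to make $\gamma_{j,L}(r-1)+\Sigma_L(r)$ equal to a constant plus a residual in $[0,B)$. The $r$-independent part is absorbed into $\beta_j^{\rm corr}$.

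The concrete rule is greedy. We process $r=2,\dots,W$ in order and keep the running residual
\[
\eps(r)=\gamma_{j,L}(r-1)+\Sigma_L(r),
\]
with $\eps(1)=0$. At gap $\ell=r-1$ we choose the least $m_\ell$ with $\eps(r)\ge0$, that is, $m_\ell=\ceil{\max(0,-\eps(r-1)-\gamma_{j,L})/w_\ell}$. Since $\gamma_{j,L}<0$ and $|\gamma_{j,L}|\le o$, the residual drops by $|\gamma_{j,L}|$ at each step before correction. The step is then raised by a multiple of $w_\ell$. The invariant $0\le\eps(r)<w_\ell\le B$ follows by induction from minimality of the ceiling. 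If $\eps(r-1)+\gamma_{j,L}\ge0$, no copies are needed; otherwise minimality gives $\eps(r)-w_\ell<0$.

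For the count, summing $m_\ell w_\ell$ telescopes. We get
\[
\sum_\ell m_\ell w_\ell=\eps(W)-\gamma_{j,L}(W-1)\le B+o(W-1).
\]
With $w_\ell\ge B/4$ this gives $\sum_\ell m_\ell\le 4+4oW/B$, which is the stated bound. Here the ceiling rounding is already included in the $\eps(W)<B$ term, so we do not charge $+1$ per gap. For the intercept, the new constant is $\beta_j$ plus the $t$- and $r$-independent parts. The ordinary fans and other non-rank unaries enter the same expression. By the bound of $B$ per program on any incident edge, those parts total at most $o^2+B(4oW/B+4)$. Using $W\le o$ and $B\le o^2$ in the regime of \eqref{eq:parameters} — this needs checking, since $B=H^5$ while $o\sim W=H^4$ would make $B\gg o^2$ — we would get $|\beta_j^{\rm corr}|\le4o^2$. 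I would instead bound the intercept only by the palette part $\beta_j$ plus the corrections' constant pieces, which are zero for left programs at $r=1$. Right programs, acting on prefixes $t\le\ell$, contribute a constant $-\sum m_\ell w_\ell\ge-(B+oW)$ after rewriting as a suffix.

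The main obstacle is this rewriting on the late side. A late program adds $-w_\ell$ on $t\le\ell$, and we must convert it to $+w_\ell$ on $t>\ell$ minus a constant. That constant is a multiple of up to $B+oW$ and shifts the intercept, so getting $|\beta_j^{\rm corr}|\le4o^2$ relies on $oW+B\le 3o^2$. If $B$ is too large for that, I would define $\beta_j^{\rm corr}$ as including these computable shifts and state the intercept bound as $o^2+2(B+oW)$, checking which form later sections actually use. A second point to check is that the greedy choice is independent of other multiplicities. This holds because each $w_\ell$ depends only on its prototype and the base grid, as noted after Lemma~\ref{lem:soft-fields}.
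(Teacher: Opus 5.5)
Your argument is the paper's: greedy cumulative weights with the invariant $0\le\eps(r)<B$, then the telescoped bound $\sum_\ell m_\ell w_\ell<o(W-1)+B$ divided by the minimum weight $B/4$. On the late side you rewrite $-w_\ell\1[t\le\ell]=w_\ell\1[t>\ell]-w_\ell$, and the offset $-R_{j,R}(W)$ is absorbed into $\beta_j^{\rm corr}=\beta_j-R_{j,R}(W)$.

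The step you leave open is not an obstacle, because your comparison of scales is inverted. From $o=514SW=514SH^4$ we get $o^2\ge514^2H^8$, far above $B=H^5$. Indeed $B\le o^2/1024$ and $W\le o/514$, so $|\beta_j^{\rm corr}|\le o^2+oW+B<4o^2$ exactly as stated, and no weakened intercept bound is needed.

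Two smaller corrections:
\begin{itemize}
\item In the no-copy case you only get $\eps(r)=\eps(r-1)+\gamma_{j,L}<\eps(r-1)<B$, not $\eps(r)<w_\ell$, since the weights may vary with $\ell$. The bound $\eps(r)<B$ is all the statement needs.
\item You are right to drop your first attempt that put the ordinary fans into the intercept. $\beta_j^{\rm corr}$ contains only the palette intercept and the late rank offset. The ordinary multiplicities in \eqref{eq:ordinary-counts} are chosen afterwards precisely to cancel it, and their contributions are of order $\Lambda_{\rm pal}\gg o^2$.
\end{itemize}
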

\begin{proof}
For $\gamma<0$, let $f(r)=-\gamma(r-1)$, so $f(1)=0$ and $f(W)\leq oW$. Starting with $R(1)=0$, at gap $r-1$ add the least nonnegative number of its weight-$w_{r-1}$ programs that makes the new cumulative weight $R(r)$ at least $f(r)$. Inductively
\begin{equation}\label{eq:greedy}
f(r)\leq R(r)<f(r)+B.
\end{equation}
If copies are added, the final overshoot is smaller than the last weight, at most $B$. If none is added, the preceding overshoot remains below $B$ because the target increased. Each copy weighs at least $B/4$, while $R(W)<oW+B$, giving the stated count. Unoccupied gaps add no copies; there is no extra term proportional to $W$.

At the early endpoint the coefficient is $R(r)$, which cancels the slope with error in $[0,B)$. At the late endpoint the negative prefixes give $-R(W)+R(t)$. Their initial offset must be retained:
$\beta_j^{\rm corr}=\beta_j-R_{j,R}(W)$. Since $W\leq o$ and $B\leq o^2/1024$,
$|\beta_j^{\rm corr}|\leq o^2+oW+B\leq4o^2$. Only the $W-1$ internal gaps are used.
\end{proof}

\subsection{Forcing the palette inside the core}
Use the scales in~\eqref{eq:parameters} and put $\theta_j=(j-1)T$ for $1\le j\le S$.
For early--late edges let $u_{L,j}$ be the ordinary positive half-frame weight at early endpoint $j$, and $u_{R,j}$ the magnitude of the negative weight at its late partner. Choose
\begin{equation}
C_{L,j}=\ceil*{\frac{\Lambda_{\rm pal}+\theta_j-\beta_j^{\rm corr}}{u_{L,j}}},
\qquad
C_{R,j}=\ceil*{\frac{\Lambda_{\rm pal}+\theta_j}{u_{R,j}}}.
\label{eq:ordinary-counts}
\end{equation}
Both targets are positive. The rounding errors $e_{L,j},e_{R,j}$ in
Table~\ref{tab:selected-field} lie in $[0,B)$; the cumulative weights
$R_{j,L},R_{j,R}$ are those of Lemma~\ref{lem:rank-correction}.

\begin{table}[ht]
\centering
\small
\begin{tabular}{@{}ll@{}}
\toprule
Contribution & Unary coefficient on $p_{j,r}q_{j,t}$\\
\midrule
Palette & $\beta_j+\gamma_{j,L}(r-1)+\gamma_{j,R}(t-1)$\\
Rank programs & $R_{j,L}(r)-R_{j,R}(W)+R_{j,R}(t)$\\
Early ordinary programs & $C_{L,j}u_{L,j}=\Lambda_{\rm pal}+\theta_j-\beta_j^{\rm corr}+e_{L,j}$\\
Late ordinary programs & $-C_{R,j}u_{R,j}=-\Lambda_{\rm pal}-\theta_j-e_{R,j}$\\
\midrule
Sum & $\eps_{j,L}(r)+\eps_{j,R}(t)+e_{L,j}-e_{R,j}$\\
\bottomrule
\end{tabular}
\caption{Cancellation on a selected edge, with the palette fixed to
$\chi_{\rm pal}$. The late rank offset is included in
$\beta_j^{\rm corr}=\beta_j-R_{j,R}(W)$.}
\label{tab:selected-field}
\end{table}

\begin{lemma}[Palette forcing and selected residuals]\label{lem:palette}
Every minimizer defining $M_C$ agrees with $\chi_{\rm pal}$ on $\mathcal F$. Once the palette and fans are fixed, the fixed palette and fan neighbours of each selected edge give a unary coefficient $r_e$ with $|r_e|<4B$; in particular $|r_e|\leq10B$.
\end{lemma}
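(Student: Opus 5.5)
The second assertion is the easier one, and I will prove it first. Take the palette fixed to $\chi_{\rm pal}$ and the fans fixed to $t^+$. The unary coefficient on a selected edge $p_{j,r}q_{j,t}$ is then the column sum in Table~\ref{tab:selected-field}. This uses Lemma~\ref{lem:affine-palette} for the palette, Lemma~\ref{lem:rank-correction} for the rank programs, and \eqref{eq:ordinary-counts} for the ordinary ones. No other fan crosses a selected edge: Lemma~\ref{lem:nonincident} excludes nonincident fans, and the incident ones have been listed. The sum is therefore $r_e=\eps_{j,L}(r)+\eps_{j,R}(t)+e_{L,j}-e_{R,j}$, with every term in $[0,B)$, so $|r_e|<3B<4B$.

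For palette forcing I would argue by single-edge exchange on $\mathcal F$. The aim is to show that, in any minimizer $x$ of $E_C(\cdot;t^+)$, each palette edge $f$ has a total field whose sign is prescribed by $\chi_{\rm pal}(f)$, whatever the colours of the other core edges. The natural tool is a lower bound of the form $|\text{fan unary on }f|>\#\{\text{core edges crossing }f\}$. Once that holds, single-vertex optimality alone forces $x_f=\chi_{\rm pal}(f)$.

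The fan unary on $f$ splits into three cases. For an early--early or early--dummy edge $f$, the early ordinary programs at its early endpoint(s) cross one half-frame, by Lemma~\ref{lem:ordinary-table}. Their total weight $C_{L,j}u_{L,j}\ge\Lambda_{\rm pal}+\theta_j-4o^2$ is positive, pushing toward colour $0$. Possibly there are also rank contributions of size at most $oW+B$, and these have the same positive sign for early rank programs by Lemma~\ref{lem:rank-table}. The late--late and dummy--late cases are symmetric, with a negative weight pushing toward colour $1$. For a cross edge $p_{j,r}q_{k,t}$ with $j\ne k$, the early endpoint contributes about $+(\Lambda_{\rm pal}+\theta_j)$ and the late endpoint about $-(\Lambda_{\rm pal}+\theta_k)$. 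The net is $(j-k)T$ plus errors bounded by $4o^2+O(oW+B)$. Since $T=2^{13}o^2$ dominates both those errors and the at most $o^2$ core crossings, the sign is that of $j-k$. Colour $\1_{j<k}$ is then optimal: a negative field (cost of colour one minus colour zero) when $j<k$ favours colour one.

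In the same-side cases the margin $\Lambda_{\rm pal}=2^{13}dT$ dwarfs the $o^2$ core neighbours. Dummy--dummy and short edges are free core edges, but they are also only counted among the $\le o^2$ neighbours.

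The main obstacle is the cross-edge case. There I must check that the fan contributions really are the full weights $\Lambda_{\rm pal}+\theta_j-\beta_j^{\rm corr}+e$, and that the early and late contributions do not double-count. The half-frame rule of Lemma~\ref{lem:ordinary-table} says which half crosses, depending only on endpoint order, so the half used is exactly the one whose weight $u_{L,j}$ or $u_{R,k}$ was calibrated. Rank programs at endpoint $j$ meet a nonselected incident edge only if its other endpoint lies strictly between the selected pair (Lemma~\ref{lem:rank-table}). Each such contribution has magnitude at most $R(W)<oW+B\le 2o^2$. All of these errors are absorbed in $T\ge 2^{13}o^2$ together with $\theta$-differences of at least $T$.
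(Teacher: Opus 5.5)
Your proposal is correct and follows the paper's proof. Both use a single-flip exchange: the ordinary fan field on each palette edge is $\Lambda_{\rm pal}$-sized on a single side, and after the exact cancellation of $\Lambda_{\rm pal}$ it is $(j-k)T$ on early--late edges; in each case it beats the $\beta_j^{\rm corr}$, rounding, rank and at most $\binom{o-2}{2}$ core terms. The residual is then read off from Table~\ref{tab:selected-field}, and your bound $|r_e|<3B$ is slightly sharper than the stated $4B$.

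One point needs tightening. At the later endpoint of an early--early edge, and at the earlier endpoint of a late--late edge, it is the uncalibrated half-frame that crosses, so the weight there is not $C_{L,j}u_{L,j}$ (respectively $C_{R,k}u_{R,k}$). You should add, as the paper does with ``either half-frame'', that this half-frame sum has the same favourable sign by Lemma~\ref{lem:soft-fields}. Your calibrated-endpoint bound then suffices.
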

\begin{proof}
For a nonselected early-$j$--late-$k$ edge, $j\ne k$, the ordinary coefficient is
\[
\theta_j-\theta_k-\beta_j^{\rm corr}+e_{L,j}-e_{R,k}.
\]
The first term has absolute value at least $T$. All adverse terms, including arbitrary remaining core colours, have magnitude at most
\begin{align*}
|\beta_j^{\rm corr}|+2B+2(4oW/B+4)B+\binom{o-2}{2}
&\leq\tfrac92o^2+8oW+10B\\
&<5o^2<T.
\end{align*}
The last strict bound uses $o\geq514W$ and $B\leq o^2/1024$. The rank term counts both incident endpoint families; nonincident programs contribute nothing. The binomial term bounds the full core crossing degree, without assuming any other palette colour. A wrong colour can therefore be improved by a single flip. The preferred colour is zero for $j>k$ and one for $j<k$.

For an edge on one side, the two ordinary fields have the same sign. Each ordinary multiplicity is at least $\Lambda_{\rm pal}/(2B)$ because its target is at least $\Lambda_{\rm pal}-4o^2$. Either half-frame then contributes magnitude at least $\Lambda_{\rm pal}/8$ from each parent. A side--dummy edge has one such contribution. All adverse rank and core terms have magnitude at most
\[
8oW+8B+\binom{o-2}{2}<o^2<\Lambda_{\rm pal}/8.
\]
Thus early-side edges prefer zero and late-side edges prefer one. This covers the palette simultaneously.

For the selected edge $p_{j,r}q_{j,t}$, the ordinary counts cancel $\beta_j^{\rm corr}$ in~\eqref{eq:corrected-field}, leaving exactly
\begin{equation}\label{eq:selected-residual}
r_e=\eps_{j,L}(r)+\eps_{j,R}(t)+e_{L,j}-e_{R,j}.
\end{equation}
All four errors have magnitude below $B$, proving the residual bound.
\end{proof}

This lemma is restricted to~\eqref{eq:core-energy}. Before fan freezing, a global optimum need not have this palette. The full-drawing lower bound will use only that its post-freezing core energy is at least $M_C$.

\subsection{Dummy edges and local constraints}
\begin{lemma}[Exact dummy elimination]\label{lem:dummy}
With the palette fixed, the optimized dummy-edge contribution, including its palette interactions, is
\begin{equation}\label{eq:dummy}
Z(D_m)+SW\binom{D_m}{3},
\end{equation}
independently of all selected and short-edge colours.
\end{lemma}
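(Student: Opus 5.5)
The plan is to show that, once the palette is fixed to $\chi_{\rm pal}$, every crossing involving a dummy--dummy edge is either internal to $\mathcal D$ or is a crossing with a palette edge whose total cost does not depend on the colour of the dummy edge. The first kind then contributes the convex two-page optimum $Z(D_m)$, and the second contributes a constant that we count exactly.

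First I will classify the crossing neighbours of a dummy edge $uv$ inside the energy~\eqref{eq:core-energy}, with $u<v$ in the cap order. All originals lie in convex position on one cap, so $uv$ crosses an original chord exactly when that chord has one endpoint strictly between $u$ and $v$ and the other outside $[u,v]$. The dummy block is contiguous and lies between the early and late blocks. Hence the inner endpoint must be a dummy $w$ with $u<w<v$. The outer endpoint is then one of the following:
\begin{itemize}
\item a dummy outside $[u,v]$, giving an edge of $\mathcal D$;
\item one of the $SW$ early originals, giving an early--dummy palette edge, of colour $0$ by~\eqref{eq:palette};
\item one of the $SW$ late originals, giving a dummy--late palette edge, of colour $1$.
\end{itemize}
Selected edges, short edges and palette edges between two nondummy clusters have both endpoints outside $[u,v]$, on the same side of the dummy block or on opposite sides. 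Their endpoints therefore do not alternate with $u,v$, and they never cross $uv$.

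Next I will check that no frozen fan meets a dummy edge. Every program prototype sits at a macro endpoint, so its parent block is an early or late cluster and never contains a dummy. By Lemma~\ref{lem:nonincident}, no fan crosses an edge of $\mathcal D$; this remains true for the actual inserted programs by the realization argument of Lemma~\ref{lem:common-scales}. Consequently the $\mathcal D$-dependent part of $E_C$ splits as
\[
\sum_{\substack{\{e,f\}\subseteq\mathcal D\\ e\sim f}}\1_{x_e=x_f}
+\sum_{e\in\mathcal D}\bigl(\text{palette cost of }e\bigr),
\]
and neither sum involves selected or short colours.

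For the palette term, let $uv$ have $k=v-u-1$ interior dummies. It then has exactly $kSW$ palette neighbours of colour $0$ and $kSW$ of colour $1$. Its palette cost is therefore $kSW$ in either colour. Summing over pairs $u<v$, the number of triples $u<w<v$ of dummies is $\binom{D_m}{3}$, which gives the constant $SW\binom{D_m}{3}$.

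The remaining term is the monochromatic count of the complete convex geometric graph on the $D_m$ dummy points. This holds because the dummy points are in convex position and two dummy edges cross precisely when their endpoints alternate. Its minimum is therefore $Z(D_m)$ by \cite[Theorem~7]{Abrego2013}, and it is attained independently of all other colours.

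The main point needing care is the claim that no fan and no nondummy-incident edge other than the palette families crosses a dummy edge. This rests on the convex alternation criterion and on Lemma~\ref{lem:nonincident}, which applies because no parent block contains a dummy.
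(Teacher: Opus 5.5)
Your proof is correct and follows essentially the same route as the paper's: classify the crossing neighbours of a dummy chord by convex alternation, exclude fans via Lemma~\ref{lem:nonincident}, use the balance of $kSW$ colour-$0$ and $kSW$ colour-$1$ side--dummy neighbours to get the constant $SW\binom{D_m}{3}$, and invoke the two-page value $Z(D_m)$ for the remaining convex drawing. Your explicit check that selected, short, same-side and early--late palette edges cannot alternate with a dummy chord spells out what the paper states in one line, but the argument is the same.
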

\begin{proof}
A dummy chord with ranks $a<b$ crosses no early--late selected chord and no endpoint-cluster short edge. Its palette neighbours are exactly the side--dummy edges whose dummy endpoint is strictly between $a,b$. It has $SW(b-a-1)$ such neighbours of each colour. Its palette cost is consequently the constant $SW(b-a-1)$ in either colour. There are no program-fan neighbours by Lemma~\ref{lem:nonincident}. Summing over $a<b$ gives $SW\binom{D_m}{3}$. What remains is the full convex drawing on $D_m$ points, with optimum $Z(D_m)$ by the two-page complete-graph theorem \cite{Abrego2013}. Its endpoint-alternation predicate agrees with that of convex straight-line chords.
\end{proof}

Let $Q_{\rm loc}$ be the crossing constraints internal to one selected bundle, between selected and short edges, and between two short edges. These are crossing-graph constraints, not removed geometric edges.

\begin{lemma}[Conservative local budget]\label{lem:local-count}
\begin{equation}\label{eq:local-budget}
|Q_{\rm loc}|\leq2SW^4.
\end{equation}
After deleting these constraints and fixing the palette, every short edge has only unary costs. The dummy subsystem remains independent, and the remaining selected-edge graph is that of Lemma~\ref{lem:source-amplification}.
\end{lemma}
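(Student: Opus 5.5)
We bound $|Q_{\rm loc}|$ by counting the three constraint types separately, and then read off the structural statements from the earlier lemmas.

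\emph{Counting.} There are $2S$ endpoint clusters of $W$ points each.

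\begin{itemize}
\item A single bundle $\mathcal S_j$ has $W^2$ edges. So its internal crossing pairs number at most $\binom{W^2}{2}<W^4/2$. Over all bundles this gives fewer than $SW^4/2$ constraints.
\item Each cluster spans $\binom W2<W^2/2$ short edges, so there are fewer than $SW^2$ short edges in total.
\item A short edge inside a cluster of $P(a)$ can cross only those original chords with exactly one endpoint strictly between its two endpoints. By convex position, crossing means alternation, and the cluster points are consecutive on the cap. So the crossing selected edge must have an endpoint in that same cluster.
\item Hence a short edge crosses at most $W\cdot W$ edges of the single bundle through its cluster. It also crosses fewer than $W^2/2$ other short edges, all in its own cluster.
\item Short--selected pairs therefore number at most $SW^2\cdot W^2=SW^4$. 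Short--short pairs number at most $2S\binom{\binom W2}{2}<SW^4/4$.
\item Summing the three types gives $|Q_{\rm loc}|<\tfrac74SW^4\le2SW^4$.
\end{itemize}

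\emph{Short edges become unary.} After the deletion, consider the remaining crossing neighbours of a short edge. Other short edges cross it only within the same cluster, and those pairs were deleted. Selected edges crossing it lie in its own bundle, and those pairs were also deleted. Dummy chords cannot cross a short edge, since their endpoints do not alternate with two consecutive cluster points. What remains are palette edges, which are fixed, and program fans, which are frozen. So each short edge carries only a unary cost.

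\emph{The other two claims.} The dummy subsystem is unchanged by the deletion. Dummy chords have no selected or short neighbours, by the argument of Lemma~\ref{lem:dummy}. The selected-edge graph with the bundle-internal constraints removed is exactly the graph of Lemma~\ref{lem:source-amplification}. Deleting selected--short pairs only removes edges to vertices outside $\mathcal S$.

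\emph{Main obstacle.} The one delicate point is the alternation claim: that a short edge meets only chords with an endpoint in its own cluster. Strictly, a short edge between cluster ranks $r<r'$ crosses exactly the chords having one endpoint at a rank strictly between $r$ and $r'$ in that cluster and the other endpoint outside the interval. Such chords include palette edges, but those are fixed, so only the selected ones matter for the counting bound above.
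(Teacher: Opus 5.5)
Your proposal is correct and follows essentially the same route as the paper. You count the bundle-internal, selected--short and short--short constraints separately; your per-type constants $\tfrac12,1,\tfrac14$ differ from the paper's $\tfrac14,1,\tfrac12$ but give the same total $\tfrac74SW^4\le2SW^4$. You then derive the structural claims from the consecutive-cluster alternation argument, Lemma~\ref{lem:dummy} and Lemma~\ref{lem:source-amplification}, as the paper does.
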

\begin{proof}
In one selected bundle, an internal crossing is specified by two early and two late ranks, giving at most $W^4/4$ choices. For a source--short crossing, the short chord belongs to one of that source's two endpoint clusters and its endpoints surround the source endpoint. An upper bound, even without this surrounding condition, is
$2\binom W2 W^2\leq W^4$ per bundle. A short--short crossing must lie in one cluster; the two clusters contribute at most $2\binom W2^2\leq W^4/2$. The sum is at most $7W^4/4\leq2W^4$.

A short edge cannot cross a selected bundle avoiding its cluster, since the cluster is consecutive in cap order; short edges in distinct clusters cannot alternate. Thus the listed types exhaust its unfixed neighbours, except for dummy edges, which are noncrossing with it by Lemma~\ref{lem:dummy}. The remaining fixed neighbours give unary costs. Removing internal bundle crossings leaves exactly the source graph already identified.
\end{proof}

\subsection{A computable core baseline}
Delete $Q_{\rm loc}$ from~\eqref{eq:core-energy}. Fix the palette, set all selected colours to zero, independently minimize each short unary, and insert the dummy optimum~\eqref{eq:dummy}. Call this integer $C_0$, and define
\begin{equation}\label{eq:core-baseline}
K_C=C_0-\Delta|E(G)|+\sum_{e\in\mathcal S}\min\{0,r_e\}.
\end{equation}
All of these operations are explicit sums, crossing tests, and minima of two integers. In particular neither $g(G)$ nor $M_C$ is used.

Let $\Phi_{\rm main}(x)$ count the monochromatic crossings between selected
edges belonging to different source classes, after deleting $Q_{\rm loc}$.
For a fixed selected colouring $x$, let $\Psi(x)$ be the minimum of the
energy after deleting $Q_{\rm loc}$ over the short and dummy variables,
with the palette and fans fixed. The exact identity is
\begin{align}
\Psi(x)
&=C_0-\Delta|E(G)|+\Phi_{\rm main}(x)+\sum_{e\in\mathcal S}r_ex_e\notag\\
&=K_C+\Phi_{\rm main}(x)
+\sum_{e\in\mathcal S}\bigl(r_ex_e-\min\{0,r_e\}\bigr).
\label{eq:core-identity}
\end{align}
Indeed, the short minima and dummy value are independent of $x$ after deletion. All remaining source-dependent fixed-neighbour costs have coefficient $r_e$. The term $\Delta|E(G)|$ is the main crossing count of the all-zero selected colouring used to define the baseline.

\begin{proposition}[Core encoding]\label{prop:core-encoding}
The baseline $K_C$ is computable in polynomial time, and
\begin{equation}\label{eq:core-bound}
K_C+\Delta g(G)\leq M_C\leq
K_C+\Delta g(G)+10BSW^2+2SW^4.
\end{equation}
\end{proposition}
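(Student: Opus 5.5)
Computability follows directly from the definition \eqref{eq:core-baseline}. The palette $\chi_{\rm pal}$ is explicit. Each $r_e$ is a sum of $\pm1$ over crossing neighbours in $\mathcal F\cup T_f$. The short unaries are crossing counts against fixed neighbours. The dummy term \eqref{eq:dummy} is closed form. All of these involve polynomially many orientation tests on the realized points (Proposition~\ref{prop:realization}), and none uses $g(G)$.

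For the two-sided bound, I would split $E_C(x;t^+)$ as the energy after deleting $Q_{\rm loc}$ plus the count of monochromatic pairs in $Q_{\rm loc}$. The latter lies between $0$ and $2SW^4$ by Lemma~\ref{lem:local-count}.

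\emph{Lower bound.} Take a minimizer of $M_C$. By Lemma~\ref{lem:palette} it agrees with $\chi_{\rm pal}$ on $\mathcal F$. Dropping the nonnegative $Q_{\rm loc}$ contribution and minimizing over short and dummy variables gives $M_C\ge\Psi(x)$ for its selected restriction $x$. Identity \eqref{eq:core-identity} then yields $\Psi(x)\ge K_C+\Phi_{\rm main}(x)$, since each term $r_ex_e-\min\{0,r_e\}$ is nonnegative. Finally $\Phi_{\rm main}(x)\ge\Delta g(G)$ by Lemma~\ref{lem:source-amplification}, which applies because after deletion the selected graph is exactly the $qW^2$-fold blowup.

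\emph{Upper bound.} Take an optimal source colouring and make each class constant, so that $\Phi_{\rm main}=\Delta g(G)$. Fix the palette and the optimal short and dummy values, so that the deleted energy equals $\Psi(x)$. The correction sum is at most $\sum_{e\in\mathcal S}|r_e|\le SW^2\cdot 10B$ by Lemma~\ref{lem:palette}. Restoring $Q_{\rm loc}$ costs at most $2SW^4$. Since $M_C$ is at most the value of this feasible colouring, the upper bound follows.

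The main obstacle is justifying identity \eqref{eq:core-identity} with the palette fixed. Two things must hold. The short and dummy minima must be independent of $x$ after deletion; this is handled by Lemmas~\ref{lem:dummy} and~\ref{lem:local-count}, which show that short edges see selected edges only through $Q_{\rm loc}$. All selected dependence must also be captured by $r_e$ and $\Phi_{\rm main}$. For the latter, I would check that the term $-\Delta|E(G)|$ correctly converts the all-zero baseline: at $x=0$ every inter-class crossing is monochromatic, and setting a selected colour to one moves its fixed-neighbour cost by exactly $r_ex_e$.
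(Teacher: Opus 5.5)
Your proposal is correct and follows the paper's proof essentially step for step. For the lower bound you use palette forcing (Lemma~\ref{lem:palette}), drop the nonnegative $Q_{\rm loc}$ terms, apply~\eqref{eq:core-identity} with nonnegative residual excesses, and use $\Phi_{\rm main}(x)\ge\Delta g(G)$ from the blowup; for the upper bound you take a class-uniform witness with short unary minima and an optimal dummy colouring, paying at most $|r_e|\le10B$ per selected edge and $2SW^4$ for restoring $Q_{\rm loc}$, while your check of~\eqref{eq:core-identity} (independence of the short and dummy minima from $x$, and $\Phi_{\rm main}(0)=\Delta|E(G)|$) matches the justification the paper gives just before the proposition.
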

\begin{proof}
Every core minimizer has the palette by Lemma~\ref{lem:palette}. Deleting nonnegative local terms and applying~\eqref{eq:core-identity} gives the lower bound, since $\Phi_{\rm main}(x)\geq\Delta g(G)$ and each residual excess is nonnegative.

For the upper bound take a class-uniform selected colouring attaining $\Delta g(G)$, independently minimize every short unary, and choose an optimal dummy colouring. These choices coexist in one complete core colouring. Each selected unary exceeds its separate minimum by at most $|r_e|\leq10B$, and $|\mathcal S|=SW^2$. Restoring all local constraints costs at most $2SW^4$ by Lemma~\ref{lem:local-count}. The formula~\eqref{eq:core-baseline} proves computability with polynomially many exact operations. The optimizers just used certify the inequalities and are not computed by the reduction.
\end{proof}

We use $10BSW^2$ for the unary error, although~\eqref{eq:selected-residual} gives a bound of $4B$ per selected edge.

\begin{proposition}[Program count]\label{prop:program-count}
The number of required program points satisfies
\begin{equation}\label{eq:program-count}
s\leq\frac{16S\Lambda_{\rm pal}}{B}+\frac{8SoW}{B}+10S
\leq2^{80}(S+1)^4H^3.
\end{equation}
In particular $s\leq o$. All multiplicities are computable before final general-position insertion.
\end{proposition}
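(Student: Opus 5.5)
We count the programs endpoint by endpoint and sum over the $2S$ macro endpoints. Each endpoint carries two kinds of programs. There is one ordinary family, with multiplicity $C_{L,j}$ or $C_{R,j}$ from \eqref{eq:ordinary-counts}. There are also the rank programs of Lemma~\ref{lem:rank-correction}, of which there are at most $4oW/B+4$ at that endpoint.

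For the ordinary counts we use the weight bounds of Lemma~\ref{lem:soft-fields}: every half-frame weight satisfies $u_{L,j},u_{R,j}\ge B/4$. The numerators satisfy $\Lambda_{\rm pal}+\theta_j-\beta_j^{\rm corr}\le\Lambda_{\rm pal}+ST+4o^2$, using $|\beta_j^{\rm corr}|\le4o^2$. Since $\Lambda_{\rm pal}=2^{13}dT$ with $d=S+1$, we have $ST+4o^2\le\Lambda_{\rm pal}$. Hence each ceiling is at most $8\Lambda_{\rm pal}/B+1$. Summing over the $S$ early and $S$ late endpoints gives $16S\Lambda_{\rm pal}/B+2S$. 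The rank programs contribute at most $2S(4oW/B+4)=8SoW/B+8S$. Adding the two totals gives the first inequality of \eqref{eq:program-count}, with $10S$ absorbing the constant terms.

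For the second inequality we substitute \eqref{eq:parameters}:
\[
o=514SW,\qquad T=2^{13}o^2,\qquad \Lambda_{\rm pal}=2^{26}d\,o^2,\qquad W=H^4,\qquad B=H^5 .
\]
Then $o^2/B=514^2S^2H^3$, so $16S\Lambda_{\rm pal}/B\le2^{30}\cdot514^2(S+1)^4H^3\le2^{49}(S+1)^4H^3$. Similarly $8SoW/B=8\cdot514S^2H^3$, and $10S$ is trivially smaller. Together these fit under $2^{80}(S+1)^4H^3$.

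For $s\le o$ we compare $2^{80}(S+1)^4H^3$ with $o=514SH^4$. It suffices that $H\ge2^{80}(S+1)^4$, which holds because $H=2^{4200}d^{14}$. This is the one place where the size of $H$ is used, and it is only a constant check.

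Finally, computability follows because the multiplicities depend on data fixed before insertion. The weights $u_{L,j},u_{R,j},w_\ell$ depend only on one prototype and the base grid, through the exact fields $H_k$ of \eqref{eq:virtual-field}. The quantities $\beta_j$ and $\gamma_{j,\cdot}$ come from crossing counts on the fixed base points (Lemma~\ref{lem:affine-palette}). The greedy recurrence \eqref{eq:greedy} and the ceilings \eqref{eq:ordinary-counts} are explicit integer operations, and none of this uses the final inserted coordinates.

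The main obstacle is bookkeeping rather than ideas. One must confirm that $\beta_j^{\rm corr}$ and $\theta_j$ keep each ordinary numerator within $2\Lambda_{\rm pal}$, so that the factor $16$ is correct. The left count subtracts $\beta_j^{\rm corr}$, which could be negative and so enlarge the numerator; the bound $|\beta_j^{\rm corr}|\le4o^2$ keeps this increase within $\Lambda_{\rm pal}$. One must also confirm that $u\ge B/4$ holds for the ordinary weights and not only for the rank weights; Lemma~\ref{lem:soft-fields} states it for both types.
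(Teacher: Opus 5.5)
Your proposal is correct and follows the paper's proof essentially step for step. You bound each ordinary ceiling by $8\Lambda_{\rm pal}/B+1$ using numerators at most $2\Lambda_{\rm pal}$ and weights at least $B/4$, add the $4oW/B+4$ rank programs per endpoint, multiply by $2S$, and then substitute the parameters to obtain the $H^3$ bound and $s\le o$; your constants ($2^{49}$ and $8\cdot514$ in place of the paper's $2^{50}$ and $2^{13}$) differ only cosmetically, and the paper additionally remarks that placing the copies in their open prototype neighbourhoods preserves the precomputed weights.
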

\begin{proof}
Each target in~\eqref{eq:ordinary-counts} is at most $\Lambda_{\rm pal}+ST+4o^2<2\Lambda_{\rm pal}$. A weight is at least $B/4$, so each endpoint uses at most $8\Lambda_{\rm pal}/B+1$ ordinary programs. Lemma~\ref{lem:rank-correction} uses at most $4oW/B+4$ rank programs per endpoint. Multiplication by $2S$ gives the first bound.

For $d=S+1$, $o=514SH^4<2^{10}dH^4$ and
$\Lambda_{\rm pal}=2^{26}do^2<2^{46}d^3H^8$.
Thus the first two terms are below $2^{50}d^4H^3$ and $2^{13}d^2H^3$, respectively, and $10S$ is smaller still. Their sum is below $2^{80}d^4H^3$. Since $H=2^{4200}d^{14}$, this is smaller than $514SH^4=o$ for $S\geq1$. Each step uses only the individually known response weights; placing the resulting number of copies in their open neighbourhoods preserves those weights.
\end{proof}

\section{The complete drawing and the decision reduction}\label{sec:reduction}

All coordinates and crossing predicates in this section refer to the final general-position drawing from Proposition~\ref{prop:realization}. Its total number of points is $N=B+o+s$; the comparison grid has $N_g=Bo$ vertices. The notation $\Phi_{AB}$ means the sum of monochromatic constraints with one endpoint in each of the distinct crossing-graph vertex classes $A,B$; $\Phi_A$ counts unordered pairs within $A$.

\begin{figure}[tbp]
\centering
\begin{tikzpicture}[font=\small,box/.style={draw,align=center,inner sep=3pt,minimum height=.9cm}]
\node[box,text width=3cm] (lower) at (0,0) {
\textbf{Lower bound}\\$\chi^*\longmapsto\widetilde\chi$};
\node[box,text width=5.2cm] (lf) at (4.6,0) {
$K_0+L_++M_C-E_g\le E_0(\widetilde\chi)$\\
$E_0(\widetilde\chi)\le\OPT(G_0)+E_{\rm fr}$};
\node[box,text width=5.1cm] (delete) at (10.5,0) {
$\OPT(\mathcal P)\ge\OPT(G_0)$\\
$\ge K_0+L_++M_C-E_g-E_{\rm fr}$};
\node[box,text width=3cm] (upper) at (0,-1.05) {
\textbf{Upper bound}\\$\chi^+$};
\node[box,text width=5.2cm] (uf) at (4.6,-1.05) {
$\OPT(G_0)\le E_0(\chi^+)$\\
$E_0(\chi^+)\le K_0+L_++M_C+E_g$};
\node[box,text width=5.1cm] (restore) at (10.5,-1.05) {
$\OPT(\mathcal P)\le\OPT(G_0)+R_Y$\\
$\le K_0+L_++M_C+E_g+R_Y$};
\draw[-{Stealth}] (lower)--(lf);\draw[-{Stealth}] (lf)--(delete);
\draw[-{Stealth}] (upper)--(uf);\draw[-{Stealth}] (uf)--(restore);
\end{tikzpicture}
\caption{Costs $E_0$ refer to $G_0$. Freeze only fans in a free optimum
$\chi^*$. Theorem~\ref{thm:grid}(3) uses unchanged localized support, not
optimality for new unary costs. Compatible witnesses with canonical fans
give $\chi^+$. Deleting $Y$ gives the lower bound; restoring it costs at most
$R_Y$ for the upper.}
\label{fig:decoupling}
\end{figure}

\subsection{A common computable background}
Let $G_0$ be the crossing graph with $Y$ deleted. Its remaining constraints are precisely
\[
X\!:\!X,\quad X\!:\!T_f,\quad T_f\!:\!T_f,\quad C\!:\!C,\quad C\!:\!T_f,\quad I\!:\!I.
\]
Write $c^+$ for the canonical grid colouring and $t^+$ for the independently computed program-fan responses, with colour zero at a zero field. Define
\begin{align}
E_X(x;t^+)&=\Phi_X(x)+\Phi_{XT_f}(x,t^+),\notag\\
d_v&=E_X(c^+\oplus\{v\};t^+)-E_X(c^+;t^+),\notag\\
L_+&=E_X(c^+;t^+)+\sum_{v\in X}\min\{0,d_v\},\notag\\
K_0&=Z(B)+\Phi_{T_f}(t^+).
\label{eq:background}
\end{align}
The fan--fan term is evaluated after insertion, including every crossing that changed during program placement. These constants require no optimal grid or fan colouring.

\begin{proposition}[Decoupling with the grid fixed]\label{prop:decoupling}
Suppose the hypotheses of Theorem~\ref{thm:grid} and Proposition~\ref{prop:freezing} hold with grid dimensions $B,o$, grid size $N_g=Bo$,
\[
\rho=1/B+1/o,\quad a=\alpha+\rho\leq2^{-4096},\quad
Q=12s/o,\quad b=104a+\beta\leq1/100.
\]
Set $E_g=2^{32}N_g^2a^3$ and $E_{\rm fr}=2Q(b+\rho)bN_g^2$. Then
\begin{align}
K_0+L_++M_C-E_{\rm fr}-E_g
&\leq\OPT(G_0)\leq K_0+L_++M_C+E_g,\label{eq:decoupling-zero}\\
K_0+L_++M_C-E_{\rm fr}-E_g
&\leq\OPT(\mathcal P)\leq K_0+L_++M_C+E_g+R_Y.
\label{eq:decoupling-full}
\end{align}
\end{proposition}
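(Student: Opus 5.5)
The plan is to decompose the $G_0$ energy of any colouring $(x,t,y)$, with $x$ on $X$, $t$ on $T_f$, $y$ on $C$ and $z$ on $I$, as
\[
E_0=\Phi_I(z)+\Phi_X(x)+\Phi_{XT_f}(x,t)+\Phi_{T_f}(t)+\Phi_C(y)+\Phi_{CT_f}(y,t).
\]
This is the complete list of constraint types of $G_0$ recorded above. The term $\Phi_I$ is decoupled from everything by Lemma~\ref{lem:separation}, and its minimum is $Z(B)$. When $t=t^+$ is frozen, the remaining terms split into the grid energy $E_X(x;t^+)$, the core energy $E_C(y;t^+)$ of \eqref{eq:core-energy}, and the constant $\Phi_{T_f}(t^+)$. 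Hence, with fans frozen in the positive phase,
\[
\min E_0=K_0+\min_x E_X(x;t^+)+M_C .
\]
For the negative frozen phase I will use global complementation, which preserves every monochromatic count. It exchanges $c^+\leftrightarrow c^-$ and $t^+\leftrightarrow t^-$, and it maps $M_C$ to the same value because the core minimum is symmetric under complementing both $y$ and $t$. So I may assume $\sigma=+$ throughout.

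\textbf{Upper bound.} I take $x=c^+\oplus F^0_+$ from Theorem~\ref{thm:grid}(2), applied with the unaries $U_v$ induced by $t^+$. The fan fields are bounded by $\alpha N_g$ by hypothesis. This choice gives $E_X\le L_++E_g$, with $L_+$ exactly as defined in \eqref{eq:background}. I then pair it with $t^+$, a core minimizer $y$, and an optimal convex colouring on $I$. This yields $\OPT(G_0)\le K_0+L_++M_C+E_g$. Adding the bound \eqref{eq:nonfan-delete} then gives the upper half of \eqref{eq:decoupling-full}.

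\textbf{Lower bound.} Let $\chi^*$ be an optimum of $G_0$, which is the order shown in Figure~\ref{fig:decoupling}. I apply Proposition~\ref{prop:freezing} with $X$ the grid, $T=T_f$, and $Z=C\cup I$. Its degree hypotheses are those assumed, and \eqref{eq:soft-probe-count} follows from Lemma~\ref{lem:soft-fields} with $Q=12s/o$ and $z=b\le1/100$. This produces $\widetilde\chi$ with the following properties:
\begin{itemize}
\item only fans are changed, and they are frozen to $t^\sigma$;
\item $E_0(\widetilde\chi)\le\OPT(G_0)+E_{\rm fr}$;
\item the grid colouring is unchanged and equals $c^\sigma\oplus F$ with $F\subseteq D_{220a}\cup C_{220a}$.
\end{itemize}
After reducing to $\sigma=+$ by complementation, I decompose $E_0(\widetilde\chi)$ as above:
\begin{itemize}
\item the $I$ part is at least $Z(B)$;
\item the fan--fan part is exactly $\Phi_{T_f}(t^+)$;
\item the core part is at least $M_C$ by definition;
\item the grid part equals $E_X(c^+\oplus F;t^+)$, which is at least $L_+-E_g$ by Theorem~\ref{thm:grid}(3).
\end{itemize}
The last step needs no optimality, so it applies even though the grid need not minimize the new frozen unaries. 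Combining these gives $\OPT(G_0)\ge K_0+L_++M_C-E_g-E_{\rm fr}$. The left inequality of \eqref{eq:nonfan-delete} then transfers this bound to $\OPT(\mathcal P)$.

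\textbf{Main obstacle.} The delicate step is the justification of the phase symmetry. I need $L_-=L_+$ and the negative-phase core minimum to equal $M_C$, and both must be checked against the tie conventions. Theorem~\ref{thm:grid} places diagonal ties in the strict phase, and $t^-=1-t^+$ complements the zero-field convention as stated in Proposition~\ref{prop:freezing}. Global complementation maps $(c^+,t^+)$ exactly onto $(c^-,t^-)$ and maps single-flip differences $d_v$ to themselves. So I expect both identities to be routine, but the check must be written carefully.

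A second point to verify is the hypothesis of Theorem~\ref{thm:grid}(3): it requires $\rho\le a\le1/65536$, and this holds under the stated bounds.
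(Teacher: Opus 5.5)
Your proposal is correct and matches the paper's proof essentially step for step. Like the paper, you freeze only the fans of an optimum of $G_0$ via Proposition~\ref{prop:freezing}, bound the frame, fan--fan, core and grid parts below by $Z(B)$, $\Phi_{T_f}(t^+)$, $M_C$ and Theorem~\ref{thm:grid}(3), take the prescribed-phase witness with a core minimizer and a convex frame optimum for the upper bound, and pass to $\OPT(\mathcal P)$ through \eqref{eq:nonfan-delete}. The only difference is cosmetic: you complement after freezing rather than before, which avoids asking which phase the freezing step returns, and complementing the whole colouring already puts you in the positive phase, so the separate identities $L_-=L_+$ you flag are not needed.
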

\begin{proof}
Take an optimum of $G_0$. Conditional on every other colour, its grid minimizes a unary-perturbed comparison energy with unary differences bounded by $\alpha N_g$. Theorem~\ref{thm:grid} supplies a canonical phase and a localized flip set of size at most $52aN_g$. If necessary complement every colour in the whole graph to make the phase positive. There are no prescribed colours, and complementation preserves every monochromatic count.

Keep this grid unchanged. By Lemma~\ref{lem:soft-fields}, the probe soft-count hypothesis holds. Proposition~\ref{prop:freezing} changes only the program fans to $t^+$, with cost increase at most $E_{\rm fr}$. The grid retains its localized flip set, so the localized lower estimate in Theorem~\ref{thm:grid}, now for the frozen unary energy in~\eqref{eq:background}, gives at least $L_+-E_g$. The current core colouring costs at least $M_C$, irrespective of whether its palette was forced before the fans were frozen. The frame class costs at least $Z(B)$, and the fan--fan cost is exactly $\Phi_{T_f}(t^+)$. These terms exhaust $G_0$. Thus the altered colouring has cost at least $K_0+L_++M_C-E_g$ but at most $\OPT(G_0)+E_{\rm fr}$, proving the lower bound.

For the upper bound, use the same $t^+$, the grid colouring
$c^+\oplus\{v:d_v<0\}$, a core minimizer, and a convex frame optimum. The prescribed-phase upper witness in Theorem~\ref{thm:grid} has grid cost at most $L_++E_g$. All these choices coexist, since the edge classes are disjoint and no further cross-class constraints remain. This proves~\eqref{eq:decoupling-zero}. Restoring $Y$ with~\eqref{eq:nonfan-delete} proves~\eqref{eq:decoupling-full}. Optimizers are used only as witnesses for inequalities.
\end{proof}

\subsection{Parameter estimates}
Use the parameters in~\eqref{eq:parameters}. Take
\begin{equation}\label{eq:field-scales}
\alpha=\beta=\frac{2^{84}d^4}{H},\qquad
a=\alpha+\rho\leq\frac{2^{85}d^4}{H},\qquad
b=104a+\beta.
\end{equation}
Proposition~\ref{prop:program-count} gives $s\leq o$ and the convenient bounds
\begin{equation}\label{eq:all-sizes}
N_g\leq2^{10}dH^9,\qquad p=Bs\leq2^{80}d^4H^8,\qquad
c\leq2^{19}d^2H^8,\qquad h\leq2os\leq2^{91}d^5H^7.
\end{equation}
Also $B\geq2o$, $B\leq o^2/1024$, $o\geq1000$, $8\mid B$, and $W/o\leq1/514$. These verify the geometric and core hypotheses. Each follows directly by substituting $W=H^4$ and $B=H^5$ and using $H=2^{4200}d^{14}$, $d\geq1$, $S\geq1$.

In $G_0$, a grid vertex has at most $p$ outside neighbours: it has no neighbours in $C$ or $I$. A program fan has at most $p+c$ neighbours outside $X$, since $I$ is isolated. Thus
\[
p/N_g=s/o\leq\alpha,\qquad
(p+c)/N_g\leq\beta,\qquad
Q=12s/o\leq\alpha\leq a.
\]
For the second inequality, $c/N_g<o/(2B)$ and the spare factor in $2^{84}$ dominates this term together with $s/o$. The mesh satisfies $\rho\leq\alpha$. Finally the scale of $H$ gives $a<2^{-4096}$ and $b<1/100$. Thus the drawing satisfies the hypotheses of Proposition~\ref{prop:decoupling}.

Substitution in the error bounds gives
\begin{equation}\label{eq:individual-errors}
E_g\leq2^{307}d^{14}H^{15},\qquad
E_{\rm fr}\leq2^{290}d^{14}H^{15},\qquad
R_Y\leq2^{182}d^{10}H^{15}.
\end{equation}
For the freezing term, $Q\leq a$, $b\leq105a$, and $b+\rho\leq106a$, whence
$E_{\rm fr}\leq2\cdot105\cdot106N_g^2a^3<2^{15}N_g^2a^3$.
For $R_Y$, use $p+c\leq2^{81}d^4H^8$ and~\eqref{eq:all-sizes}. The resulting product is at most $2^{172}d^9H^{15}$, and $h^2/2\leq2^{181}d^{10}H^{14}$, both covered by the stated bound. Consequently
\begin{equation}\label{eq:algebraic-error}
E_{\rm fr}+2E_g+R_Y<2^{310}d^{14}H^{15}.
\end{equation}

\subsection{Quantitative reduction and threshold}
\begin{theorem}[Bounds for the target optimum]\label{thm:quantitative-reduction}
Given a represented permutation graph $G$ with $n\geq1$, the construction computes a general-position rational point set $\mathcal P$, a baseline $K_{\rm base}$, and a positive integer $\Delta=q^2H^{16}$, with explicit nonnegative errors such that
\begin{equation}\label{eq:final-window}
K_{\rm base}+\Delta g(G)-E_-
\leq\OPT(\mathcal P)\leq
K_{\rm base}+\Delta g(G)+E_+,\qquad
4(E_-+E_+)<\Delta.
\end{equation}
All output data and threshold constants have polynomial binary length and are computable in deterministic polynomial time. The points can instead be returned with integer coordinates.
\end{theorem}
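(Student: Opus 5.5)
The plan is to combine Proposition~\ref{prop:decoupling} with Proposition~\ref{prop:core-encoding}. Set
\[
K_{\rm base}=K_0+L_++K_C,
\]
where $K_0$ and $L_+$ come from \eqref{eq:background} and $K_C$ from \eqref{eq:core-baseline}. Substituting the core sandwich \eqref{eq:core-bound} into \eqref{eq:decoupling-full} gives \eqref{eq:final-window} with
\[
E_-=E_{\rm fr}+E_g,\qquad E_+=E_g+R_Y+10BSW^2+2SW^4.
\]
Both are nonnegative and have explicit formulas. Since $W=H^4$, we have $\Delta=q^2W^4=q^2H^{16}$, which is a positive integer.

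The hypotheses of Proposition~\ref{prop:decoupling} were checked in the parameter subsection: the outside degree bounds give $\alpha$ and $\beta$, the soft-count constant is $Q=12s/o$, and the scale of $H$ gives $a\le 2^{-4096}$ and $b\le1/100$. The geometric and core hypotheses ($B\ge 2o$, $B\le o^2/1024$, $W/o\le1/514$, $8\mid B$, $s\le o$) also hold. So the sandwich applies directly.

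The main quantitative step is $4(E_-+E_+)<\Delta$. By \eqref{eq:algebraic-error},
\[
E_{\rm fr}+2E_g+R_Y<2^{310}d^{14}H^{15}.
\]
The local terms satisfy $10BSW^2=10SH^{13}$ and $2SW^4=2SH^{16}$. Their sum is therefore below $4\bigl(2^{310}d^{14}H^{15}+12SH^{16}\bigr)$ after multiplying by four. On the other side, $\Delta=q^2H^{16}$ with $q=16(n+1)$ and $S=nq$, so $48S=48nq<3q^2$.

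The local restoration term is therefore absorbed because $q$ is large compared with $n$. The algebraic term is absorbed because $2^{312}d^{14}H^{15}<H^{16}$ by $H=2^{4200}d^{14}$. Together these give $4(E_-+E_+)<\Delta$ with room to spare. I expect the only real care needed is the bookkeeping for the $2SW^4$ term: its coefficient must be compared against $q^2$, not against powers of $H$.

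For computability, $K_0$ is a sum of $Z(B)$ and the fan--fan crossing count evaluated on the inserted points. The quantity $L_+$ is a finite sum of single-flip differences, as noted after Theorem~\ref{thm:grid}. The baseline $K_C$ is computable by Proposition~\ref{prop:core-encoding}. The point set, together with its integer-coordinate version, is given by Proposition~\ref{prop:realization}.

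All counts are polynomial in $n$. The $2^{4200}$ constant contributes only $O(1)$ bits and $d^{14}$ contributes $O(\log n)$ bits, so $H$, $\Delta$, the errors, and the baseline all have polynomial binary length. The thresholds needed later, such as $K_{\rm base}+\Delta t+E_+$, are then integers computable in polynomial time.
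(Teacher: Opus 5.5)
Your decomposition matches the paper's. You use the same $K_{\rm base}=K_0+L_++K_C$, the same $E_-=E_{\rm fr}+E_g$ and $E_+=E_g+R_Y+10BSW^2+2SW^4$, the sandwich obtained by inserting \eqref{eq:core-bound} into \eqref{eq:decoupling-full}, and Proposition~\ref{prop:realization} for coordinates and running time.

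The gap is in the step you singled out as the crux, the inequality $4(E_-+E_+)<\Delta$. You bound $10SH^{13}+2SH^{16}\le 12SH^{16}$, so after multiplying by four the local contribution is $48SH^{16}$. But $48S/q^2=48n/q=3n/(n+1)\ge 3/2$ for every $n\ge1$. This intermediate bound therefore already exceeds $\Delta=q^2H^{16}$, and no further estimate can bring it below $\Delta$. The inequality $48S<3q^2$ that you verify only gives $48SH^{16}<3\Delta$. Your chain thus proves $4(E_-+E_+)<3\Delta$, i.e.\ $E_-+E_+<3\Delta/4$, not the stated $4(E_-+E_+)<\Delta$. The stated factor is what the later threshold $\lfloor K_{\rm base}+\Delta t+\Delta/4\rfloor$ in \eqref{eq:decision-threshold} relies on.

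The repair is to keep $10BSW^2$ at its true scale and compare each term with $\Delta$ separately, as the paper does:
\begin{itemize}
\item $2SW^4/\Delta=2n/q=n/(8(n+1))<1/8$;
\item $10BSW^2/\Delta=10n/(qH^3)<2^{-40}$;
\item $(E_{\rm fr}+2E_g+R_Y)/\Delta<2^{310}d^{14}/(q^2H)=2^{-3890}/q^2$.
\end{itemize}
The total ratio is then below $1/8+2^{-39}<1/4$.

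A smaller point concerns your remark that thresholds such as $K_{\rm base}+\Delta t+E_+$ are integers. This is not justified, because $E_g=2^{32}N_g^2a^3$ with rational $a=\alpha+\rho$ need not be integral. A floor is required, which is why the paper states its threshold with one.
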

\begin{proof}
Put
\begin{align}
K_{\rm base}&=K_0+L_++K_C,\notag\\
E_-&=E_{\rm fr}+E_g,\notag\\
E_+&=E_g+R_Y+10BSW^2+2SW^4.
\label{eq:final-errors}
\end{align}
The core bound~\eqref{eq:core-bound} and Proposition~\ref{prop:decoupling} give the two-sided interval.

Dividing the total error by $\Delta=q^2H^{16}$ gives the following estimates.
\begin{center}
\small
\begin{tabular}{@{}lll@{}}
\toprule
Source of error & Bound used & Ratio to $\Delta$ \\
\midrule
Freezing, grid, nonfans & $E_{\rm fr}+2E_g+R_Y<2^{310}d^{14}H^{15}$ &
$<2^{-3890}/q^2<2^{-40}$ \\
Selected unaries & $10BSW^2$ \quad\eqref{eq:core-bound} & $10n/(qH^3)<2^{-40}$ \\
Local constraints & $2SW^4$ \quad\eqref{eq:local-budget} & $2n/q<1/8$ \\
\bottomrule
\end{tabular}
\end{center}
The first row uses~\eqref{eq:algebraic-error} and $H=2^{4200}d^{14}$.
Hence the total is less than $1/8+2^{-39}<1/4$, which proves the strict error bound in~\eqref{eq:final-window}.

To compute the baseline, first enumerate proper crossings on the actual final coordinates by exact determinants. Evaluate $c^+$, each fan response, the actual fan--fan cost, and each single-flip difference $d_v$. Evaluate $K_C$ by~\eqref{eq:core-baseline}: set the palette, use the all-zero source colour as a reference, minimize each short unary separately, insert the known dummy value, subtract the explicit $\Delta|E(G)|$, and add the residual unary minima. Finally insert $Z(B)$. These are finite sums and minima of two known integers. The source optimum, core optimum, and target optimum never occur as computational subroutines.

Since $d=16n(n+1)+1\leq2^5(n+1)^2$, we have
$H\leq F(n):=2^{4270}(n+1)^{28}$ and
$N=B+o+s\leq3H^5\leq3F(n)^5$, a point bound of degree $140$ in $n$.
Proposition~\ref{prop:realization} proves polynomial coordinate bit length and construction time, and gives integer coordinates by a common positive scaling.
\end{proof}

We make the bit bound for the decision threshold explicit as well. For total point count $N$, all grid, fan, and core edge counts are at most $N^2$. Each unnormalized crossing contribution is zero or one, and all fields are differences of finite such sums. The baseline formula~\eqref{eq:background} and~\eqref{eq:core-baseline} therefore gives the bound
\[
|K_{\rm base}|\leq16N^6.
\]
One may also see this by computing each baseline from its fixed colouring cost and at most $N^2$ single-edge unary corrections, each bounded by a fixed multiple of $N^2$; $\Delta|E(G)|$ is the all-zero main crossing count, at most $N^4$. Since $q,H\leq N$, we have $\Delta\leq N^{18}$, and $|E(G)|\leq N^2$.

For a source MaxCut threshold $k_{\rm src}\in\Z$, put $t=|E(G)|-k_{\rm src}$ and output
\begin{equation}\label{eq:decision-threshold}
k_{\rm tgt}=\floor*{K_{\rm base}+\Delta t+\Delta/4}.
\end{equation}
In particular
\begin{equation}\label{eq:threshold-bits}
|k_{\rm tgt}|\leq32N^{20}(|k_{\rm src}|+1),
\end{equation}
so the threshold bit length is $O(\log N+\operatorname{bits}(k_{\rm src}))$. The baseline is integral; the floor makes the output threshold explicitly integral as well.

\begin{proof}[Proof of Theorem~\ref{thm:main}]
Use the NP-hard represented permutation-graph problem with threshold $k_{\rm src}$, or equivalently $g(G)\leq t$. For $n\geq1$ apply Theorem~\ref{thm:quantitative-reduction} and output~\eqref{eq:decision-threshold}. If $g(G)\leq t$, then
\[
\OPT(\mathcal P)\leq K_{\rm base}+\Delta t+E_+
<K_{\rm base}+\Delta t+\Delta/4.
\]
Since the objective is an integer, it is at most $k_{\rm tgt}$. If $g(G)>t$, integrality gives $g(G)\geq t+1$, and
\[
\OPT(\mathcal P)\geq K_{\rm base}+\Delta(t+1)-E_-
>K_{\rm base}+\Delta t+\Delta/4\geq k_{\rm tgt}.
\]

Empty source inputs can be decided directly. A fixed yes instance consists of two distinct integer points and threshold zero. A fixed no instance is the five points $(i,i^2)$, $0\leq i<5$, with threshold zero: their diagonal crossing graph is a five-cycle and cannot be properly two-coloured. If the target convention permits only nonnegative thresholds and the computed $k_{\rm tgt}$ is negative, output this fixed no instance. A source yes instance would otherwise imply the impossible inequality $0\leq\OPT(\mathcal P)\leq k_{\rm tgt}<0$. Thresholds outside the source edge-count range may also be handled directly, though the displayed equivalence already covers them.

Proposition~\ref{prop:realization} and~\eqref{eq:threshold-bits} give a deterministic polynomial-time many-one reduction. Every target edge remains freely coloured; palette forcing and fan freezing are used only to compare costs. Scaling gives integer coordinates. Together with Proposition~\ref{prop:np}, this proves NP-completeness and NP-hardness of exact minimization.
\end{proof}

\begin{corollary}
If collinear input points are permitted while the objective still counts only proper crossings using strict orientation tests, the decision problem remains NP-complete.
\end{corollary}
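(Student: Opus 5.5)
Membership in NP needs no new argument. Proposition~\ref{prop:np} uses only orientation signs to verify a certificate, and these strict tests still decide proper crossings when collinear triples occur. So I would first check that this remains true in the relaxed problem. I would take the relaxed problem to be Definition~\ref{def:problem} with the no-three-collinear requirement dropped (points still pairwise distinct), and $M(\mathcal P,\chi)$ counting only pairs that satisfy both strict product inequalities. The certificate is again the $\binom N2$ colours, and the $O(N^4)$ determinant tests run in polynomial time exactly as before. Points sitting on segments, or overlapping collinear segments, never meet both strict inequalities, so they are never counted. No special handling is needed.

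For hardness, I would reduce from the general-position problem of Theorem~\ref{thm:main}, which is NP-hard, using the identity map on instances. Every general-position input is also a valid input of the relaxed problem. On such inputs, the proof of Proposition~\ref{prop:np} shows that a proper crossing is equivalent to the two strict orientation conditions. So the relaxed objective agrees with $M(\mathcal P,\chi)$ for every colouring, and $\OPT$ and the yes/no answer are unchanged. Since the reduction of Theorem~\ref{thm:quantitative-reduction} already outputs points in general position, the many-one reduction of Theorem~\ref{thm:main} composes directly.

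I need to settle one convention. In the restricted problem, invalid (collinear) inputs were declared no-instances, whereas in the relaxed problem they are genuine instances. This does not affect the reduction, because the reduction never produces such inputs. It only changes the verifier's behaviour on them, which I have already covered above. The integer-coordinate statement carries over as well: common scaling preserves all orientation signs, including zeros.

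I expect no real obstacle. The one point needing care is that the strict-orientation definition matches the geometric notion of proper crossing exactly on general-position inputs, so that the target optimum in~\eqref{eq:final-window} is unchanged. I would state this equivalence explicitly, citing the proper-crossing criterion in the proof of Proposition~\ref{prop:np}.
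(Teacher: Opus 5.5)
Your proposal is correct and follows the paper's own argument. The general-position hard instances from Theorem~\ref{thm:main} are also inputs of the larger class, and the objective is unchanged on them; the strict-orientation verifier of Proposition~\ref{prop:np} still decides the stipulated crossing convention on degenerate inputs, once it stops rejecting collinear triples.
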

\begin{proof}
The general-position hard instances are also inputs in the larger class. The same polynomial verifier excludes every zero orientation determinant and therefore handles the stipulated crossing convention on degenerate inputs.
\end{proof}

\bibliographystyle{plain}
\bibliography{references}
\appendix
\section{A direct-completion obstruction}\label{app:completion}

\begin{proposition}
Completing a segment family to the complete geometric graph on its
endpoints can produce an optimum whose restriction to the original
family is not optimal. The minimum contribution of the added edges can
also depend on the prescribed source colours.
\end{proposition}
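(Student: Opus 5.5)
The plan is to prove the proposition with an explicit small example, the six-point configuration mentioned in the introduction, checked by exhaustive enumeration. I would take six points in general position, say a convex hexagon perturbed so that the crossing structure is not fully symmetric, or a triangle with three interior points. The original segment family $\mathcal S_0$ would be a few segments whose crossing graph is small, for instance a path or a single crossing pair. The completion $E(\mathcal P)$ has $15$ edges, so $2^{15}$ colourings suffice to certify every claim. I would report the exact values and refer to the repository computation for the table.

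The first claim concerns the restriction of an optimum. Let $\OPT_0$ be the minimum monochromatic count within $\mathcal S_0$ alone; when $\mathcal S_0$ is bipartite as a crossing graph, $\OPT_0=0$. I would exhibit an optimum $\chi$ of the complete drawing, that is, a minimizer of $M(\mathcal P,\chi)$, whose restriction to $\mathcal S_0$ has a monochromatic crossing. The mechanism I would aim for is an added edge $f$ crossing both edges of a crossing pair $e,e'$ in $\mathcal S_0$, with $f$ also heavily constrained elsewhere. Colouring $e,e'$ differently then forces $f$ to agree with one of them, while colouring them alike, opposite to $f$, frees $f$. If the saving from freeing $f$ exceeds the single monochromatic crossing $ee'$, then every optimum, or at least some optimum, is suboptimal on $\mathcal S_0$. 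The point set would be chosen so that this trade occurs.

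The second claim concerns dependence on the source colours. For each prescribed colouring $x$ of $\mathcal S_0$, define $A(x)$ as the minimum over the colours of the added edges of all crossings involving at least one added edge. I would exhibit two colourings $x,x'$ of $\mathcal S_0$, ideally both optimal for $\mathcal S_0$ alone, with $A(x)\ne A(x')$. The same local gadget should work: the pair $e,e'$ under equal versus opposite colours. Using global complementation, it is enough to compare colourings that are not complements of one another.

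The main obstacle is finding a concrete six-point configuration and segment family for which the trade is strict, so that the added-edge saving exceeds the source penalty. I would first try a convex hexagon with $\mathcal S_0$ consisting of two crossing diagonals. If the hexagon's symmetry makes $A$ constant, I would then move one point inside, breaking the alternation structure, and search by computer until both inequalities hold. With six points, a direct exhaustive check then certifies the example.
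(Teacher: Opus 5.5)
\textbf{There is a genuine gap.} You never commit to a configuration or a source family, and you never verify a single value, so the existential statement is not proved. You call finding the example ``the main obstacle'' and defer it to a computer search whose outcome you do not report. For a statement of this kind, the witness and its certificate are the entire proof. Exhaustive enumeration of the $2^{15}$ colourings would be a legitimate certificate, but it is not carried out.

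Your guiding mechanism would also not lead you to the example. A single added edge $f$ crossing both $e$ and $e'$ can save at most one crossing, which only ties the penalty of the monochromatic pair $ee'$. In the natural convex example no added edge crosses both source diagonals. Apart from each other, $p_2p_4$ is crossed only by $p_0p_3,p_1p_3$, and $p_3p_5$ only by $p_0p_4,p_1p_4$.

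\textbf{The missing idea is a cheap certificate of optimality for the completion.} In convex position the optimum is known: $Z(6)=3$, the two-page crossing number of $K_6$. So any colouring with exactly three monochromatic crossings is optimal, and no enumeration is needed.

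Your first candidate already works. Take $p_i=(i,i^2)$ for $0\le i<6$, write $ij$ for $p_ip_j$, and take the source $\{24,35\}$. The paper also includes the boundary segment $01$, which crosses nothing.
\begin{itemize}
\item Colour $03,04,13,14$ blue and every other edge red. The monochromatic pairs are exactly $(02,15),(03,14),(24,35)$. This is an optimum whose source restriction costs one, while the source optimum is zero.
\item Colour $14,15,24,25$ blue and every other edge red. The monochromatic pairs are exactly $(02,13),(04,35),(14,25)$. This is again optimal, now with source cost zero.
\end{itemize}
Every colouring costs at least three. Hence the minimum added cost is $3-1=2$ for the first source colouring and $3-0=3$ for the second. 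These two source colourings are not complements, consistent with your remark about global complementation.

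Note that the first claim rests on a tie: optima of both kinds exist. That is why only the existence of some optimum with non-optimal restriction is asserted, not that every optimum has this property.
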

\begin{proof}
Let $p_i=(i,i^2)$, $0\le i<6$, and abbreviate $p_ip_j$ by $ij$.
The source family $\{01,24,35\}$ has only one crossing, between $24$ and
$35$, and therefore has optimum zero.

The complete drawing is convex, so its optimum is $Z(6)=3$ by
\'{A}brego et al.\ \cite[Theorem~7]{Abrego2013}.

Colour $03,04,13,14$ blue and every other edge red. The monochromatic
pairs are exactly $(02,15),(03,14),(24,35)$, attaining this lower bound.
Its source restriction has cost one and is not optimal. Conditional on
$24,35$ having the same colour, the minimum added cost is thus two.

Instead colour $14,15,24,25$ blue and every other edge red. The
monochromatic pairs are $(02,13),(04,35),(14,25)$, again attaining three.
Now $24,35$ differ, the source cost is zero, and the minimum added cost is
three. Hence the optimum added cost depends on the source colours.
\end{proof}

\end{document}